\newtheorem{theorem}{Theorem}[section]
\newtheorem{corollary}[theorem]{Corollary}
\newtheorem{lemma}[theorem]{Lemma}
\newcommand{\ioradI}{\textsc{io-radial1}\xspace}
\newcommand{\ioradII}{\textsc{io-radial2}\xspace}
\newcommand{\ioradIII}{\textsc{io-radial3}\xspace}
\newcommand{\cosweep}{\textsc{io-centrifugal}\xspace}
\newcommand{\visdac}{\textsc{vis-dac}\xspace}
\newcommand{\visiter}{\textsc{vis-iter}\xspace}
\newcommand{\rrr}{\texttt{R3}\xspace}
\newcommand{\rr}{\texttt{R2}\xspace}
\newcommand{\io}{I/O\xspace}
\newcommand{\ios}{I/Os\xspace}
\newcommand{\sort}{\mathsf{sort}}
\newcommand{\Sort}[1]{\ensuremath{\frac{#1}B \log_{M/B} \frac{#1}B}}
\newcommand{\scan}{\mathsf{scan}}
\newcommand{\enterevent}{\textsc{enter}\xspace}
\newcommand{\exitevent}{\textsc{exit}\xspace}
\newcommand{\centerevent}{\textsc{center}\xspace}
\newenvironment{denseitems}{\list{$\bullet$}{\itemsep0pt\parsep0pt}}{\endlist}
\title{A Comparison of I/O-Efficient Algorithms for Visibility
  Computation on Massive Grid Terrains}
\author{
%
%JEREMY FISHMAN\\Clover, San Francisco, USA.\\
%
HERMAN HAVERKORT\\Technische Universiteit Eindhoven, The Netherlands.\\
%
%\and
%
LAURA TOMA\\Bowdoin College, Maine, USA.
}
\begin{abstract}
 
  Given a terrain~$T$ and a \emph{viewpoint}~$v$, the \emph{visibility
    map} or \emph{viewshed} of~$v$ is the set of grid points of~$T$
  that are visible from~$v$.  To decide whether a point $p$ is visible
  one needs to interpolate the elevation of the terrain along the
  line-of-sight (LOS) $vp$.  Existing viewshed algorithms differ
  widely in what points they chose to interpolate, how many
  lines-of-sight they consider, and how they interpolate the
  terrain. These choices crucially affect the running time and
  accuracy of the algorithms.
  This paper describes 
%the design and engineering of 
  I/O-efficient algorithms for computing visibility maps on massive
  grid terrains in a couple of different models.

  First, we describe two algorithms that use the interpolation model
  of Van Kreveld.
%~\cite{kreveld:viewshed}
  These algorithms sweep the terrain by rotating a ray around the
  viewpoint while maintaining the terrain profile along the ray. On a
  terrain of $n$ grid points, these algorithms run in $O(n \log n)$
  time and $O(\sort(n))$ \ios in the I/O-model of Aggarwal and
  Vitter. % The difference between the two algorithms is in the
%   preprocessing before the sweep: the first algorithm sorts the grid
%   points into concentric bands around the viewpoint; the second
%   algorithm sorts the grid points into sectors around the viewpoint.
%
  Second, we describe an algorithm which runs in $O(n)$ time and
  $O(\scan(n))$ \ios, and is cache-oblivious. This algorithm sweeps
  the terrain centrifugally, growing a star-shaped region around the
  viewpoint while maintaining the approximate visible horizon of the
  terrain within the swept region.

  Our last two algorithms use linear interpolation and the model of
  Franklin's \rrr algorithm, which in the literature is referred to as
  the ``exact'' algorithm.  Our algorithms are based on computing and
  merging horizons, and we prove that the complexity of horizons on a
  grid of $n$ points with linear interpolation is $O(n)$, improving on
  the general $O(n \alpha(n))$ bound on triangulated terrains.

  We present an experimental analysis of our algorithms on NASA SRTM
  data. All our algorithms are scalable to volumes of data that are
  over 50 times larger than main memory.  Our main finding is that, in
  practice, horizons are significantly smaller than their theoretical
  worst case bound, which makes horizon-based approaches very
  fast. Our last two algorithms, which compute the most accurate
  viewshed, turn out to be very fast in practice, although their
  worst-case bound is inferior.
%    Our fastest algorithm computes the viewshed of a terrain
%   of 7.6 billion points (28.4~GiB) in 203 minutes on a machine with
%   0.5 GiB RAM and a laptop-speed (5400 rpm) hard drive.
  % Depending on the data set, the new algorithm is 20 to 50 times
  % faster than the algorithm from our
  % previous work~\cite{havertoma:visibility-journal}.
  % we dont want to say emphasize this anymore because the algorithm
  % by ferreira is now faster

 %  To measure the differences between viewsheds computed with various
%   algorithms we implement an error metric that averages differences
%   over a large number of viewsheds computed from a set of viewpoints
%   with topological significance, like valleys and ridges.  We use this
%   metric to compare our algorithms
% %, the algorithm of Ferreira \etal
% %~\cite{ferreira:vis}
%  and the viewshed module \texttt{r.los} in
%   the open source GIS GRASS.
\end{abstract}
\keywords{computational geometry, data structures and algorithms,
digital elevation models, I/O-efficiency, terrains, visibility}
\begin{document}

\begin{bottomstuff}
 
A preliminary version of this work appeared in \emph{Proceedings of
the 17th ACM SIGSPATIAL Symp. of Geographic Information Systems (GIS
2009)}. Best-paper award, and in  \emph{Proceedings of
the 21st ACM SIGSPATIAL Symp. of Geographic Information Systems (GIS
2013)}.
\newline
%Jeremy Fishman,  Clover, San Francisco, USA. \texttt{jeremy@clover.com}.
Herman Haverkort, Department of Mathematics and Computer Science, Technische Universiteit Eindhoven, P.O. Box 513, 5600 MB Eindhoven,  The Netherlands, \texttt{cs.herman@haverkort.net}.
\newline
Laura Toma, Department of Computer Science, Bowdoin College, 8650
College Station, Brunswick, ME 04011, USA, \texttt{ltoma@bowdoin.edu}.

\end{bottomstuff}

\maketitle

\section{Introduction}
\label{sec:introduction}

The computation of visibility is a fundamental problem on terrains and
is at the core of many applications such as planning the placement of
communication towers or watchtowers, planning of buildings such that
they do not spoil anybody's view, finding routes on which you can
travel while seeing a lot or without being seen, and computing solar
irradiation maps which can in turn be used in predicting vegetation
cover.  The basic problem is point-to-point visibility: Two points $a$
and $b$ on a terrain are visible to each other if the interior of
their \emph{line-of-sight} $ab$ (the line segment between $a$ and b)
lies entirely above the terrain. Based on this one can define the
viewshed: Given a terrain and an arbitrary (view)point $v$, not
necessarily on the terrain, the \emph{visibility map} or
\emph{viewshed} of $v$ is the set of all points in the terrain that
are visible from $v$; see Figure~\ref{fig:viewshed-ex}.  A variety of
problems pertaining to visibility have been researched in
computational geometry and computer graphics, as well as in geographic
information science and geospatial engineering.

The key in defining and computing visibility is choosing a terrain
model and an interpolation method. The most common terrain models are
the grid and the TIN (triangular irregular network). A grid terrain is
essentially a matrix of elevation values, representing elevations
sampled from the terrain with a uniform grid; the x,y coordinates of
the samples are not stored in a grid terrain, they are considered
implicit w.r.t. to the corner of the grid. A TIN terrain consists of
an irregular sample of points (x,y and elevation values), and a
triangulation of these points is provided. Grid terrains are the most
widely used in GIS because of their simplicity. Our algorithms in this
paper discuss the computation of visibility maps on \emph{grid}
terrains.

To decide whether a point $p$ is visible on a given terrain model, one
needs to interpolate the elevation along the line-of-sight $pv$
between the viewpoint $v$ and $p$ (more precisely, along the
projection of the line-of-sight on the horizontal plane) and check
whether the interpolated elevations are below the line of
sight. Various algorithms differ in what and how many points they
select to interpolate along the line-of-sight, and in the
interpolation method used. These choices crucially affect the
efficiency and accuracy of the algorithms.

In order to be useful in practice, viewshed algorithms need to be fast
and scalable to very large terrains. The last decade witnessed an
explosion in the availability of terrain data at better and better
resolution.  In 2002, for example, NASA's Shuttle Radar Topography
Mission (SRTM) acquired 30~m-resolution terrain data for the entire
USA, in total approximately 10 terabytes of data.  With more recent
technology it is possible to acquire data at sub-meter resolution.
This brings tremendous increases in the size of the datasets that need
to be processed: Washington state at 1~m resolution, using 4 bytes for
the elevation of each sample, would total 689 GiB of data; Ireland
would be 262 GiB---only counting elevation samples on land. Data at
this fine resolution has started to become  available.

\begin{figure}[t]
\centering
\includegraphics[width=6cm]{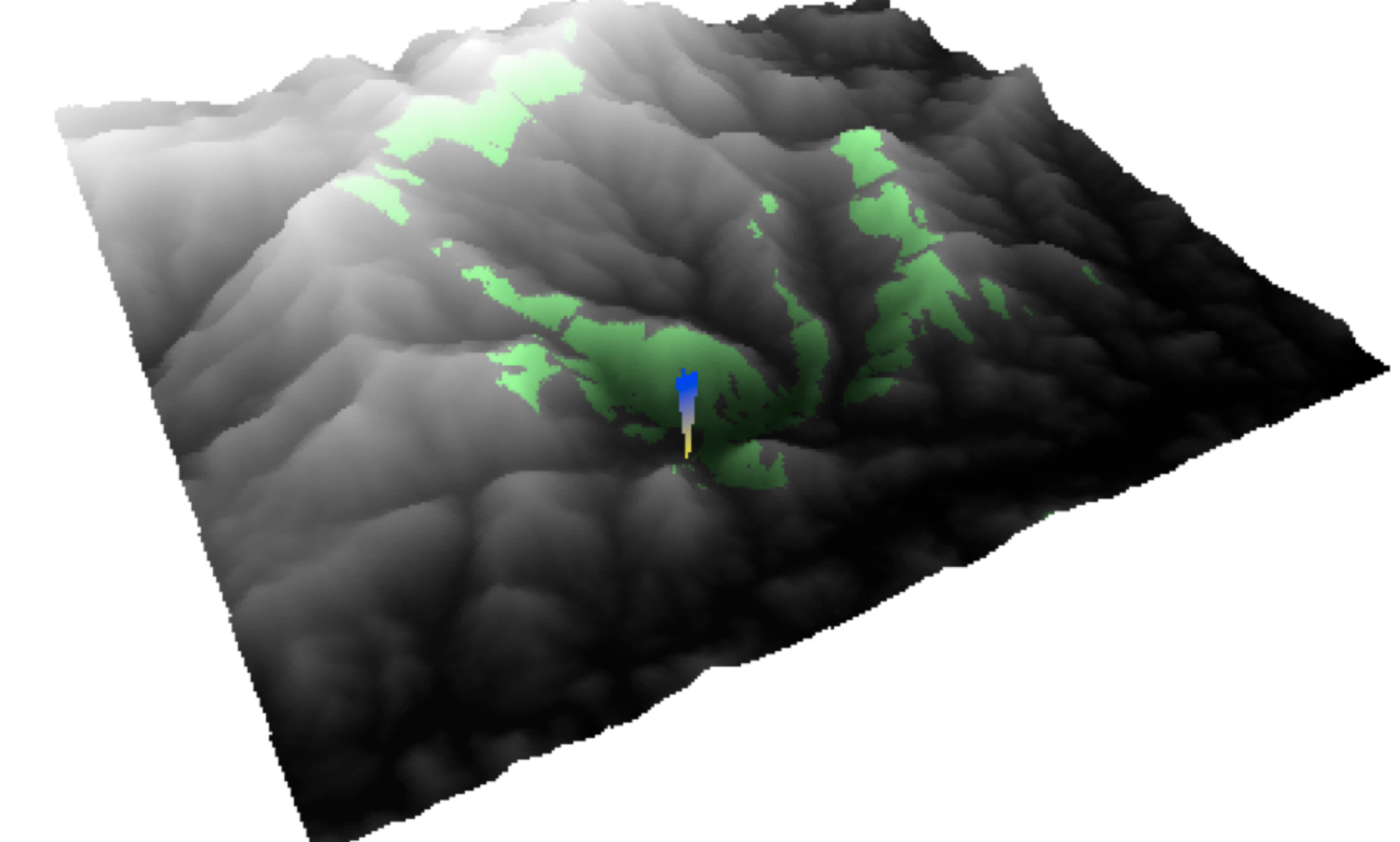}
\includegraphics[width=5cm]{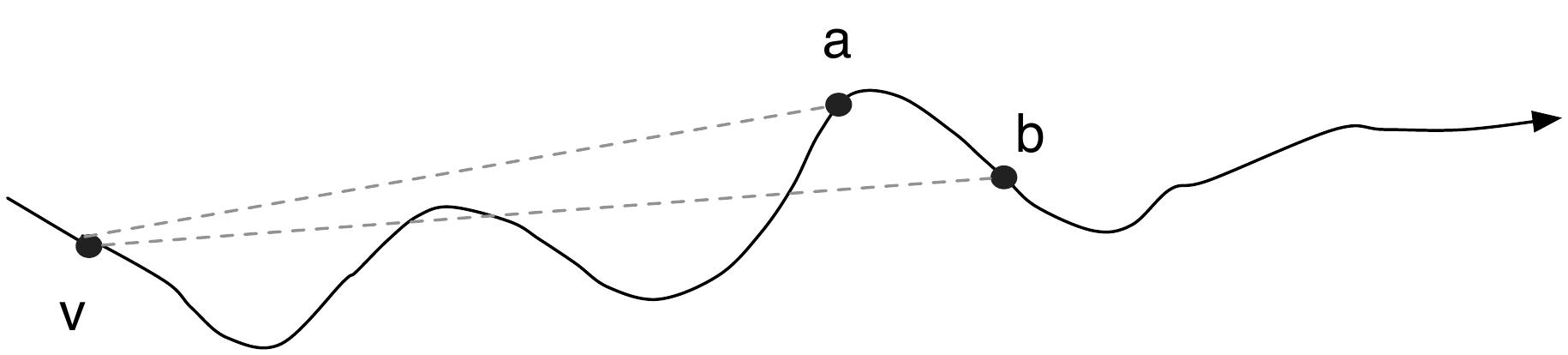}
\caption{(a) The viewshed of a point on a grid terrain is shown in
  green. The viewpoint is marked in blue. (b) Two points on a terrain
  are visible to each other if the interior of their line-of-sight
  lies entirely above the terrain. Here $a$ is visible from $v$, but
  $b$ is not.}
\label{fig:viewshed-ex}
\end{figure}

% \begin{figure}[t]
% \centering
% \includegraphics[width=6cm]{plots/los.pdf}
% \caption{Two points on a terrain are visible to each other if the interior of
% their line-of-sight lies entirely above the terrain. Here $a$ is
% visible from $v$, but  $b$ is not. }
% \label{fig:definition}
% \end{figure}

\subsection{I/O-efficiency}
Working with large terrains require efficient algorithms that scale
well and are designed to minimize ``I/O'': the swapping of data
between main memory and disk.  We assess the efficiency of algorithms
in this paper not only by studying the number of computational steps
they need and by measuring their running times in practical
experiments, but also by studying how the number of I/O-operations
grows with the input size. To this end we use the standard model
defined by Aggarwal and Vitter~\cite{aggarwal:input}. In this model, a
computer has a memory of size $M$ and a disk of unbounded size. The
disk is divided into blocks of size $B$. Data is transferred between
memory and disk by transferring complete blocks: transferring one
block is called an ``\io''. Algorithms can only operate on data that
is currently in main memory; to access the data in any block that is
not in main memory, it first has to be copied from disk.  If data in
the block is modified, it has to be copied back to disk later, at the
latest when it is evicted from memory to make room for another block.
The \io-efficiency of an algorithm can be assessed by analysing the
number of \ios it needs as a function of the input size $n$, the
memory size $M$, and the block size $B$. The fundamental building
blocks and bounds in the I/O-model are sorting and scanning: scanning
$n$ consecutive records from disk takes $\scan(n) = \Theta(n/B)$ \ios;
sorting takes $\sort(n) = \Theta(\Sort{n})$ \ios in the worst
case~\cite{aggarwal:input}. It is sometimes assumed that $M =
\Omega(B^2)$.

We distinguish \emph{cache-aware} algorithms and
\emph{cache-oblivious} I/O-efficient algorithms: Cache-aware
algorithms may use knowledge of $M$ and $B$, (and to some extent even
control $B$) and they may use it to control which blocks are kept in
memory and which blocks are evicted.  Cache-oblivious algorithms, as
defined by Frigo et al.~\cite{prokop:cob}, do not know $M$ and $B$
and cannot control which blocks are kept in memory: the caching policy
is left to the hardware and the operating system.  Nevertheless,
cache-oblivious algorithms can often be designed and proven to be
\io-efficient~\cite{prokop:cob}. The idea is to design the algorithm's
pattern of access to locations in files and temporary data structures
such that effective caching is achieved by any reasonable
general-purpose caching policy (such as least-recently-used
replacement) \emph{for any values} of $M$ and $B$.  As a result, any
bounds that can be proven on the \io-efficiency of a cache-oblivious
algorithm hold for \emph{any} values of $M$ and $B$
simultaneously. Thus they do not only apply to the transfer of data
between disk and main memory, but also to the transfer of data between
main memory and the various levels of smaller caches.  However, in
practice, cache-oblivious algorithms cannot always match the
performance of cache-aware algorithms that are tuned to specific
values of $M$~and~$B$~\cite{brodal:cobsorting-jea}.

\subsection{Problem definition}
A \emph{terrain} $T$ is a surface in three dimensions, such that any
vertical line intersects $T$ in at most one point. The \emph{domain}
$D$ of $T$ is the projection of $T$ on a horizontal plane. The
\emph{elevation angle} of any point $q = (q_x,q_y,q_z)$ with respect
to a viewpoint~$v = (v_x,v_y,v_z)$ is defined as:
$$\textrm{ElevAngle}(q) = \arctan\frac{q_z - v_z}{\textrm{Dist}(q)},$$
where $\textrm{Dist}(q) := |(q_x,q_y)-(v_x,v_y)|$. 

A point $u = (u_x,u_y,u_z)$ is visible from $v$ if and only if the
elevation angle of $u$ is higher than the elevation angle of any point
of $T$ whose projection on the plane lies on the line segment from
$(u_x,u_y)$ to $(v_x,v_y)$.  We define the elevation angle of any
point $(q_x,q_y)$ of $D$ as the elevation angle of the point $q =
(q_x,q_y,q_z)$ where the vertical line through $(q_x,q_y)$ intersects
$T$. %  With this notation, to determine whether point $u$ is visible we
% need to determine the elevation angle of the points along the line
% segment from $(u_x,u_y)$ to $(v_x,v_y)$.

In this paper we consider terrains that are represented by a set of
$n$ points whose projections on~$D$ form a regular rectangular grid.
%with inter-point distance~1. 
To decide whether a point $u$ is visible from a point $v$, we need to
\emph{interpolate} the elevation angle of points of $T$ whose projection
on the plane lie along the line segment from $(u_x,u_y)$ to
$(v_x,v_y)$.

We want to compute the following: given any terrain $T$ and any
viewpoint $v = (v_x,v_y,v_z)$, find which grid points 
of the terrain are visible to $v$ and which are not.

We assume the terrain is given as a matrix $Z$, stored row by row,
where $Z_{ij}$ is the elevation of the point in row $i$ and column
$j$. The output visibility map is a matrix $V$, stored row by row, in
which $V_{ij}$ is $1$ if the point in row $i$ and column $j$ is
visible, and $0$ otherwise.

For ease of presentation, throughout the rest of the paper we assume
that the grid is square and has size $\sqrt n \cdot \sqrt n$ ; of
course the actual implementations of our algorithms can handle
rectangular grids as well.

\subsection{Related work}  

% Some of the early
% algorithms for computing viewsheds on grid terrains were described by
% Franklin and Ray~\cite{franklin:sdh94}. The method called R3
% calculates the intersections between the horizontal projection of the
% line of sight and the grid lines, and computes the elevation at these
% intersection points by linear interpolation.  Since a line of sight
% intersects $O(\sqrt n)$ grid lines, determining the visibility of a
% point takes $O(\sqrt n)$ time, and determining the visibility of all
% points in the grid (i.e. computing the viewshed) takes $O(n \sqrt n)$
% time.  

The standard method for computing viewsheds on grid terrains is the
algorithm \rrr by Franklin and Ray~\cite{franklin:sdh94}. \rrr
determines the visibility of each point in the grid as follows: it
computes the intersections between the horizontal projection of the
line-of-sight and the horizontal and vertical grid lines, and
computes the elevation of the terrain at these intersection points by
linear interpolation.  Since a line of sight intersects $O(\sqrt n)$
grid lines, determining the visibility of a point takes $O(\sqrt n)$
time.  This is considered to be the standard model and \rrr is
considered to produce the ``exact''
viewshed~\cite{izraelevitz:vis}. However, as described by
Franklin and Ray, \rrr runs in $O(n \sqrt n)$~time, which is too slow
in practice, especially for multiple viewshed computations.

A variety of viewshed algorithms have been proposed that optimize \rrr
while approximating in some way the resulting viewshed: Some
algorithms consider only a subset of the $O(n)$ lines-of-sight; others
interpolate the line-of-sight only at a subset of the $O(\sqrt n)$
intersection points with the grid lines; yet others have some other
way of determining in $O(1)$ time whether a point in the grid is
visible.  The optimized viewshed algorithms run in $o(n \sqrt n)$
time, most often $O(n)$.  Examples are \textsc{XDraw} by Franklin and
Ray~\cite{franklin:sdh94}; \textsc{Backtrack} by
Izraelevitz~\cite{izraelevitz:vis}; \rr by Franklin and
Ray~\cite{franklin:sdh94}; and van Kreveld's radial sweep
algorithm~\cite{kreveld:viewshed}---below we describe briefly the 
results which are relevant to this paper.

The algorithm named \rr, proposed by Franklin and
Ray~\cite{franklin:sdh94}, is an optimization of \rrr that runs in
$O(n)$ time.  The idea of \rr is to examine the lines-of-sight
\emph{only} to the $O(\sqrt n)$ grid points on the boundary of the
grid; a grid point that is not on the boundary is considered to be
visible if the nearest point of intersection between a grid line and
one of the examined lines-of-sight is determined to be visible.
Overall \rr is fast and, according to its authors, produces a good
approximation of \rrr that outweighs its loss in
accuracy~\cite{franklin:sdh94}.

The other algorithm, \emph{XDraw}, computes the visibility of the grid
points incrementally in concentric layers around the viewpoint,
starting at the viewpoint and working its way outwards. For a grid
point $v$ in layer $i$, the algorithm computes whether $v$ is visible,
and what is the maximum height above the horizon along the line of
sight to $v$.  To do so, it determines which are the two grid points
$q$ and $r$ in layer $i - 1$ that are nearest to $\overline{pv}$, and
then it estimates the maximum height above the horizon along
$\overline{pv}$ by interpolating between the lines of sight to $q$ and
$r$.
Thus, the visibility of each point is determined in constant time
per point.  
XDraw is faster than R3 and R2, due to the simplicity of the
calculations, but it is also the least accurate~\cite{franklin:sdh94}.
Izraelevitz~\cite{izraelevitz:vis} presented a generalization of
\emph{XDraw} that allows to user to set a parameter $k$, which is the
number of previous layers that are taken into account when computing
the visibility of a grid point. 
%This paremeter allows to
%trade of speed for accuracy: For $k=1$ Izraelevitz's algorithm becomes
%XDraw, and for $k= \sqrt n$ it becomes R3.

\begin{figure}[t]
  \centering{
    \includegraphics[width=4cm]{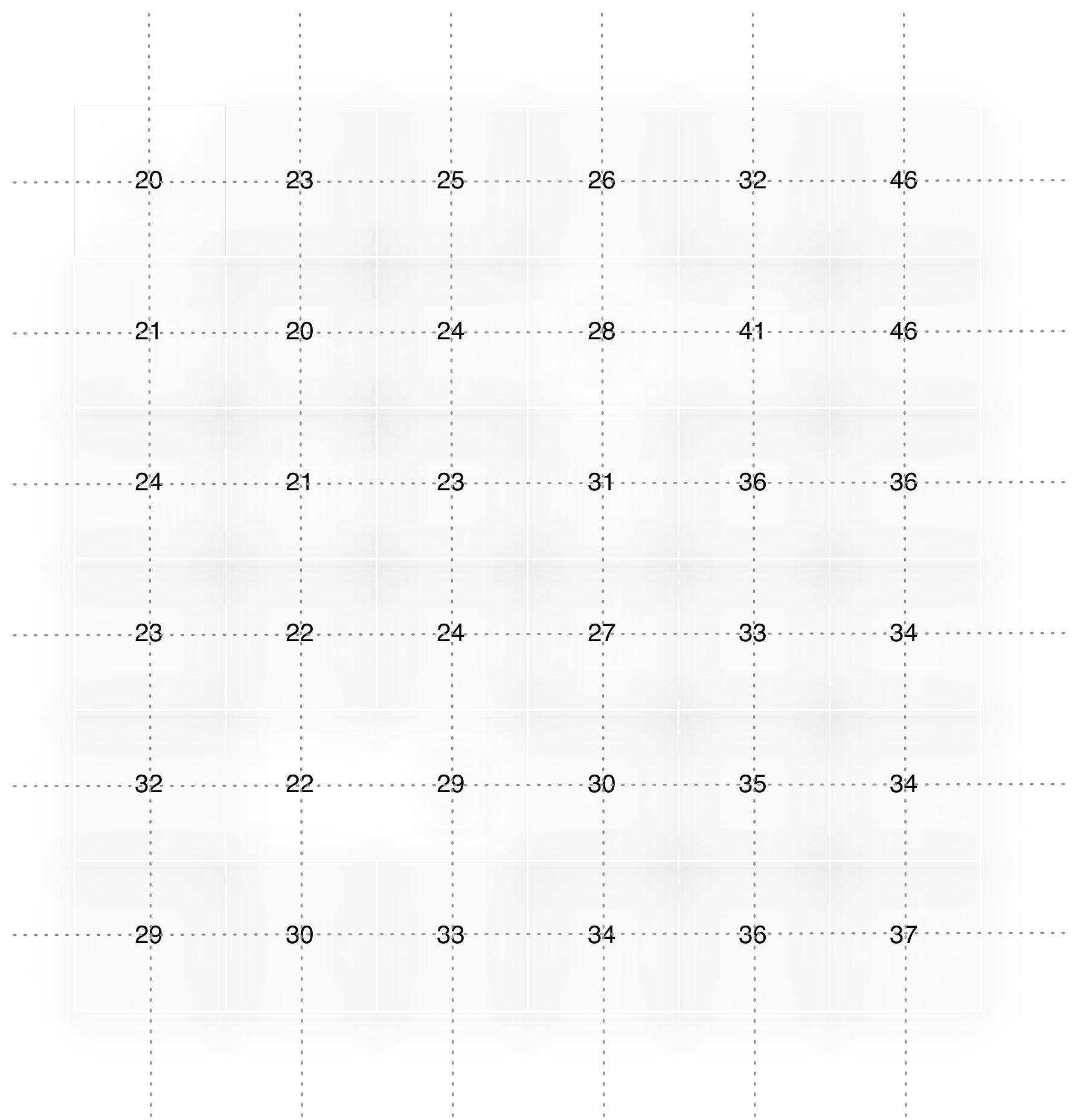}
    \includegraphics[width=4cm]{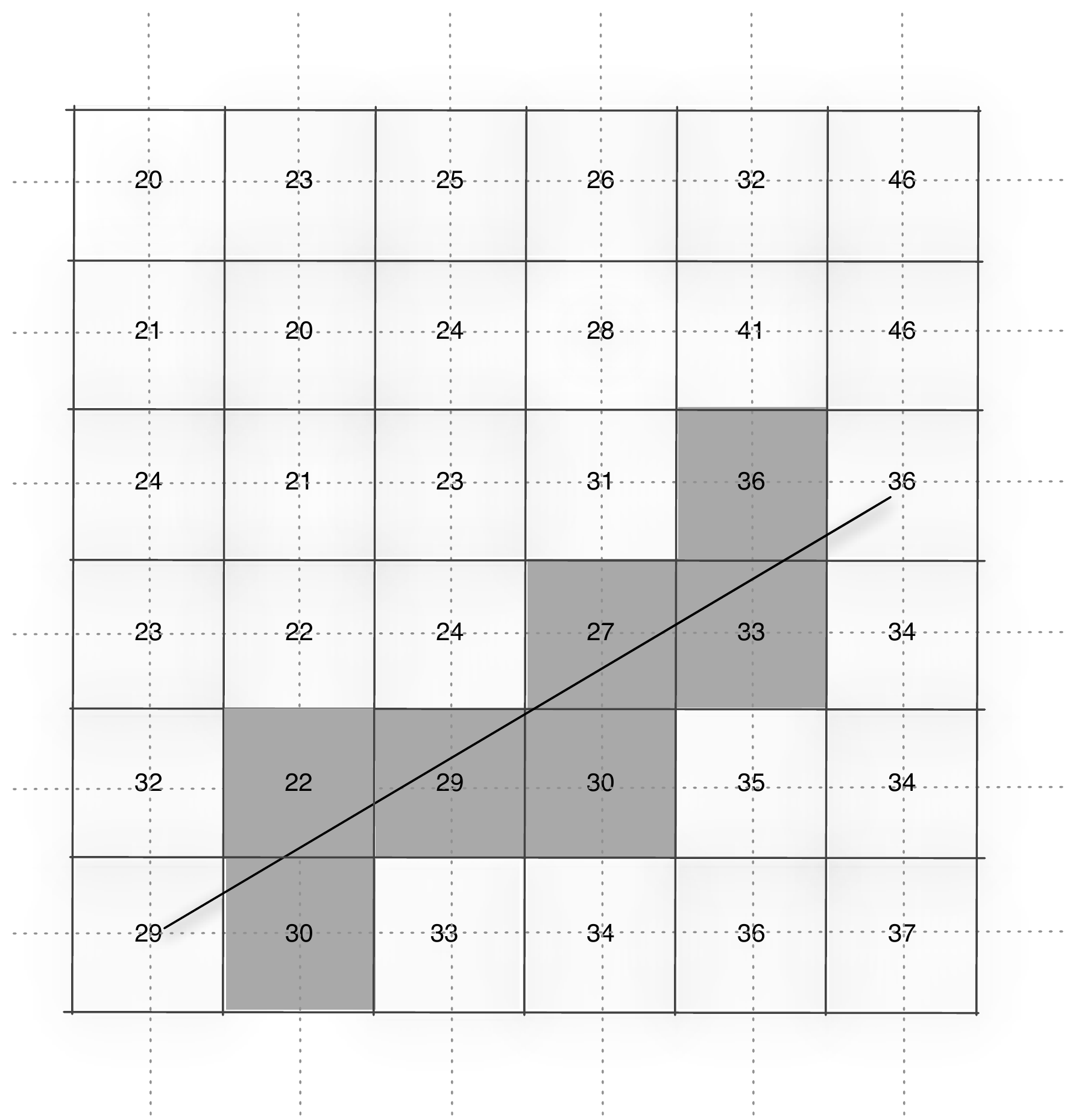}
     \caption{Van Kreveld's model:  (a) Each grid point represents a square
       cell centered at that point.  The elevation angle of all points
       inside a cell is the same as the elevation angle of the center
       of the cell (the grid point).  (b) To determine if two grid points
       are visible,  we need  to consider all cells that intersect the
       line segment beween the points. }
    \label{fig:kreveld-model}
  }
\end{figure}

Van Kreveld described a different approach for computing viewsheds on
grids that could also be seen as an optimization of
\rrr~\cite{kreveld:viewshed}. In his model the terrain is seen as a
tessellation of square cells, where each cell is centered around a
grid point and has the same view angle as the grid point throughout
the cell, that is, the cell appears as a horizontal line segment to
the viewer (Figure~\ref{fig:kreveld-model}). This property allows for
the viewshed to be computed in a radial sweep of the terrain in $O(n
\lg n)$ time. Because cells have constant view angle, they can be
stored in an efficient data structure as the ray rotates around the
viewpoint.  This data structure supports insertions of cells,
deletions of cells, and visibility queries for a point along the ray
in $O(\lg n)$ time per operation, and thus the whole viewshed can be
computed while rotating the ray in $O(n \lg n)$ time.

%%io-efficient algorithms 
The viewshed algorithms mentioned so far assume that the computation
fits in memory and are not IO-efficient.  I/O-efficient viewshed
algorithms have been proposed by Magalh\~aes et
al.~\cite{magalhaes:geo}, Ferreira et al.~\cite{ferreira:vis} and in
our previous
work~\cite{havertoma:visibility-journal,ht:vis2,ht:accuracy}; we
discuss these results below.

Haverkort, Toma and Zhuang~\cite{havertoma:visibility-journal}
presented the first IO-efficient viewshed algorithm using Van
Kreveld's model.  Using a technique called \emph{distribution
  sweeping} they turned Van Kreveld's algorithm into an algorithm
running in $O(n \log n)$ time and $O(\sort(n))$~\ios,
cache-obliviously. The authors also presented practical results
showing that their algorithm scales well to large data and outperforms
Van Kreveld's algorithm running in (virtual) memory.
%With a cache-aware version of this algorithm they computed a
%visibility map of a terrain of 1.07 billion points (3.97 GiB, using 4
%bytes per elevation value) in 234 minutes (that is 13 $\mu$s/point)
%using at most 0.5 GiB of memory and a 7200 RPM hard drive.
%; a visibility map of a terrain of 0.28 billion points (1.04 GiB,
%using 4 bytes per elevation value) was computed in 45 minutes (9.6
%$\mu$s/point).

%%lt: do we really want  all these numbers ?
Subsequently, Magalh\~aes et al.~\cite{magalhaes:geo} and Ferreira et
al.~\cite{ferreira:vis} described I/O-efficient versions of Franklin's
R2 algorithm. The first algorithm runs in $O(n \log n)$ time and
$O(\sort(n))$~\ios~\cite{magalhaes:geo}.  As in R2, the idea is to
evaluate lines-of-sight only to the points on the perimeter of the
grid. To do this \io-efficiently, the algorithm first copies all grid
points from the input file row by row, annotating each point $p$ with
the endpoints of the lines of sight whose evaluation requires the
elevation of $p$. Next, all annotated points are sorted by line of
sight. The algorithm then evaluates each line of sight, determining
for each point on a line of sight whether it is visible or not, and
writes the results to a file, in order of computation. As a result,
the file contains the visibility map, ordered by line of sight. The
last step is to sort this file into row-by-row order.
% The algorithm is implemented using STXXL~\cite{stxxl} 
%The algorithm is fast in practice.  For example, using at most 1.0 GiB
%of memory and a 7200 RPM hard drive, viewsheds on grids of 1.6 billion
%points (6 GiB, using 4 bytes per elevation value) are computed within
%two hours (4.4$\mu$s per point).
% a 900 million points (1.68 GB, using 2 bytes per elevation value) is
% computed in 28 minutes, [COULD NOT FIND THESE RESULTS IN THEIR
% PAPER, I GUESS THE JOURNAL VERSION, WHICH WE CITE, HAS UPDATED
% RESULTS. THE FOLLOWING RESULTS ARE FROM THE JOURNAL PAPER:] They
% compute a visibility map of a 0.90 billion points' region (1.68 GiB,
% using 2 bytes per elevation value) within a larger file (5.67 GiB)
% in 28 minutes, using at most 1.0 GiB of memory and a 7200 RPM hard
% drive. This amounts to 1.9 $\mu$s per point in the visibility map

A further improved version of R2 was presented in%Ferreira etal.
~\cite{ferreira:vis}.  Here the idea is to partition the grid in
blocks and run the (in-memory) version of the R2 algorithm modified so
that it bypasses the VMM (virtual memory management) system, and
instead it maintains a data structure of ``active'' blocks that
constitute the block footprint of the algorithm. Whenever the
line-of-sight intersects a block, that block is brought in main
memory. Blocks are evicted using LRU policy. Their algorithm,
\textsc{TiledVS}, consists of three passes: convert the grid to Morton
order, compute visibility using the R2 algorithm, and convert the
output grid from Morton order to row-major order. % The
% algorithm is IO-efficient under the assumption that the block
% footprint of the R2 algorithm fits in memory. 
In practice, this
algorithm is much faster than the one 
%by Magalh\~aes etal.
in~\cite{magalhaes:geo}, achieving on the order of 5,000 seconds on
SRTM dataset of 7.6 billion points (\texttt{SRTM1.region06}, 28.4GiB
using 4 bytes per elevation value) that is, .7$\mu$s per point.
% using a similar platform as ours and
%additional optimization like data compression.
%(however it is not clear from the paper if this time includes all
%three passes). 
Another advantage of \textsc{TiledVS} is that its first step
can be viewed as a preprocessing step common to all viewpoints and
thus \textsc{TiledVS} computes the viewshed in only two passes over
the grid.

%%now we include these results, so we can;t cite them as previous work 
% Recently, Haverkort et al.~\cite{ht:accuracy} described an
% IO-efficient version of the original R3 algorithm by Franklin and Ray.
% The algorithm is based on computing and merging horizons in a
% concentric fashion and under certain assumptions on the size of
% horizons, the overall IO-complexity of the algorithm becomes
% $O(\scan(n))$ \ios. In practice the algorithm is able to compute
% viewsheds on a SRTM dataset of 7.6 billion points
% (\texttt{SRTM1.region06}, 28.4GiB) in approx. 12,000 seconds; this is
% slower than \textsc{TiledVS}~\cite{ferreira:vis}, but, on the other hand, the
% algorithm underlying \textsc{TiledVS} is an approximation of R3.

\vspace{.5cm} The IO-efficient algorithms discussed above differ in
how many points they chose to interpolate, how many lines-of-sight
they consider, and how they interpolate the terrain. These choices
affect both the running time and output of the algorithms.
All algorithms described can be considered as approximations of \rrr
and make some assumptions that they exploit to improve efficiency.
%Generalizing these algorithms to use the \rrr model is difficult:
\textsc{TiledVS} derives its efficency in part from considering only
$O(\sqrt n)$ LOS's instead of $O(n)$. Van Kreveld's approach exploits
crucially that cells have constant elevation angle across their
azimuth range. Generalizing to linear interpolation is difficult: it
would mean that cells have variable elevation angle across their
azimuth tange, and one would need a kinetic data structure as active
structure to store elevation angles that change in time.

To evaluate viewshed algorithms it is important to consider both
efficiency of running time and accuracy of the computed viewshed. While efficiency
is easy to compare, comparing accuracy is much harder. The
straightforward way to assess accuracy is to compare the computed
viewshed with ground truth data. Ideally one would consider a large
sample of viewpoints, compute the viewshed from each one in turn,
compare it with the \emph{real} viewshed at that point, and aggregate
the differences. Unfortunately, ground truth viewsheds are hard, if
not impossible, to obtain. % To our knowledge, an
% investigation and comparison of the accuracy of the various existing
% viewshed algorithms is missing.

% In a
% recent work, Haverkort et al.~\cite{ht:accuracy} evaluated viewshed
% algorithms based on the viewshed maps they produce.  To measure the
% differences between viewsheds computed with various algorithms they
% implemented an error metric that averages differences over a large
% number of viewsheds computed from a set of viewpoints with topological
% significance, like valleys and ridges. Using this metric they compared
% the viewsheds produced by: the module \texttt{r.los} in the open
% source GIS GRASS; the R3 algorithm; Van Kreveld's algorithm; and the
% R2 algorithm.
% % used by Magalh\~aes et al. and and Ferreira et al.
% %%\cite{magalhaes:geo,ferreira:vis}. 
% Their results showed that 
% %both accuracy and efficiency have to be
% %considered when assessing viewshed algorithms. 
% a comparison based solely on speed is not meaningful as the algorithms
% compute slightly different maps.  The absence of ground truth data for
% viewsheds makes it difficult to clearly set a winner.  Which algorithm
% is best depends on the application.

%\vspace{.5cm}
The algorithms mentioned above assume grid terrains.  For
an overview of internal-memory algorithms for visibility computations
on the second most common format of terrain elevation models, the
\emph{triangular irregular network} or TIN, we refer to
% Cole and Sharir, and De Floriani and Magillo
~\cite{colesharir:visib,floriani:visdtm,floriani:intervisibility}.
Visibility algorithms on TINs use the concept of a \emph{horizon} or
\emph{silhouette} $\sigma$ of the terrain, which is the upper rim of
the terrain, as it appears to a viewer at $v$. More formally,
$\sigma_T$ is a function from azimuth angles (compass direction) to
elevation angles, such that $\sigma_T(\alpha)$ is the maximum
elevation angle of any point on the intersection of $T$ with the ray
that extends from $v$ in direction $\alpha$.
%[FIGURE?].
On a triangulated terrain, the horizon is equivalent to the upper
envelope of the triangle edges of $T$, projected on an infinite vertical cylinder centered on the viewpoint; it has complexity $O(n \cdot
\alpha(n))$, where $\alpha$ is the inverse Ackermann
function~\cite{colesharir:visib}.  Horizons have been used to
solve various visibility-related problems on triangulated polyhedral
terrains. For example, the visibility of all the vertices in a TIN can be
computed in $O(n \alpha(n) \lg n)$ time~\cite{colesharir:visib}. A central
idea in these solutions is that horizons can be merged in time that is linear in their
size, and thus allow for efficient divide-and-conquer algorithms.
% order the triangles by distance from the viewpoint; compute
% recursively the horizon of the half of the terrain closest to the
% viewpoint and the horizon of the remaining half, and merge the two.
%[SHORTEN]

\subsection{Our contributions} 

This paper describes IO-efficient algorithms for computing viewsheds
on massive grid terrains in a couple of different models.
 Our first two algorithms work in Van Kreveld's model, and sweep the
 terrain radially by rotating a ray around the viewpoint while
 maintaining the terrain profile along the ray. The difference between
 the two new algorithms is in the preprocessing before the sweep: the
 first algorithm, which we describe in
 Section~\ref{sec:radial-layers}, sorts the grid points in concentric
 bands around the viewpoint; the second algorithm, which we describe
 in Section~\ref{sec:radial-sectors}, sorts the grid points into
 sectors around the viewpoint. Both algorithms run in $O(n \log n)$
 time and $O(\sort(n))$ \ios.

The third algorithm, \cosweep, which we describe in
Section~\ref{sec:conc-sweep}, uses a complementary approach and sweeps
the terrain centrifugally.  
% The horizon of a grid, as defined above, is a step function of size
% $O(n)$. 
The algorithm is similar to \emph{XDraw}: it grows a region around the
viewpoint, while maintaining the horizon of the terrain within the
region seen so far.  To maintain the horizon efficiently, we represent
it by a grid model itself: we maintain the maximum elevation angle
(the ``height'') of the horizon for a discrete set of regularly spaced
azimuth angle intervals. The horizontal resolution of the horizon
model is chosen to be similar to the horizontal resolution of the
original terrain model, so that we maintain elevation angles for
$\Theta(\sqrt n)$ azimuth angle intervals.  This allows a significant
speed-up as compared to algorithms that process events at $\Theta(n)$
different azimuth angles, or work with horizons of linear complexity.
Also, we note that this gives the algorithm the potential for higher
accuracy than \emph{XDraw}, which represents the horizon up to a given
layer by only as many grid points as there are in that layer---which
can be quite inaccurate close to the viewpoint. Another difference
with \emph{XDraw} is that our algorithm does not proceed layer by
layer, but instead grows the region in a recursive, more \io-efficient
way; this results in a significant speed-up in practice. The
centrifugal sweep algorithm runs in $O(n)$ time and $O(\scan(n))$ \ios
cache-obliviously, and is our fastest algorithm.

Our last two algorithms constitute an improved, IO-efficient version of
Franklin's \rrr algorithm. We distinguish between two models
(Figure~\ref{fig:model}), which we describe in
Section~\ref{sec:algorithms}: In the \emph{gridlines} model we view
the points in the input grid as connected by horizontal and
vertical lines, and visibility is determined by evaluating the
intersections of the line-of-sight with the grid lines using linear
interpolation; this is the model underlying \rrr.

We also consider a slightly different model, the \emph{layers} model,
in which we view the points in the input grid to be connected in
concentric layers around the viewpoint and visibility is determined by
evaluating the intersections of the line-of-sight with these layers
using linear interpolation. The layers model considers only a subset
of the intersections considered by the gridlines model and therefore
the viewshed generated will be larger (more optimistic) than the one
generated with the gridlines model.  Preliminary results
(\cite{ht:vis2}) show that these differences are
practically insignificant. The layers model is faster in practice,
while having practically the same accuracy as the gridlines model.

We describe our last two algorithms, \visiter and \visdac, in
Section~\ref{sec:algorithms}. They are based on
computing and merging horizons in an iterative or divide-and-conquer
approach, respectively. Horizon-based algorithms for visibility
problems have been described by de Floriani and
Magillo~\cite{floriani:visdtm}.  On a triangulated terrain $T$, the
horizon is equivalent to the upper envelope of the triangle edges of
$T$ as projected on a view screen, and has complexity $H(n) = O( n
\cdot \alpha(n))$, where $n$ is the number of vertices in the TIN and
$\alpha()$ is the inverse of the Ackerman
function~\cite{cole:visibility}.  In Section ~\ref{sec:visiter} we
show that we can prove a better bound for our setting: that is, we
prove that the upper envelope of a set $S$ of line segments in the
plane such that the widths of the segments do not differ in length by
more than a factor $d$ has complexity $O(dn)$. From here we show that
the horizon on a grid of $n$ points with linear interpolation has
complexity $O(n)$ in the worst case.

In Section~\ref{sec:results} we describe an experimental analysis and
comparison of our algorithms on datasets up to 28 GB.
All algorithms are scalable to volumes of data that are more than 50
times larger than the main memory.  Our main finding is that, in
practice, horizons are significantly smaller than their theoretical
upper bound, which makes horizon-based algorithms unexpectedly fast.
Our last two algorithms, which compute the most accurate viewshed, turn out to be very fast in practice, although their worst-case bound is inferior.
% In terms of running time, \visiter is slower than
% \textsc{TiledVS}~\cite{ferreira:vis}, while it computes a more
% accurate viewshed.  Our algorithms use different models and we believe
% they are of independent interest.
%
%
%
% Finally, we describe our metric for estimating accuracy and measuring
% the differences between various viewshed algorithms in
% Section~\ref{sec:errormetric}.  The error metric averages viewshed
% differences over a large number of viewsheds computed from a set of
% viewpoints with topological significance, like valleys and ridges.
% Each pair of viewsheds is compared by counting the false visible and
% false invisible points, in a manner similar to Ben-Moshe
% \etal~\cite{moshe:visibilitypreserving,moshe:approxvis}. Using this metric we compare our new
% algorithms using the gridlines and layers models with our previous
% algorithm based on Van Kreveld's
% model~\cite{kreveld:viewshed,ht:vis2}, with the algorithm \rr used
% in~\cite{ferreira:vis}, and with the module \rlos that computes
% viewsheds in GRASS.  The findings are outlined
% Section~\ref{sec:exp-accuracy}, and we conclude that accuracy
% considerations play an important role in assessing viewshed
% algorithms.
We conclude in Section~\ref{sec:discussion}.
%and discuss how our algorithms could be adapted to alternative models of how
%to interpret grid points as obstacles to visibility.

\section{I/O-Efficient radial sweep}\label{sec:radial-layers}

This section describes our first approach to computing a viewshed. It
is loosely based on Van Kreveld's radial sweep algorithm, which we
present below. 

{\bf The model.} We consider that the terrain is represented by a set
of $n$ points whose projections on~$D$ form a regular grid with
inter-point distance~1. Furthermore, we assume that each grid point $q
= (q_x,q_y,q_z)$ represents a square ``cell'' $D(q)$ on $D$ of size
$1\times 1$, centered on $(q_x,q_y)$. For any given viewpoint $v$, we
treat the terrain above $D(q)$ as if each point of $D(q)$ has
elevation angle $\textrm{ElevAngle}(q)$. This is the interpolation
method used by Van Kreveld~\cite{kreveld:viewshed}.
Determining whether a point $p$ is visible from viewpoint $v$ comes
down to deciding whether there is any other grid point $q$ such that
the square cell $D(q)$ intersects the line segment from $(v_x,v_y)$ to
$(p_x,p_y)$ and $\textrm{ElevAngle}(q) \geq \textrm{ElevAngle}(p)$;
see Figure~\ref{fig:kreveld-model}.

\subsection{Van Kreveld's radial sweep algorithm}
The basic idea of Van Kreveld's algorithm~\cite{kreveld:viewshed} is
to rotate a half-line (ray) around the viewpoint~$v$ and compute the
visibility of each grid point in the terrain when the sweep line
passes over it (see Figure~\ref{fig:kreveld}). For this we maintain a data structure (the
\emph{active} structure) that, at any time in the process, stores the
elevation angles for the cells currently intersected by the sweep line
(the \emph{active cells}). Three types of events happen during the
sweep:

\begin{denseitems}
\item \enterevent events: When a cell starts being intersected by the sweep line, it is inserted in the active structure;
\item \centerevent events: When a the sweep line passes over the grid point $q$ at the center of a cell, the active structure is queried to find out if $q$ is visible.
\item \exitevent events: When a cell stops being intersected by the sweep line, it is deleted from the active structure;
\end{denseitems}

Thus, each cell in the grid has three associated events. Van
Kreveld~\cite{kreveld:viewshed} uses a balanced binary search tree for
the active structure, in which the active cells are stored in order of
increasing distance from the viewpoint. Because the cells are convex,
this is always the same as ordering the active cells in order of
increasing distance from the viewpoint to pothe \emph{grid points}
corresponding to the cells.  With each cell we store its elevation
angle. In addition, each node in the tree is augmented with the
highest elevation angle in the subtree rooted at that node. A query if
a point $q$ is visible is answered by checking if the active structure
contains any cell that lies closer to the viewpoint than $q$ and has
elevation angle at least $\mathrm{ElevAngle}(q)$: if yes, then $q$ is
\emph{not} visible, otherwise it is.
Such a query can be answered in $O(\log n)$ time.
To run the complete algorithm, we first generate and sort the $3n$ events by their azimuth angles (the sweep line directions at which they happen). Then we process the events in order of increasing azimuth angle. The whole algorithm runs in $O(n \log n)$ time.

\begin{figure}[t]
  \centering{
    \includegraphics[width=5cm]{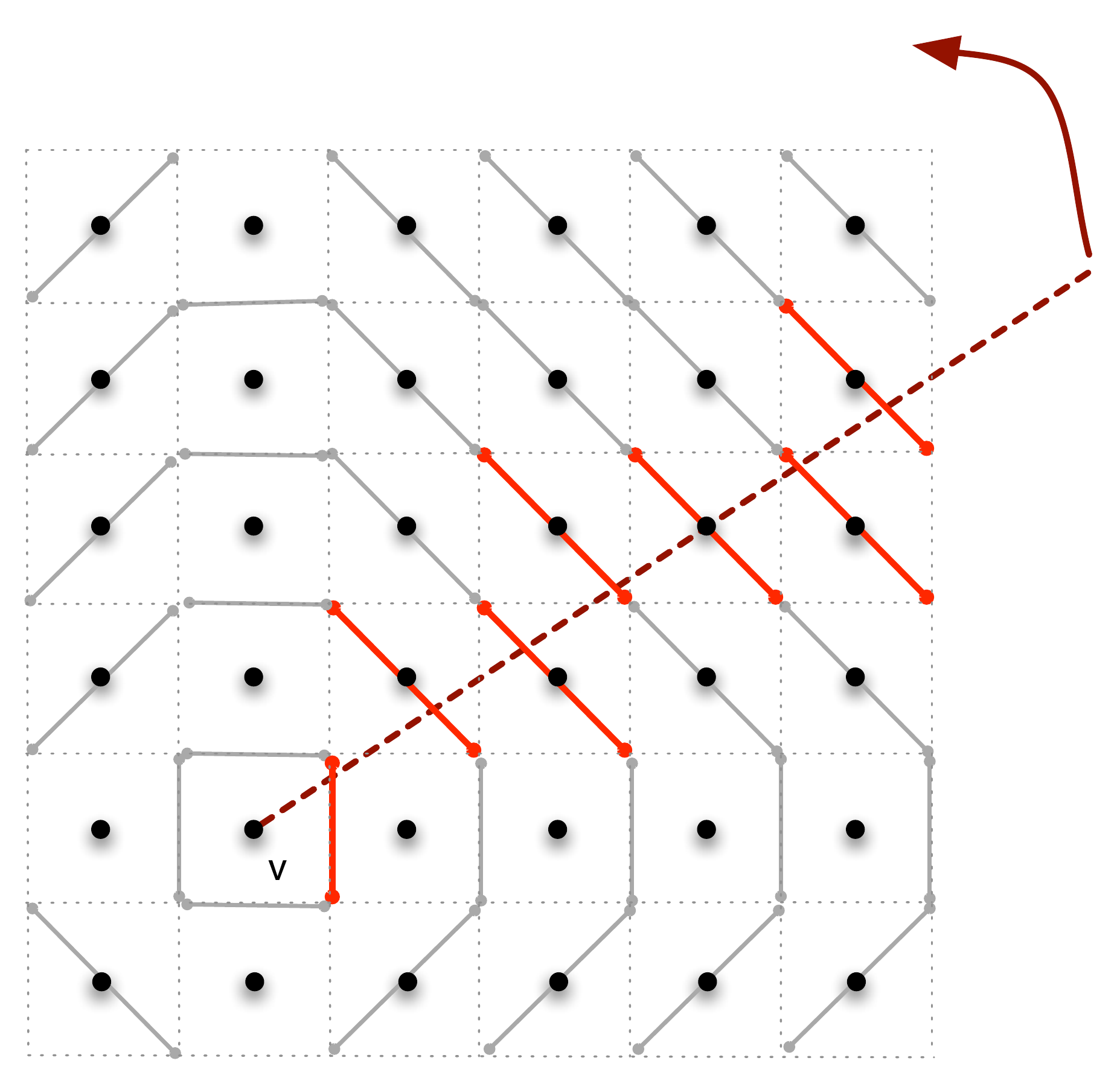}
    \caption{Van Kreveld algorithm. Cells will be present in the
      active structure as long as the sweep ray intersects the
      indicated diagonal.}
    \label{fig:kreveld}
  }
\end{figure}

In our previous work we adapted Van Kreveld's algorithm to make it
I/O-efficient~\cite{havertoma:visibility-journal}. The first step was
still to generate and sort the events. For each event we stored its
location in the plane and its elevation angle. Using four bytes per
coordinate, this resulted in an event stream of $36n$ bytes. For large
$n$, this is a significant bottleneck.

\subsection{A new I/O-efficient radial sweep algorithm}\label{sec:iorad2}

The main idea of our new radial sweep algorithm is therefore to avoid
generating and sorting a fully specified event stream. The purpose of
the event stream was to supply the azimuth angle and the elevation
angle of the events in order. Note, however, that the azimuth angle of
the events only depends on how the sweep progresses over the grid, but
not on the elevation values stored in the input file. Only the
elevation angles have to be derived from the input file.

Our ideas for making the sweep I/O-efficient are now the following. We can
compute the azimuth angles of the events on the fly, without accessing
the input file, instead of computing all events in advance. Only when
processing an \enterevent event corresponding to a grid point $q$, the
elevation of $q$ needs to be retrieved in order to insert $\langle
\mathrm{Dist}(q), \mathrm{ElevAngle}(q) \rangle$ into the active
structure---for \centerevent events the elevation angle can then be found
in the active structure and for \exitevent events the elevation angle
is not needed. To allow efficient retrieval of elevations for
\enterevent events, we pre-sort the elevation grid into lists of elevation
values, stored in the order of the \enterevent events that require
them. Thus we can retrieve all elevation values in $O(\scan(n))$
\ios during the sweep. Sorting the \emph{complete} elevation grid
into a \emph{single} list would be relatively expensive (it would
require several sorting passes); we avoid that by dividing the grid into
concentric bands around the viewpoint, making one list of elevation values for each band.
As long as the number of bands in small enough so that we can keep a
read buffer of size $\Theta(B)$ for each band in memory during the
sweep, we will still be able to retrieve all elevation values during
the sweep in $\Theta(\scan(n))$ \ios.

\vspace{\baselineskip}
{\bf Notation.}
For ease of description, assume that the viewpoint $v$ is in the
center of the grid at coordinates $(0,0,0)$ and the grid has size $(2m
+ 1) \times (2m + 1)$, where $m = (\sqrt n - 1)/2$.  The elevations of
the grid points are given in a two-dimensional matrix $Z$ that is
ordered row by row, with rows numbered from $-m$ to $m$
from north to south and columns numbered from $-m$ to $m$
from west to east. By $p(i,j)$ we denote the grid point $q =
(q_x,q_y,q_z)$ in row $i$ and column $j$ with coordinates $q_x = j$,
$q_y = -i$ and $q_z = Z_{ij}$; by \emph{cell $(i,j)$} we denote the
square $D(p(i,j))$. Let $\enterevent (i,j)$ denote the azimuth angle
of the \enterevent event of cell $(i,j)$.

\vspace{\baselineskip}
{\bf Description of the algorithm.}
We now describe our algorithm in detail. Let \emph{layer} $l$ of the
grid denote the set of grid points whose $L_\infty$-distance from the
viewpoint, measured in the horizontal plane, is $l$.
%%so $p(i,j)$ would be found in layer $\max(|i|,|j|)$.
We divide the grid in concentric bands of width $w$ around the
viewpoint.  Band $k$ (denoted $B_k$), with $k > 0$, contains all grid
points of layers $(k-1) w + 1$ up to $k w$, inclusive; so $p(i,j)$
would be found in band $\lceil\max(|i|,|j|) / w\rceil$ (see Figure~\ref{fig:layered-radial-sweep}). We choose $w = \Theta(\sqrt M)$; more precisely, $w$ is the largest power of two such
that the elevation and visibility values of a square tile of
$(w+1) \cdot (w+1)$ points fit in one third of the memory.
%%% HH: I changed the definition of the bands, shifting them outwards by one layer, so the viewpoint is not part of any band anymore. Reason: if the viewpoint is in some band, the algorithms would have to deal with it as a special case (because the viewpoint has no ``subsequent neighbour'')

\begin{figure}[t]
\centering
\includegraphics[height=2.5in]{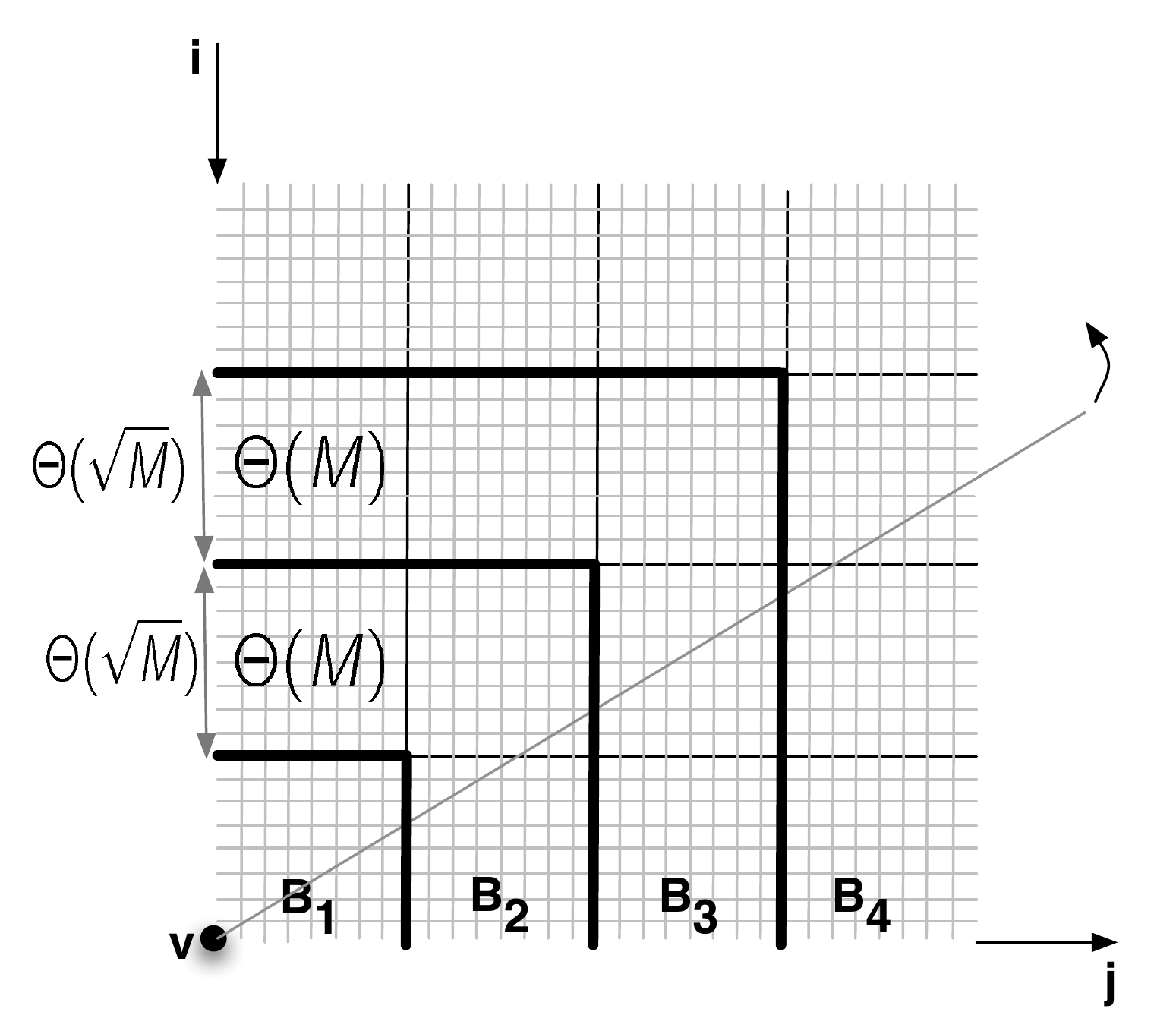}
\caption{A layered radial sweep.}
\label{fig:layered-radial-sweep}
\end{figure}

Our algorithm proceeds in three phases.  The first phase is to
generate, for each band $B_k$, a list $E_k$ containing the elevations
of all points $p(i,j)$ in the band, ordered by increasing
$\enterevent(i,j)$ values (recall that $\enterevent(i,j)$ denotes
the azimuth angle of the enter event of the cell $(i,j)$). Points
$p(i,j)$ with the same $\enterevent(i,j)$ value are ordered
secondarily by increasing distance $\mathrm{Dist}(i,j)$ from the
viewpoint. The algorithm that builds the lists $E_k$ is given below.
The basic idea is to read the grid points from the elevation grid
going in counter-clockwise order around the viewpoint. This is
achieved by maintaining a priority queue with points just in front of
the sweep line; the priority queue is organised by the azimuth angles
of the enter events corresponding to the points to be read.
% and secondarily by the distance from the viewpoint.  we said it
% already. it clutters the description.
The queue is initialised with all points of $B_k$ that lie straight
right of the viewpoint (Note: such a point $(0,j)$ will have its
$\enterevent(0,j)$ given by the south-west corner of its cell at
$(1/2, j-1/2)$ which corresponds to an angle in the fourth quadrant
$(3\pi/2, 2\pi)$; we subtract $2\pi$ to bring it to $(-\pi/2, 0)$;
this guarantees that points straight right of the viewpoint are first
in radial order).
Then we extract points from the queue one by
one in order of increasing
%azimuth angle of their \enterevent events;
$\enterevent(i,j)$; when we extract a point, we read its elevation
from the elevation grid, write the elevation value to $E_k$, and
insert the next point from the same layer in the priority queue (this
is the point above, to the left, below, or to the right, depending on
which octant the current point is in). In this way, from neighbor to
neighbor, all points are eventually reached. Below we describe the
algorithm only for the first quadrant
(Figure~\ref{fig:layered-radial-sweep}); the others are handled
similarly.

\addvspace{.5\baselineskip}
%%Algorithm: Build-Bands
%%Input: elevation grid. Output: lists $E_k$
\noindent
\textbf{Algorithm} \textsc{BuildBands}:\\[.25\baselineskip]
\textbf{for }$k \leftarrow 1$ \textbf{to} $\lceil m /w \rceil$\\
\textbf{do }initialise empty list $E_k$ and priority queue $Q$\\
\hphantom{\textbf{do }}\textbf{for }$j \leftarrow (k-1)\cdot w + 1$ \textbf{to} $k \cdot w$\\
\hphantom{\textbf{do }}\textbf{do }insert $\langle \enterevent (0,j)-2\pi, 0,j \rangle$ into $Q$\\
\hphantom{\textbf{do }}\textbf{while} $E_k$ is not complete\\
\hphantom{\textbf{do }}\textbf{do }$\langle \alpha, i, j \rangle \leftarrow Q$.deleteMin()\\
\hphantom{\textbf{do do }}read $Z_{ij}$ from the grid and write it to $E_k$\\
\hphantom{\textbf{do do }}\textbf{if }$-i < j$\hfill\textit{(next cell is north)}\\
\hphantom{\textbf{do do }}\textbf{then }insert $\langle \enterevent (i-1,j), i-1,j \rangle$ into $Q$\\
\hphantom{\textbf{do do }}\textbf{else}\hfill\textit{(next cell is west)}\\
\hphantom{\textbf{do do then }}insert $\langle \enterevent (i,j-1), i,j-1 \rangle$ into $Q$\\
%\hphantom{\textbf{do do }}\textbf{else if }$|i| < -j$ or $i = j < 0$\quad\quad\textit{(next is south)}\\
%\hphantom{\textbf{do do }}\textbf{then }insert $\langle \enterevent (i+1,j), i+1,j \rangle$ into $Q$\\
%\hphantom{\textbf{do do }}\textbf{else }\quad\quad\textit{(next is east)}\\
%\hphantom{\textbf{do do then }}insert $\langle \enterevent (i,j+1), i,j+1 \rangle$ into $Q$\\
\hphantom{\textbf{do }}clear $Q$

\addvspace{.5\baselineskip} After constructing the lists $E_k$, the
second phase of the algorithm starts: computing which points are
visible. To do this we perform a radial sweep of all events in azimuth
order. Again, we generate the events on the fly with the help of a
priority queue, using only the horizontal location of the grid
points. We use a priority queue to hold events in front of the sweep
line, and an active structure to store the cells that currently
intersect the sweep line, sorted by increasing distance from the
viewpoint (as in Van Kreveld's algorithm). The algorithm starts by
inserting all \enterevent events of the points straight to the right
of the viewpoint into the priority queue. When the next event in the
priority queue is an \enterevent event for cell $(i,j)$, the algorithm
inserts the corresponding \centerevent and \exitevent events in the
queue, as well as the \enterevent event of the next cell in the same
layer. In addition, it reads the elevation $Z_{ij}$ of $p(i,j)$ from
the list of elevation values $E_k$ of the band $B_k$ that contains
$p(i,j)$, and it inserts the cell $(i,j)$ in the active structure with
key $\textrm{Dist}(i,j)$. When the next event in the priority queue is
a $\centerevent$ event for cell $(i,j)$, the algorithm queries the
active structure for the visibility of the point with key
$\textrm{Dist}(i,j)$. When the next event in the priority queue is an
$\exitevent$ event for cell $(i,j)$, the algorithm deletes the element
with key $\textrm{Dist}(i,j)$ from the active structure.

\addvspace{.5\baselineskip}
%%Algorithm: Sweep-bands
%%Input: lists $E_k$. Output: visibility grid.
\noindent
\textbf{Algorithm} \textsc{ComputeVisibility}:\\[.25\baselineskip]
Initialise empty active structure $A$ and priority queue $Q$\\
\textbf{for }$j \leftarrow 1$ \textbf{to} $m$\\
\textbf{do }insert $\langle \enterevent (0,j)-2\pi, \enterevent, 0,j \rangle$ into $Q$\\
\textbf{for }$k \leftarrow 1$ \textbf{to} $\lceil m /w \rceil$\\
\textbf{do }set read pointer of $E_k$ at the beginning\\
\hphantom{\textbf{do }}initialise empty list $V_k$\\
\textbf{while} not all visibility values have been computed\\
\textbf{do }$\langle \alpha, \mathit{type}, i, j \rangle \leftarrow Q$.deleteMin()\\
\hphantom{\textbf{do }}\textbf{if }$\mathit{type} = \enterevent$\\
\hphantom{\textbf{do }}\textbf{then }insert $\langle \centerevent(i,j), \centerevent, i,j\rangle$ in $Q$\\
\hphantom{\textbf{do then }}insert $\langle \exitevent(i,j), \exitevent, i,j\rangle$ into $Q$\\
\hphantom{\textbf{do then }}\textbf{if }$|i| < j$ or $i = j > 0$\quad\quad\textit{(next cell is north)}\\
\hphantom{\textbf{do then }}\textbf{then }insert $\langle \enterevent (i-1,j), \enterevent, i-1,j \rangle$ in $Q$\\
\hphantom{\textbf{do then }}[... \textit{similar for west, south, and east} ...]\\
\hphantom{\textbf{do then }}compute band number $k \leftarrow \lceil\max(|i|,|j|) / w\rceil$\\
\hphantom{\textbf{do then }}$z \leftarrow$ the next unread value from $E_k$\\
\hphantom{\textbf{do then }}$\beta \leftarrow \arctan(z/\mathrm{Dist}(i,j))$\quad\quad($= \mathrm{ElevAngle}(p(i,j))$)\\
\hphantom{\textbf{do then }}insert $\langle \mathrm{Dist}(i,j), \beta\rangle$ into $A$\\
\hphantom{\textbf{do }}\textbf{else if }$\mathit{type} = \centerevent$\\
\hphantom{\textbf{do }}\textbf{then }compute band number $k \leftarrow \lceil\max(|i|,|j|) / w\rceil$\\
\hphantom{\textbf{do then }}query $A$ if element with key $\mathrm{Dist}(i,j)$ is visible;\\
\hphantom{\textbf{do then }}if yes, write 1 to $V_k$, otherwise write 0 to $V_k$\\
\hphantom{\textbf{do }}\textbf{else }($\mathit{type} = \exitevent$)\\
\hphantom{\textbf{do then }}delete element with key $\mathrm{Dist}(i,j)$ from $A$

\addvspace{.5\baselineskip}

The crux of the \textsc{ComputeVisibility} algorithm above is the
following: when it needs to read $Z_{ij}$, it simply takes the next
unread value from its band $E_k$. This is correct, because within each
band $B_k$, the above algorithm requires the $Z_{ij}$ values in the
order of the corresponding $\enterevent$ events, and this is exactly
the order in which these values were put in $E_k$ by algorithm
\textsc{BuildBands}. The output of the second phase is a number of
lists $V_k$ with visibility values: one list for each band, in order
of the azimuth angle of the grid points.

The third phase of the algorithm sorts the lists 
%with visibility values of all bands 
$V_k$ into one visibility map. To do so we run an algorithm that is
more or less the reverse of algorithm \textsc{BuildBands}: we only
need to swap the roles of reading and writing, and use azimuth values
for \centerevent events instead of \enterevent events.

\vspace{\baselineskip}
{\bf Efficiency analysis.}
We will now argue that the above algorithm computes a visibility map
in $O(n \log n)$ time and $O(\scan(n))$ \ios under the assumption
that the input grid is square, and we have $M \geq c_1 \sqrt n$ and $M
\geq c_2 B^2$ for sufficiently large constants $c_1$ and $c_2$.

We start with the first phase: \textsc{BuildBands}.  Consider the part
of band $B_1$ which lies in the first quadrant. This part consists of
all points $p(i,j)$ such that $0 \leq -i \leq w$ and $0 \leq j \leq w$
(except the viewpoint itself). It is a tile of size $(w+1) \cdot
(w+1)$, which fits in one third of the main memory by definition of
$w$. As the algorithm iterates through the points of $B_1$, it
accesses their elevations, loading blocks from disk, until eventually
the entire tile is in main memory, after which there are no subsequent
\io-operations on the input grid. The number of \ios to access the
tile is $O(w + w^2/B) = O(\sqrt{M} + M^2/B)$. By the assumption that
$M = \Omega(B^2)$, this is $O(M^2/B)=O(|B_1|/B)$, where $|B_1|$
denotes the number of grid points in $B_1$.
In fact any band $B_k$ with $k \geq 1$ can be subdivided into $8k-4$
tiles of size at most $(w+1) \cdot (w+1)$, such that for any band, the
sweep line will intersect at most two such tiles at any time (see
Figure~\ref{fig:layered-radial-sweep}). Since a tile fits in at most one
third of the memory, two tiles fit in memory together. Therefore the
algorithm can process each band by reading tiles one by one, without
ever reading the same tile twice. Thus each band $B_k$ is read in
$O(\scan(|B_k|))$ \ios, and algorithm \textsc{BuildBands} needs
$O(\scan(n))$ \ios in total to read the input. The output lists
$E_k$ are written sequentially, taking $O(\scan(n))$ \ios as
well. It remains to discuss the operation of the priority queue. Note
that at any time the priority queue stores one cell from each layer,
and therefore it has size $m < \frac12 \sqrt n$; by assumption this is
at most $\frac12 M / c_1$. Hence, for a sufficiently large value of
$c_1$, the priority queue fits in memory together with the two tiles
from the input file mentioned above (which each take at most one third
of the memory).
Thus the operation of the priority queue takes no \io, but it will
take $O(\log n)$ CPU-time per operation, and thus, $O(n \log n)$ time
in total.

The second phase, \textsc{ComputeVisibility}, reads and writes each
list $E_k$ and $V_k$ in a strictly sequential manner.  There are
$O(m/w) = O(\sqrt{n/M})$ bands. Under the assumption $M \geq c_1 \sqrt
n$ and $M \geq c_2 B^2$, this is only $O(\sqrt{n/M}) = O(\sqrt{n}/B) =
O(M/B)$. This implies that, when $c_1$ and $c_2$ are sufficiently high
constants, one block from each list $E_k$ or $V_k$ can reside in
memory as a read or write buffer during the sweeping. Thus all lists
$E_k$ and $V_k$ can be read and written in parallel in $O(\scan(n))$
\ios in total. The priority queue and the active structure have size
$O(\sqrt n)$ and therefore fit in memory by the arguments given above,
so
%that \textsc{ComputeVisibility}
the second phase needs $O(n \log n)$ time and $O(\scan(n))$ \ios in total.

The third phase, sorting the output lists into a visibility grid, also takes $O(n \log n)$ time and $O(\scan(n))$ \ios: the analysis is the same as for the first phase.
%
%Thus we get a total running time of $O(n \log n)$ and a total number of \ios of $O(\scan(n))$.
%
Note that in practice, the number of visible points is often very small compared to the size of the grid. In that case it may be better to change the algorithm
\textsc{ComputeVisibility}
as follows: instead of writing the visibility values of \emph{all}
grid points to separate lists for each band and sorting these into a
grid, we record only the \emph{visible} grid points with their grid
coordinates, write them to a single list~$V$, sort this list, and
produce a visibility map from the sorted output.
%%This is, in fact, the algorithm we implemented.
%%lt: nope, we use bitmaps. for large data they increase the IO.

\subsection{An algorithm for very large inputs}
The above algorithm computes a visibility map in 
$\Theta(\scan(n))$~\ios under the assumption that $M \geq c_1 \sqrt
n$, and $M \geq c_2 B^2$ for sufficiently large constants $c_1$ and
$c_2$.  Note that $\sort(n) = \Theta(\scan(n))$ under these
assumptions.  The idea of a layered radial sweep can be extended to a
recursive algorithm that runs in $O(\sort(n))$ \ios for any $n$,
without both these assumptions.
% $M \geq c_1 \sqrt n$ and $M \geq c_2 B^2$.

The idea is the following: we divide the problem into $\Theta(M/B)$
bands, scan the input to distribute the grid points into separate
lists for each band, then compute visibility recursively in each band,
and merge the results. More precisely, for each band we will compute a
list of ``locally'' visible points and a ``local'' horizon: these are
the points and the horizon that would be visible in absence of the
terrain between the viewpoint and the band.  The list of visible
points is stored in azimuth order around the viewpoint.  The horizon
is a step function whose complexity is linear in the number of points
of the terrain; it is also stored as a list of points in azimuth order
around the viewpoint.  

Now we can merge two adjacent bands as follows. Let $V_1$ and $H_1$ be
the list of visible points and the horizon of the inner band, and let
$V_2$ and $H_2$ be the list of visible points and the horizon of the
outer band. The merge proceeds as follows. We scan these four lists in
parallel, in azimuth order, and output two lists in azimuth
order. First, a list of visible points containing all points of $V_1$,
and all points $V_2$ that are visible above $H_1$. Second, the merged
horizon: the upper envelope of $H_1$ and $H_2$.
%In parallel,
%we scan the horizons $\textrm{Hor}_k$ and we keep the current segment
%of each horizon in memory. Let $q$ be the next point, in order, and
%let $k$ be its band. We read the current horizon segment of bands
%$1,...k-1$ and find if any occlude $q$. If not, we output $q$ as
%globally visible. Then we advance to the next point in azimuth order,
%and advance in each horizon, as necessary.  
This correctly computes visibility because a point is visible if and
only if is visible in its band, and is not occluded by any of the
bands that are closer to $v$.  

The idea of the merge step can be extended to merge $M/B$ bands,
resulting in an algorithm that runs in $O(\sort(n))$ \ios.
%We leave the details and analysis for the full paper.
To see why, observe that there are $\Theta(M/B)$ levels of recursion,
and the base-case runs in linear time. Each band and its horizon have
size $O(n/(M/B))$. The merging can be performed in linear time because
it involves scanning of $\Theta(M/B)$ lists of size $\Theta(n/(M/B))$;
a block from each list fits in memory and the total size of all lists
is $\Theta(n)$. It remains to show that a horizon can be computed
in linear time in a base-case band of width $\Theta(\sqrt M)$. To see
this, we note that a band consists of tiles of size $\sqrt M \cdot
\sqrt M$, and the horizon can be computed tile by tile. The details
are similar to ones already discussed, and we omit them.

Overall, we get a divide-and-conquer algorithm that can compute
visibility in $\Theta(\sort(n))$ \ios for any $n$, assuming $M \geq
c_2 B^2$.  Because another algorithm with a theoretical I/O-efficiency
of $O(\sort(n))$ was already known from our previous
work~\cite{havertoma:visibility-journal}, this ``new''
divide-and-conquer algorithm is not particularly interesting.
In practice such a recursive algorithm would probably never be needed:
it would only be useful when $n$ would be at least as big as
$(M/c_1)^2$.

\section{A radial sweep in sectors}\label{sec:radial-sectors}

This section describes our second algorithm for computing the
visibility map of a point $v$.  It does not achieve better asymptotic
bounds on running time and \ios than the algorithm from the previous
section, but, as we will see in Section~\ref{sec:results}, it is
faster.  Like the algorithm from Section~\ref{sec:radial-layers}, our
second algorithm sweeps the terrain radially around the viewpoint. As
before, the azimuth angles of the events are computed on the fly using
a priority queue. Elevation values of grid points are only needed when
their \enterevent events are processed. To make access to elevation
values efficient, we first divide the elevation grid into sectors of
$\Theta(M)$ grid points each---this is the main difference with the
algorithm from the previous section, which divided the elevation grid
into concentric bands.

The algorithm proceeds in three phases. First, for any pair of azimuth
angles $\alpha,\beta$, let $S(\alpha,\beta)$ be the set of grid points whose
corresponding \enterevent events have azimuth angle at least $\alpha$ and
less than $\beta$. The first phase of our algorithm starts by computing a
set of azimuth angles $\alpha_0 < ... < \alpha_s$, where $\alpha_0 =
0$ and $\alpha_s = 2\pi$, such that for any $1 \leq k \leq s$ we have
that the coordinates and elevation values of
$S(\alpha_{k-1},\alpha_k)$ fit in one third of the main memory. Note
that this can be done without accessing the elevation grid: the
algorithm only needs to know the size of the grid and the location of
the viewpoint in order to be able to divide the full grid into
memory-size sectors. We then scan the elevation grid
%, row by row, column by column,
and distribute the grid points based on their \enterevent azimuth
angle into lists: one list $E_k$ for each sector
$S(\alpha_{k-1},\alpha_k)$. (Cells straight right of the viewpoint
need to be entered at the beginning of the sweep and are additionally
put in $E_1$).
%% S_0 includes the points right of v, so there is no need of a
%% special list.
%We also create a special list $E_0$ that holds the elevation values
%for points straight to the right of the viewpoint: this list is needed
%to initialise the sweep correctly.

In the second phase we do the radial sweep as before, sector by
sector, with two modifications: (i) whenever we enter a new sector
$S(\alpha_{k-1},\alpha_k)$, we load the complete list $E_k$ into
memory and sort it by the azimuth angle of the \enterevent events;
(ii) we do not keep a list of visibility values per sector, but
instead we write the row and column coordinates of the points that are
found to be visible to a single list $L$.

\begin{figure}[t]
\centering
\includegraphics[height=2.5in]{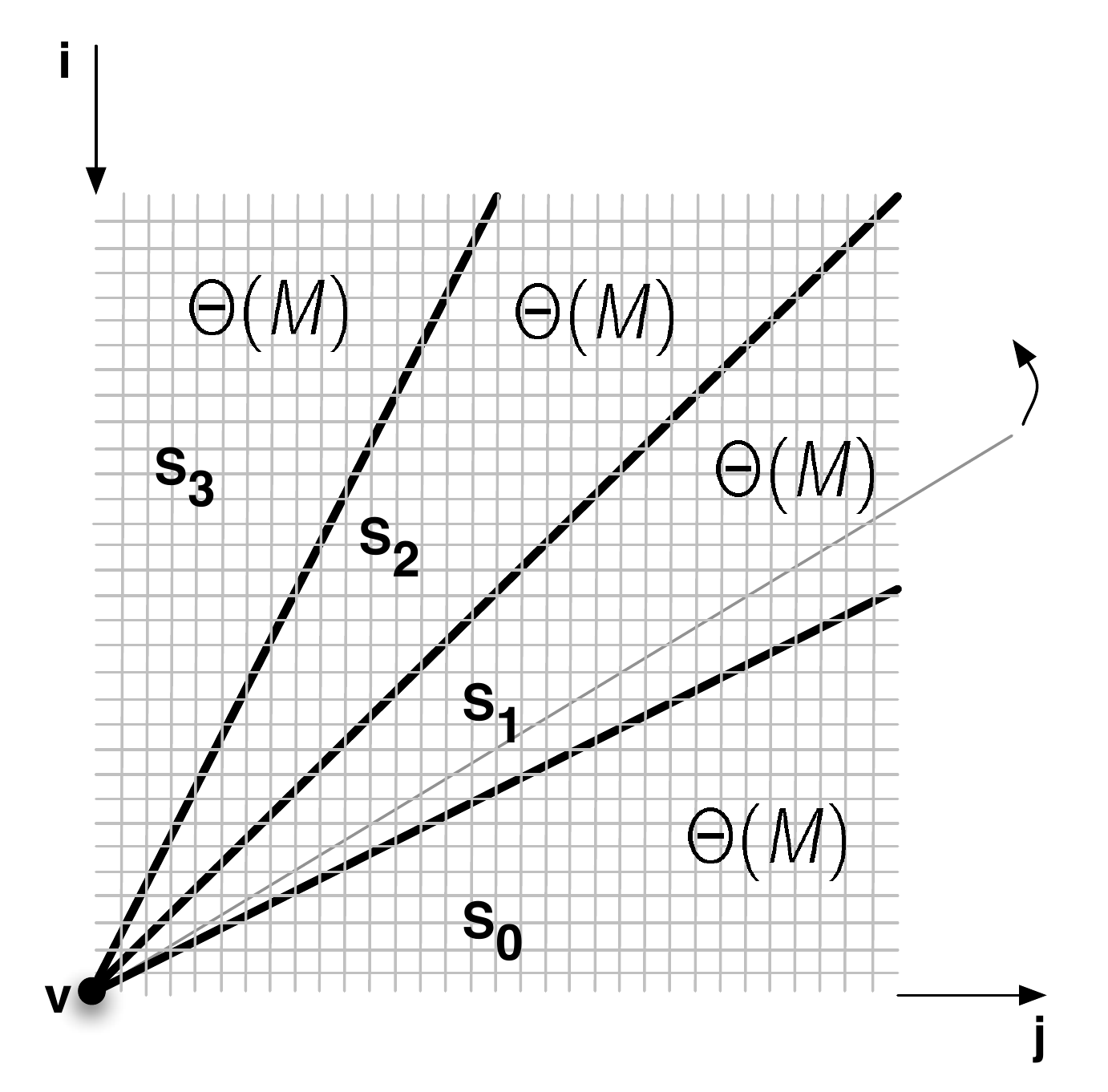}
\caption{A radial sweep in sectors.}
\label{fig:sector-radial-sweep}
\end{figure}

Finally, in the third phase we sort $L$ and scan it to produce a
visibility map of the full grid.  Thus the full algorithm is as
follows:
\newpage
\addvspace{.5\baselineskip}
\noindent
\textbf{Algorithm} \textsc{SectoredSweep}:\\[.25\baselineskip]
\textit{First phase---distribution:}\\
Compute sector boundaries
$\alpha_0,...,\alpha_s$ analytically such that each sector
$S(\alpha_{k-1},\alpha_k)$ fits in one third of the memory.\\
%Scan grid and write each point $Z_{ij}$ (except $v$) to its sector list $E_k$. 
%Create list $E_0$ with all elevation points  straight right of $v$.\\
\textbf{for }$k \leftarrow 0$ \textbf{to} $s$\\
\textbf{do }initialise empty list $E_k$\\
\textbf{for }all points $p(i,j)$ in row-by-row order (except $v$)\\
\textbf{do }compute $k$ s.t. $\alpha_{k-1} \leq \enterevent(i,j) < \alpha_k$\\
\hphantom{\textbf{do }}read $Z_{ij}$ and write $\langle i,j,Z_{ij}\rangle$ to $E_k$\\
\textbf{for }all points $p(i,j)$ from $v$ (excl.) straight to the right\\
\textbf{do }read $Z_{ij}$ and write $\langle i,j,Z_{ij}\rangle$ to $E_0$\\[.25\baselineskip]
\textit{Second phase---sweep:}\\
initialise empty active structure $A$ and priority queue $Q$\\
initialise empty output list $L$\\
\textbf{for }$j \leftarrow 1$ \textbf{to} $m$\\
\textbf{do }insert $\langle \enterevent (0,j)-2\pi, \enterevent, 0,j \rangle$ into $Q$\\
$k \leftarrow 1$; load $E_1$ in memory and sort it by $\enterevent(i,j)$\\
% load $E_0$ in memory, set read pointer at beginning\\
\textbf{while} not all visibility values have been computed\\
\textbf{do }$\langle \alpha, \mathit{type}, i, j \rangle \leftarrow Q$.deleteMin()\\
\hphantom{\textbf{do }}\textbf{if }$\mathit{type} = \enterevent$\\
\hphantom{\textbf{do }}\textbf{then }\textbf{if }$E_k$ contains no more unread elements\\
\hphantom{\textbf{do then }}\textbf{then }delete $E_k$; $k \leftarrow k + 1$; load $E_k$ in memory\\
%\hphantom{\textbf{do then then }}sort $E_k$ lexicogr. by $(\enterevent(i,j), \mathrm{Dist}(i,j))$\\
\hphantom{\textbf{do then then }}sort $E_k$ and set read pointer at beginning\\
%%\hphantom{\textbf{do then then }}set read pointer for $E_k$ at beginng\\
\hphantom{\textbf{do then }}$z \leftarrow$ read the next unread value from $E_k$\quad\quad (=$Z_{ij}$)  \\
\hphantom{\textbf{do then }}$\beta \leftarrow \arctan(z/\mathrm{Dist}(i,j))$\quad\quad($= \mathrm{ElevAngle}(p(i,j))$)\\
\hphantom{\textbf{do then }}insert $\langle \mathrm{Dist}(i,j), \beta\rangle$ into $A$\\
\hphantom{\textbf{do then }}insert $\langle \centerevent(i,j), \centerevent, i,j\rangle$ in $Q$\\
\hphantom{\textbf{do then }}insert $\langle \exitevent(i,j), \exitevent, i,j\rangle$ into $Q$\\
\hphantom{\textbf{do then }}\textbf{if }$|i| < j$ or $i = j > 0$\quad\quad\textit{(next cell is north)}\\
\hphantom{\textbf{do then }}\textbf{then }insert $\langle \enterevent (i-1,j), \enterevent, i-1,j \rangle$ in $Q$\\
\hphantom{\textbf{do then }}[... \textit{similar for west, south, and east} ...]\\
\hphantom{\textbf{do }}\textbf{else if }$\mathit{type} = \centerevent$\\
\hphantom{\textbf{do }}\textbf{then }query $A$ if element with key $\mathrm{Dist}(i,j)$ is visible;\\
\hphantom{\textbf{do then }}if yes, write $\langle i,j\rangle$ to $L$\\
\hphantom{\textbf{do }}\textbf{else }($\mathit{type} = \exitevent$)\\
\hphantom{\textbf{do then }}delete element with key $\mathrm{Dist}(i,j)$ from $A$\\[.25\baselineskip]
\textit{Third phase---produce visibility map:} \\
%Sort $L$ by $(i,j)$ and fill in invisible points.
Sort $L$ lexicographically by row, column\\
 Set read pointer of $L$ at the beginning\\
 \textbf{for }all points $p(i,j)$ in row-by-row order\\
 \textbf{do if }next element of $L$ is $(i,j)$\\
 \hphantom{\textbf{do }}\textbf{then }$V_{ij} \leftarrow 1$; advance read pointer of $L$\\
 \hphantom{\textbf{do }}\textbf{else }$V_{ij} \leftarrow 0$\\

\vspace{\baselineskip}
{\bf Efficiency analysis.}
We will now briefly argue that the above algorithm computes a
visibility map of the first quadrant in $O(n \log n)$ time and
$O(\scan(n) + \sort(t))$ \ios, where $t$ is the number of visible
grid points, under the assumption that the input grid is square and
$M^2/B \geq c n$ for a sufficiently large constant $c$.

The first phase of the algorithm reads the elevation grid once
%(using $O(\scan(n))$ \ios),
and writes elevation values to $O(n/M) = O(M/B)$ sector
lists. Therefore we can keep, for each sector, one block of size
$\Theta(B)$ in memory as a write buffer, and thus the first phase
produces the sector lists in $O(\scan(n))$ \ios. The running time of
the first phase is $\Theta(n)$.

During the second phase, we read the sector lists one by one, in
$O(\scan(n))$ \ios in total. The priority queue and the active
structure can be maintained in memory by the arguments given in the
previous section. Creating and sorting $L$ takes $O(\sort(t))$ \ios,
after which it is scanned to produce a visibility map.
% in $O(\scan(n) + \scan(t))$ \ios.

Thus the  algorithm runs in $O(n \log n)$ time and $O(\scan(n)
+ \sort(t))$ \ios.

\vspace{\baselineskip}
{\bf An algorithm for very large inputs.}
%HHH: attempt at shortening it / attempt at avoiding to repeat our old paper
When the assumption $M^2/B \geq c n$ does not hold, a radial sweep
based on distribution into sectors is still possible: one can use the
recursive distribution sweep algorithm from our previous
work~\cite{havertoma:visibility-journal} and apply the ideas described
above to reduce the size of the event stream. The result is an
algorithm that runs in $O(n \log n)$ time and $O(\sort(n))$ \ios.
%We refer to our previous work for details on the recursive
%distribution sweeping approach~\cite{havertoma:visibility-journal}.
We sketch the main ideas below. 

First we note that, for large $n$, the ``diagonal'' of the grid is
larger than $M$ and does not fit in a sector. The splitter values
for sectors can still be computed without any \io because they
depend solely on the position $(i,j)$ of the points wrt $v$, and not
on their elevation. For example, one could do a pass through the
points in $\enterevent(i,j)$ order using an I/O-efficient priority
queue, in the same way as during the sweep, but without accessing
the elevation. Using a counter we can keep track of the number of
points processed, and output every $\Theta(M)$-th one as a splitter.

Given the splitters, we can proceed recursively: first distribute the
grid into $\Theta(M/B)$ sectors, and then distribute each sector
recursively until each sector has size $\Theta(M)$. This takes
$\Theta(\log_{M/B} n)$ passes over the grid. Thus, distribution into
$\Theta(M)$-sized sectors can be performed in $\Theta(\sort(n))$
\ios. If $n > M^2$, the active structure does not fit in memory and
the sweep of the sectors with a common active structure does not work,
even though each sector is $\Theta(M)$. We need to refine the
distribution to process carefully long cells that span more than one
sector, so that we can process each sector individually. This can be
done I/O-efficiently in $\Theta(\sort(n))$ \ios and we refer to our
previous algorithm for details~\cite{havertoma:visibility-journal}.

\section{A centrifugal sweep algorithm}\label{sec:conc-sweep}
In this section we describe our third algorithm for computing the
visibiliy map. It uses a complementary approach to the radial sweep in
the previous sections and sweeps the terrain centrifugally, by growing
a region $R$ around the viewpoint.  This region is kept
\emph{star-shaped} around~$v$: for any point $u$ inside $R$, the line
segment from $(u_x,u_y)$ to $(v_x,v_y)$ lies entirely inside~$R$. The
idea is to grow $R$ point by point until it covers the complete grid,
while maintaining the horizon $\sigma_R$ of $R$.  Recall that the
horizon $\sigma_R$ is a function from azimuth angles to elevation
angles, such that $\sigma_R(\alpha)$ is the maximum elevation angle of
any point on the intersection of $R$ with the ray that extends from
$v$ in direction $\alpha$.

Whenever a new point $u$ is added to $R$, we decide whether it is
visible. The star shape of $R$ guarantees that all points along the
line of sight from $v$ to $u$ have already been added, so we can in
fact decide whether $u$ is visible by determining whether $u$ is
visible above the horizon of $R$ just before adding $u$ (see
Figure~\ref{fig:centrifugal}). The key to a good performance is to have
a way of growing $R$ that results in an efficient disk access pattern,
and to have an efficient way of maintaining the horizon structure.
Below we explain how to do this, given an elevation grid with a fixed
number of bytes per grid point.

%%repeat, i took it out
%%LT: put it back 
For ease of description, we remind the reader our notation: We assume
that the viewpoint $v$ is in the center of the grid at
coordinates $(0,0,0)$ and the grid has size $(2m + 1) \times (2m +
1)$, where $m = (\sqrt n - 1)/2$.  The elevations of the grid points
are given in a two-dimensional array $Z$ that is ordered row by row,
with rows numbered from $-m$ to $m$ from north to south
and columns numbered from $- m$ to $m$ from west to
east. By $p(i,j)$ we denote the grid point $q = (q_x,q_y,q_z)$ in row
$i$ and column $j$ with coordinates $q_x = j$, $q_y = -i$ and $q_z =
Z_{ij}$; by \emph{cell $(i,j)$} we denote the square $D(p(i,j))$.

To maintain the horizon efficiently, we represent it by a grid model
itself: more precisely, it is maintained in an array $S$ of $32m$
slots, where slot $i$ stores the highest elevation angle in $R$ that
occurs within the azimuth angle range from $i\cdot 2\pi /32m$ to
$(i+1)\cdot 2\pi / 32m$.

For growing the region $R$ the idea is to do so cache-obliviously
using a recursive algorithm.  Initially we call this algorithm with
the smallest square that contains the full grid and whose width is a
power of two.
When called on a square of size larger than one, it makes recursive
calls on each of the four quadrants of the square, in order of
increasing distance of the quadrants from~$v$. For a square tile with
upper left corner $(i,j)$ and width $s$, this distance
$\mathrm{TileDist}(i,j,s)$ is the distance from $v$ to the closest
point of the tile. 
 This is determined as follows. Let $v_i, v_j$ be
the row and column of the viewpoint.

\begin{itemize}
\item when $i \leq v_i < i + s$ and $j \leq v_j + s$, then the tile contains~$v$, and $\mathrm{TileDist}(i,j,s) = 0$;
\item otherwise, when $i \leq v_i < i + s$, the tile intersects the row that contains $v$, and $\mathrm{TileDist}(i,j,s) =  \min(|i - v_i|, |i + s - 1 - v_i|)$;
\item otherwise, when $j \leq v_j < j + s$, the tile intersects the column that contains $v$, and $\mathrm{TileDist}(i,j,s) = \min(|j - v_j|, |j + s - 1 - v_j|)$;
\item otherwise, $\mathrm{TileDist}(i,j,s) = \min(|i - v_i| + |j - v_j|, |i - v_i| + |j + s - 1 - v_j|, |i + s - 1 - v_i| + |j - v_j|, |i + s - 1 - v_i| + |j + s - 1 - v_j|)$.
\end{itemize}

When called on a square of size~1, that is, a square that contains
only a single grid point~$p(i,j)$, we proceed as follows. We retrieve
the elevation $Z_{ij}$ of~$p(i,j)$ from the input file and compute its
azimuth angle
$\mathrm{AzimAngle}(p(i,j))$
and its elevation angle.
$\mathrm{ElevAngle}(p(i,j))$.
Then we check if $p(i,j)$ is visible: this is the case if and only if
$p(i,j)$ appears higher above the horizon than the current horizon in
the direction of $p(i,j)$; that is, if and only if
$\mathrm{ElevAngle}(p(i,j)) > S[\lfloor
\mathrm{AzimAngle}(p(i,j))/2\pi \cdot 32m\rfloor]$. The visibility of
$p(i,j)$ is recorded in the output grid $V$. Next we update the
horizon to reflect the inclusion of $p(i,j)$ in $R$. To this end we
check all slots in the horizon array whose azimuth angle range
intersects the azimuth angle range of cell $(i,j)$; let ${\cal
  A}(p(i,j))$ denote this set of slots. For each slot of ${\cal
  A}(p(i,j))$ that currently stores an elevation angle lower than
$\mathrm{ElevAngle}(p(i,j))$, we raise the elevation angle to
$\mathrm{ElevAngle}(p(i,j))$. We thus have the following algorithm:

\addvspace{\baselineskip}
\noindent
\textbf{Algorithm} \textsc{CentrifugalSweep}:\\[.25\baselineskip]
create horizon array $S[0..32m-1]$\\
\textbf{for }$k \leftarrow 0$ \textbf{to} $32m-1$ \textbf{do} $S[k] \leftarrow -\infty$\\
$s \leftarrow$ smallest power of two $\geq 2m+1$\\
$\textsc{SweepRecursively}(\mathit{-m,-m,s})$

\addvspace{\baselineskip}
\noindent
\textbf{Algorithm} $\textsc{SweepRecursively}(i,j,s)$:\\
\textit{(Recursively computes visibility for the tile with upper left cell $(i,j)$ and width $s$)}
\\[.25\baselineskip]
\textbf{if }$s = 1$\\
\textbf{then }$\alpha \leftarrow \mathrm{AzimAngle}(p(i,j))$\\
\hphantom{\textbf{then }}$\beta \leftarrow \arctan(Z_{ij}/\mathrm{Dist}(i,j))$\hfill($=\mathrm{ElevAngle}(p(i,j))$)\\
\hphantom{\textbf{then }}\textbf{if} $\beta > S[\lfloor \alpha/2\pi \cdot 32m\rfloor]$ \textbf{then} $V_{ij} \leftarrow 1$ \textbf{else} $V_{ij} \leftarrow 0$\\
\hphantom{\textbf{then }}$\alpha^{-} \leftarrow$ smallest azimuth of any corner of cell $(i,j)$\\
\hphantom{\textbf{then }}$\alpha^{+} \leftarrow$ largest azimuth of any corner of cell $(i,j)$\\
\hphantom{\textbf{then }}\textbf{for }$k \leftarrow \lfloor \alpha^{-}/2\pi \cdot 32m\rfloor$ \textbf{to} $\lceil \alpha^{+}/2\pi \cdot 32m\rceil-1$\\
\hphantom{\textbf{then }}\textbf{do }$S[k] \leftarrow \max(S[k], \beta)$\\
\textbf{else }\textit{Let $Q$ be the four subquadrants:}\\
\hphantom{\textbf{then }}$s \leftarrow s/2$\\
\hphantom{\textbf{then }}$Q \leftarrow \{\langle i,j,s\rangle,\langle i+s,j,s\rangle,\langle i,j+s,s\rangle,\langle i+s,j+s,s\rangle\}$\\
\hphantom{\textbf{then }}sort the elements $\langle i,j,s\rangle$ of $Q$ by incr. $\mathrm{TileDist}(i,j,s)$\\
\hphantom{\textbf{then }}\textbf{for }$\langle i,j,s\rangle \in Q$\\
\hphantom{\textbf{then }}\textbf{do }$\textsc{SweepRecursively}(i,j,s)$

\begin{figure}[t]
  \centering{
    \includegraphics[height=2in]{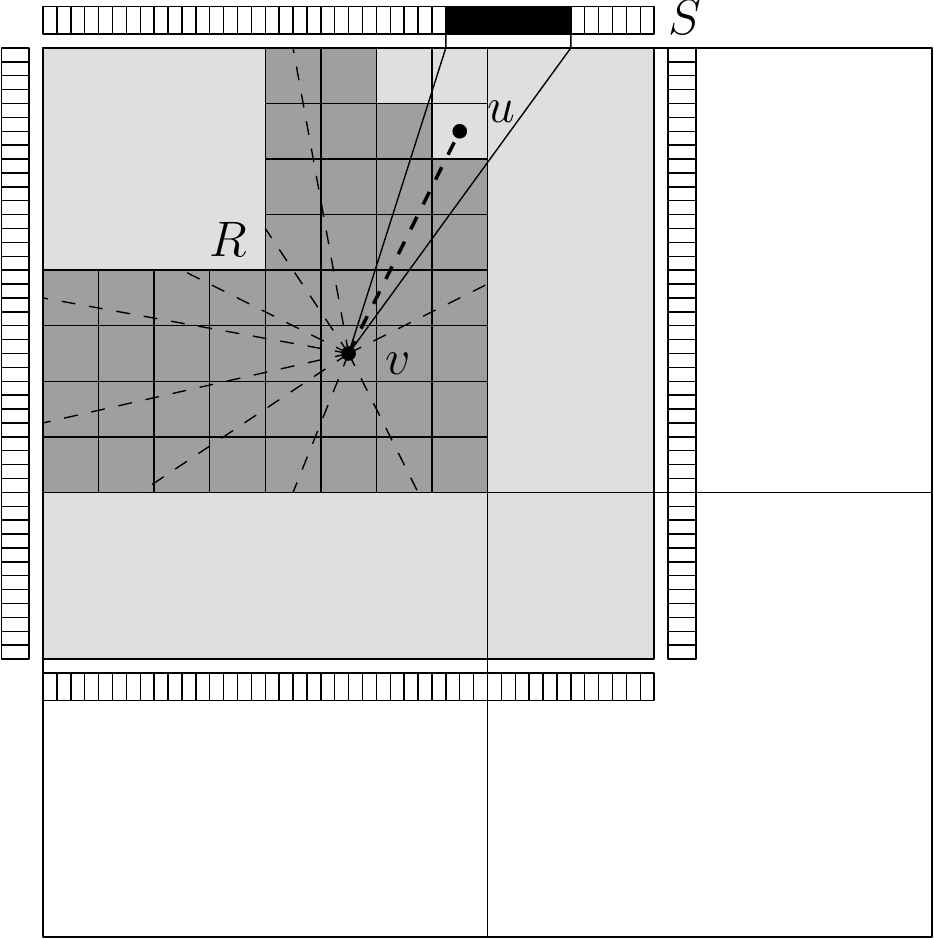}
    \caption{$R$ (dark shade) is a star-shaped region of the terrain (light
      shade) around $v$. 
%v: all rays from v to the boundary of R lie completely inside R. 
      The horizon of $R$ is maintained in array $S$. When $u$ is added to
      $R$, the elevation angles in the black slots of $S$ are updated.
%      raised to $\mathrm{ElevAngle}(u)$.
    }
    \label{fig:centrifugal}
  }
\end{figure}

\subsection{Accuracy of the centrifugal sweep}
Note that when the algorithm updates the horizon array, the elevation
angle of $p(i,j)$ may be used to raise the elevation angles of a set
of horizon array slots ${\cal A}(p(i,j))$, of which the total azimuth
range may be slightly larger than that of the cell corresponding to
$p(i,j)$---this is due to the rounding of the azimuth angles
$\alpha^{-}$ and $\alpha^{+}$ in the algorithm.  However, this is not
a problem:
%The grid representation of the terrain is an approximation anyway, in which
The azimuth angles of grid points that lie next to each other (as seen
from the viewpoint) differ by at least roughly $1/m$. The size of the
horizon array is chosen such that its horizontal resolution is more
than four times bigger: it divides the full range of azimuth angles
from $0$ to $2\pi$ over $32m$ slots, each of which covers an azimuth
angle range of $2\pi / 32m < 1/4m$. Therefore, if the resolution of
the horizon array would be insufficient, then surely the resolution of
the original elevation grid would not be sufficient.
% to allow any type of visibility computations.
%
% The correctness of the algorithm, under this approximate
% interpretation, follows from the fact that the recursive region
% growing procedure guarantees that $R$ is indeed star-shaped at all
% times.  I guess we should prove this, but since this is tedious but
% easy and I hope everybody realises that, I suggest to worry about
% this only if tim permits

\subsection{Efficiency of the centrifugal sweep}
The number of recursive calls made by the region-growing algorithm is
$O(n)$. The only part of any recursive call that takes more than
constant time is the updating of the horizon. We analyse this layer by
layer, where this time layer $l$ is defined as the cells $(i,j)$ such
that $|i| + |j| = l$. There are $O(\sqrt n)$ layers, and on each
layer, each of the $O(\sqrt n)$ slots of the horizon array is
updated at most twice. Thus the total time for updating the horizon is
$O(n)$, and the complete algorithm runs in $O(n)$ time.

The number of I/Os under the tall-cache assumption ($M = \Omega(B^2)$)
can be analysed as follows. Let $w$ be the largest power of two such
that the elevation and visibility values of a square tile of 
$w \times w$ points fit in half of the main memory. There are $O(n/w^2)
= O(n/M)$ recursive calls on tiles of this size, and for each
of them the relevant blocks of the input and output files can be
loaded in $O(w(w/B + 1)) = O(w^2/B + w) = O(M/B + \sqrt M) = O(M/B)$
I/Os. Thus all I/O for reading and writing blocks of the input and
output files can be done in $O(n/M \cdot M/B) = O(\scan(n))$
I/Os in total.

It remains to discuss the I/Os that are caused by swapping parts of
the horizon array in and out of memory. To this end we distinguish (i)
recursive calls on tiles of size $w \times w$ at distance at least
$c \cdot \sqrt{n/M}$ from the viewpoint (for a suitable constant~$c$),
and (ii) calls on the remaining tiles around the viewpoint.  For
case (i), observe that each tile $G$ of size $w \times w$ at
distance at least $c \cdot \sqrt{n/M}$ from the viewpoint has an
azimuth range of $O(w/\sqrt{n/M}) = O(M/\sqrt{n})$; since the horizon
array has $O(\sqrt n)$ slots, $G$ spans $O(M/\sqrt{n} \cdot \sqrt n) =
O(M)$ slots of the horizon array. Therefore, when $c$ is sufficiently
large, the part of the horizon array that is relevant to the call on
$G$ can be read into the remaining half of the main memory at once,
using $O(\scan(M))$ I/Os. In total we get $O(n/M) \cdot O(\scan(M)) =
O(\scan(n))$ I/Os for reading and writing the horizon array in
instances of case~(i).  For case (ii), note that we access the horizon
array $O(n)$ times in total (as shown in our running time analysis
above). Because the tiles of case (ii) contain only $O(n/M)$ grid
points in total, the accesses to the horizon array are organised in
$O(n/M)$ runs of consecutive horizon array slots. The total number of
I/O-operations induced by these accesses is therefore $O(n/B + n/M) =
O(\scan(n))$.

Adding it all up, we find that the centrifugal sweep algorithms runs
in $O(n)$ time and $O(\scan(n))$ I/Os. The algorithm does not use or
control $M$ and $B$ in any way: it is cache-oblivious. The
I/O-efficiency analysis for the maintenance of the horizon array is
purely theoretical as far as disk I/O is concerned: the complete
horizon array easily fits in main memory for files up to several
trillion grid points. However, the I/O-efficiency analysis also
applies to the transfer of data between main memory and smaller
caches.

\section{An IO-efficient  algorithm using linear interpolation} 
\label{sec:algorithms}

In this section we describe our last two algorithms for computing
viewsheds, \visiter and \visdac. These algorithms use linear
interpolation to evaluate the intersection of the line-of-sight with
the grid lines, and constitute an improved, IO-efficient
version of Franklin's \rrr algorithm.

{\bf Notation.} Recall that the horizon $H_v$ (wrt to viewpoint $v$) is the upper rim
of the terrain as it appears to a viewer at $v$.  Suppose we recenter
our coordinate system such that $v = (0,0,0)$, and consider a
\emph{view screen} around the viewer that consists of the Cartesian
product of the vertical axis and the square with vertices $(1, 0)$,
$(0, 1)$, $(-1, 0)$ and $(0, -1)$. The projection of a point $p =
(p_x, p_y, p_z)$ towards $v$ onto the view screen has coordinates: $p
/ (|p_x| + |p_y|)$. Note that any line segment that does not cross the
north-south or east-west axis through $v$, will appear as a line
segment in the projection onto the view screen. We now define the
horizon of the terrain as it appears in the projection. More
precisely, for $t \in [0,2]$, we define the horizon $H_v(t)$ as the
maximum value of $p_z / (|p_x| + |p_y|)$ over all terrain points $p$
such that $p_x / (|p_x| + |p_y|) = 1 - t$ and $p_y \leq 0$ (this
defines the horizon of the terrain south of the viewpoint). For $t \in
[2,4]$, we define the horizon $H_v(t)$ as the maximum value of $p_z /
(|p_x| + |p_y|)$ over all terrain points $p$ such that $p_x / (|p_x| +
|p_y|) = 3 + t$ and $p_y \geq 0$ (this defines the horizon of the
terrain north of the viewpoint).
% such that $H_v(\theta)$ is the maximum elevation angle of any point on
% the intersection of T with the ray that extends from $v$ in
% direction $\theta$

{\bf The model.}  We consider two models, shown in
Figure~\ref{fig:model}: In the \emph{gridlines} model the grid points
are connected by vertical and horizontal lines in a grid, and
visibility is determined by evaluating the intersections of the LOS
with the grid lines.  The gridlines model is the model used by
\rrr. We also consider a slightly different model, the \emph{layers}
model, in which the grid points are connected in concentric layers
around the viewpoint and visibility is determined by evaluating the
intersections of the LOS with the layers.  The layers model is a
relaxation of the gridlines model because it considers only a subset
of the intersections (obstacles) considerd by the gridlines model; any
point visible from $v$ in the grid model is also visible in the layers
model (but not the other way), and the viewshed generated by the grid
model is a subset of the viewshed generated with the layers model.

\begin{figure}[tb]
\label{fig:model}
\centering{
\begin{tabular}{c c}
 \includegraphics[width=3.5cm]{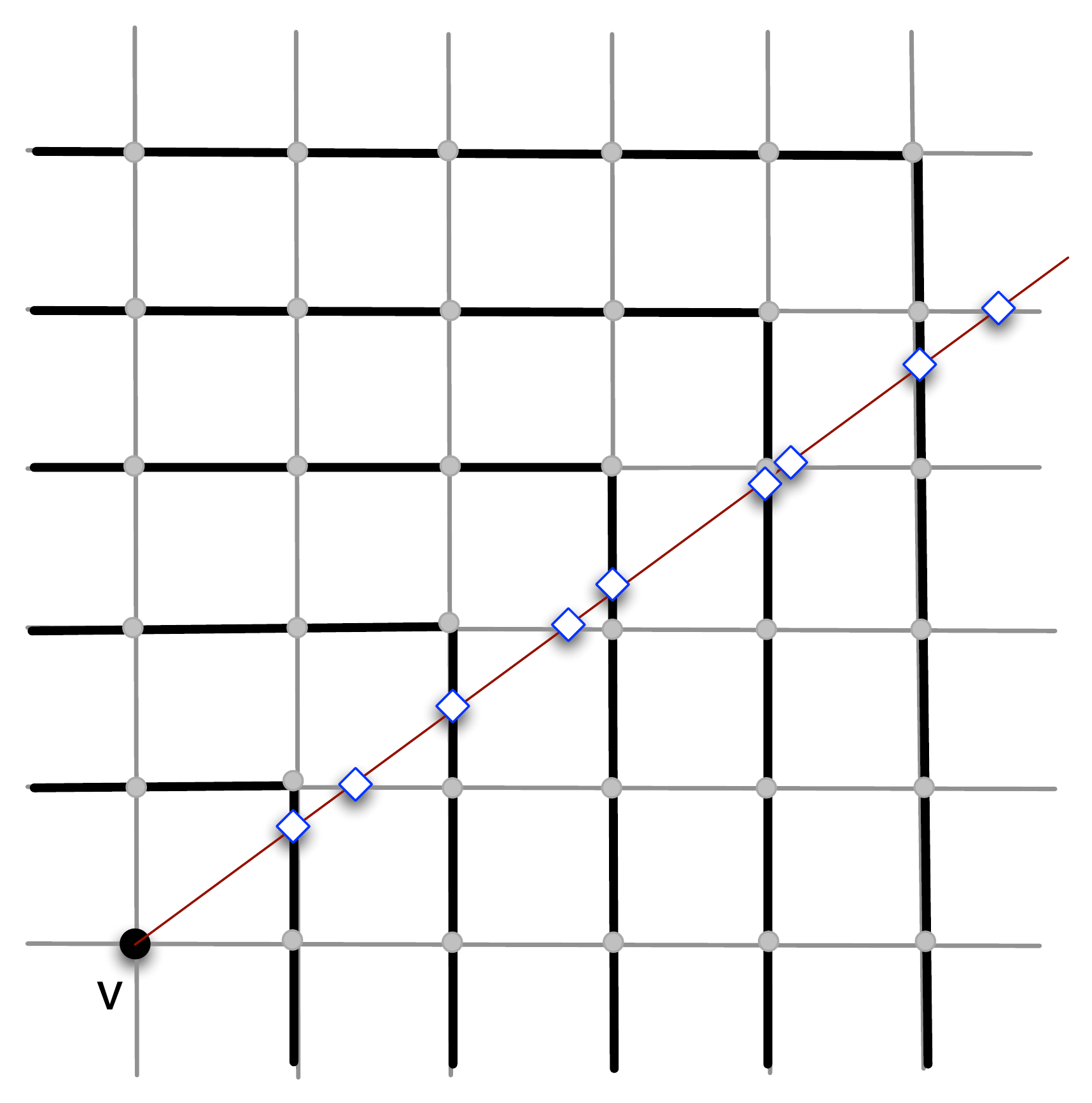} &
 \includegraphics[width=3.5cm]{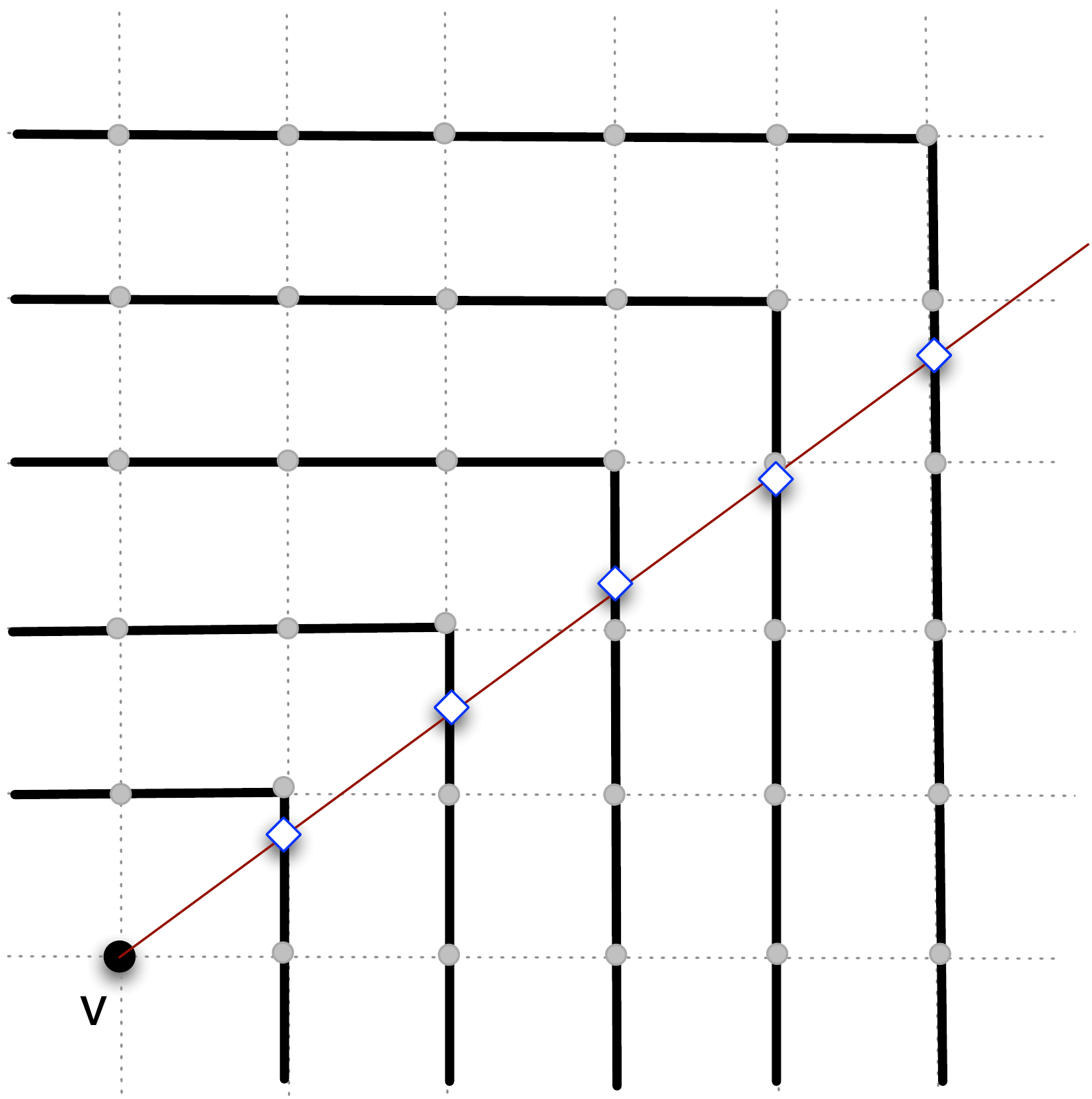} \\
(a) & (b)\\
\end{tabular}
\caption{\small{(a) The gridlines model: visibility
    is determined by the intersections of  the LOS with all the grid
    lines. (b) The layers model: visibility is determined by the intersections of the LOS with the layers. }}
}
\end{figure}

{\bf General idea and comparison to previous algorithms.} Our algorithms \visdac and
\visiter use an overall approach that is a combination of our radial
sweep algorithm (Section~\ref{sec:radial-layers}) which partitions the
grid into bands, and our centrifugal sweep algorithm
(Section~\ref{sec:conc-sweep}) which traverses the grid in layers
around the viewpoint and maintains the horizon of the region traversed
so far.
\begin{itemize} 
\item Recall that the radial sweep algorithm from
  Section~\ref{sec:radial-layers} consists of three phases: (1) partition
  the elevation grid in bands; (2) rotate the ray and compute
  visibility bands; (3) sort the visibility bands into a visibility grid.
  Phase 2 accesses data sequentially from all bands while the ray
  rotates around the viewpoint. The width of a band is chosen $w =
  \Theta(\sqrt M)$.
  Our algorithms \visiter and \visdac have the same first and third
  phase, but in phase (2) they process the bands one at a time. The
  size of a band is set so that a band fits fully in memory.

\item The centrifugal sweep algorithm Section~\ref{sec:conc-sweep}
  uses horizons which are stored discretized in an array.
  Our algorithms \visiter and \visdac use linear interpolation and
  therefore horizons are piecewise-linear functions and are stored
  as a list of $\{azimuth, zenith \}$ pairs with full precision. 

\end{itemize}

We start by describing how to compute viewsheds in the layers model in
Section~\ref{sec:visiter} and \ref{sec:visdac}; in
Section~\ref{sec:gridlines} we show how our algorithms can be
extended to the gridlines model while maintaining the same worst-case
complexity.

\subsection{An iterative algorithm: \visiter }
\label{sec:visiter}

This section describes our first viewshed algorithm in the layers
model, \visiter.  The main idea of \visiter is to traverse the grid in
layers around the viewpoint, one layer at a time, while maintaining
the horizon of the region traversed so far. The horizon is maintained
as a sequence of points $(t,H_v(t))$, sorted by $t$-coordinate,
between which we interpolate linearly. When traversing a point $p$,
the algorithm uses the maintained horizon to determine if $p$ is
visible or invisible.  In order to do this IO-efficiently, it divides
the grid in bands around the viewpoint and processes one band at a
time.  The output visibility grid is generated band by band, and is
sorted into a grid in the final phase of the algorithm.  The size of
the band is chosen such that a band fits in memory.  Below we explain
these steps in more detail.

{\bf Notation.}  The notation is the same as before and we review it
for clarity: we ssume that the viewpoint $v$ is in the center of the
grid at coordinates $(0,0,0)$ and the grid has size $(2m+1) \times
(2m+1)$, where $m = (\sqrt n -1) /2$. The elevations of the grid
points are given in a two-dimensional matrix $Z$ that is ordered row
by row, with rows numbered from $-m$ to $m$ from north to south, and
columns numbered from $-m$ to $m$ from west to east.  By $p(i,j)$ we
denote the grid point $q=(q_x, q_y, q_z)$ in row $i$ and column $j$
with coordinates $q_x = j, q_y = -i$ and $q_z= Z_{ij}$.

For $l\ge 0$, let \emph{layer} $l$ of the grid, denoted $L_l$, denote
the set of grid points whose $L_\infty$-distance from the viewpoint,
measured in the horizontal plane, is $l$.  By definition, $L_0$
consists of only one point, $v$.  We divide the grid in concentric
bands around the viewpoint: For $k>0$, band $k$,
denoted $B_k$, contains all points in layers $w_{k-1}$ (inclusive)
to $w_k$ (exclusive), where $w_k (k \ge 0)$ denote the indices of
layers corresponding to the band boundaries. Thus band $B_1$ contains
all points in layers $w_0=1$ to $w_1$,
% band $B_2$ contains all points in layers $w_1$ to $w_2$, 
and so on.

The algorithm starts with a preprocessing step which, given an
arbitrary constant $K$, computes the band boundaries $w_k (k \ge 1)$
such that a band has size $\Theta(K)$ as follows: it cycles through
each layer $i$ in the grid, computes (analytically) the number of
points in that layer, and checks whether including this layer in the
current band makes the band go over $K$ points.  If yes, then layer
$i$ marks the start of the next band. Otherwise, it adds the points in
layer $i$ to the current band and continues.
% w_0=0
% k=1
% band=0
% for i = 0 to last layer in the grid wrt v
%     li=  compute how many points in that layer
%     band +=  li
%     if band > K  then li marks the start of the next band;
%             w_k = i
%             band = li

The maximum size $K$ of a band is chosen such that a band fills
roughly a constant fraction of memory, and each band is at least one
layer wide.  More precisely, we choose $K = c_1 M$ and assume $\sqrt n
\leq c_1 M$, for a sufficiently small constant $c_1$ which will be
defined more precisely later.  Thus the number of bands, $N_{bands}$,
is $O(N/M)$.

Once the band boundaries are set, the algorithm proceeds in three
phases. The first phase is to generate, for each band $B_k$, a list
$E_k$ containing the elevations of all points in the band.  It does
this by scanning the grid in row-column order: for each point
$p(i,j)$, it calculates the index $k$ of the band that contains the
point and writes $Z_{ij}$ to $E_k$.  We note that the first phase
writes the lists $E_k$ sequentially, and thus list $E_k$ contains the
points in the order in which they are encountered during the
(row-by-row) scan of the grid.  The algorithm is given below.

%\addvspace{.5\baselineskip}
\noindent
\textbf{Algorithm} \textsc{BuildBands}:\\[.25\baselineskip]
load list containing band boundaries in memory \\
\textbf{for }$k \leftarrow 1$ \textbf{to} $N_{bands}$\\
\textbf{do }initialize empty list $E_k$\\
\textbf{for } $i\leftarrow -m$ \textbf {to} $m$\\
\textbf{do }  \textbf{for } $j\leftarrow -m$ \textbf {to} $m$\\
\hphantom{\textbf{do }}  \textbf{do} read next elevation $Z_{ij}$ from grid \\
%\hphantom{\textbf{do }\textbf{do }} $k \leftarrow \lceil max (|i|,
%|j|)/ w \rceil $\\
\hphantom{\textbf{do }\textbf{do }} $k \leftarrow  \textrm{band containing
point } (i,j)$\\
\hphantom{\textbf{do }\textbf{do } }append $Z_{ij}$ to $E_k$

%\addvspace{.5\baselineskip}

Given the lists $E_k$, the second phase of the algorithm computes
which points are visible.  To do this it traverses the grid one band
at a time, reading the list $E_k$ into memory. Once a band is in
memory, it traverses it layer by layer from the viewpoint outward,
counter-clockwise in each layer.  The output of the second phase is a
set of lists $V_k$ with visibility values, one list for each band.
While traversing the grid in this fashion the algorithm maintains the
horizon of the region encompassed so far. More precisely, let $L_{1,i}
(i\ge 1)$ denote the set of points in layers $L_1$ through
$L_i$. Before traversing the next layer $L_{i+1}$, the algorithm knows
the horizon $H_{1,i}$ of $L_{1,i}$.  While traversing the points in
$L_{i+1}$, the algorithm determines for each point $p$ if it is above
or below the horizon $H_{1,i}$ and records this in $V_k$. At the same
time it updates $H_{1,i}$ on the fly to obtain $H_{1,i+1}$. To do so,
the algorithm computes, for each point $p$, the projection $h$ onto
the view screen of the line segment that connects $p$ to the previous
point in the same layer, the algorithm computes the intersection of
$h$ with the current horizon as represented by $H_{1,i}$, and then
updates $H_{1,i}$ to represent the upper envelope of the current
horizon and $h$.  After traversing the entire grid in this manner, the
set of points that have been marked visible during the traversal
constitute the viewshed of $v$. The algorithm is given below only for
the first octant; the other octants are handled similarly:

%\addvspace{.5\baselineskip}
\noindent
\textbf{Algorithm} \textsc{VisBands-ITER}:\\[.25\baselineskip]
$H_{1,0} \leftarrow \emptyset$\\
\textbf{for }$k \leftarrow 1$ \textbf{to} $N_{bands}$ \\
\textbf{do }                    load list $E_k$ in memory\\
\hphantom{\textbf{do }}                   create list $V_k$ in memory
and initialize it as
all invisible\\
\hphantom{\textbf{do }} \textbf{for }  $l \leftarrow  w_{k-1}$ to
$w_k$   //for each layer in the band \\
\hphantom{\textbf{do }} \textbf{do }  //traverse layer $l$ in ccw order\\
\hphantom{\textbf{do } \textbf{do }} \textbf{for } $r \leftarrow 0$
to $-l$ //first octant\\ 
\hphantom{\textbf{do } \textbf{do }} \textbf{do } get elevation
$Z_{rl}$ of $p(r,l)$ from $E_k$\\
\hphantom{\textbf{do } \textbf{do } \textbf{do }} determine if
$Z_{rl}$  is
above   $H_{1,l-1}$\\
\hphantom{\textbf{do } \textbf{do } \textbf{do }} if visible, set
value $V_{rl}$  in $V_k$ as visible\\
\hphantom{\textbf{do } \textbf{do } \textbf{do }}  $h \leftarrow$ projection of $p(r-1,l)p(r,l)$\\
\hphantom{\textbf{do } \textbf{do } \textbf{do }}  merge $h$ into horizon $H_{1,l-1}$\\
%%
%%LT: second octant
% \hphantom{\textbf{do } \textbf{do }}   \textbf{for } $c \leftarrow l-1$ down to $0$\\
%  \hphantom{\textbf{do } \textbf{do }} \textbf{do } get elevation
%  $Z_{rc}$  of $p(r,c)$  from $E_k$\\
%  \hphantom{\textbf{do } \textbf{do } \textbf{do }} determine if
%  $Z_{rc}$ is above
%  $H_{1,l-1}$\\
%  \hphantom{\textbf{do } \textbf{do } \textbf{do }}  if visible, set
%  value $V_{rl}$ in $V_k$ as visible\\
% \hphantom{\textbf{do } \textbf{do } \textbf{do }}  $h \leftarrow$ projection of $p(r,c+1)p(r,c)$\\
% \hphantom{\textbf{do } \textbf{do } \textbf{do }}  merge $h$ into horizon $H_{1,l-1}$\\
% %%
 \hphantom{\textbf{do } \textbf{do }} $H_{1,l} \leftarrow H_{1,l-1}$\\

%\addvspace{.5\baselineskip}

The third and final phase of the algorithm creates the visibility grid
$V$ from the lists $V_k$.  We note that in phase~2 the lists $V_k$ are
stored in the same order as $E_k$, that is, the order in which the
points in the band are encountered during a row-by-row scan of the
grid; keeping points in this order is convenient because it saves an
additional sort, and in the same time this is precisely the order in
which they are needed by phase 3.  Phase 3 is the reverse of phase 1:
for each point $(i,j)$ in the grid in row-major order, it computes the
band $k$ where it falls, accesses list $V_k$ to retrieve the
visibility value of point $(i,j)$, and writes this value to the output
grid $V$.
The crux in this phase is that it simply reads the lists $V_k$
sequentially. The algorithm is given below:

%\addvspace{.5\baselineskip}
%\newpage
\noindent
\textbf{Algorithm} \textsc{CollectBands}:\\[.25\baselineskip]
load list containing band boundaries in memory\\
initialize empty list $V$\\
\textbf{for } $i\leftarrow -m$ \textbf {to} $m$\\
\textbf{do }  \textbf{for } $j\leftarrow -m$ \textbf {to} $m$\\
\hphantom{\textbf{do }}  \textbf{do} $k \leftarrow \textrm{band
  containing point } (i,j) $\\
\hphantom{\textbf{do }\textbf{do }} get value $V_{ij}$ of
point $(i,j)$  from $V_k$\\
\hphantom{\textbf{do }\textbf{do } }append  $V_{ij}$ to list $V$

%\addvspace{.5\baselineskip}

{\bf Efficiency analysis of \visiter.}
We now analyze each phase in \visiter under the assumption that $n \leq c M^2$ for a 
sufficiently small constant $c$. 
The pre-processing phase runs in $O(n)$ time and no \io (does not
access the grid).  The output of this step is a list of $O(N_{bands})
= O(n/M)$ band boundaries, which fits in memory assuming that $n \leq
c M^2$ for a sufficiently small constant $c$.

The first phase, \textsc{BuildBands}, reads the points of the
elevation grid in row-column order, which takes $O(n)$ time and
$O(\scan(n))$ \ios.  With the list of band boundaries in memory, the
band containing a point $(i,j)$ can be computed with, for example, binary
search in $O(\lg n/M) = O(\lg n)$ time and no \io.  The lists $E_k$
are written to in sequential order.
If one block from each band fits in memory, which happens when 
$n \leq c M^2 / B$ for a sufficiently small constant $c$ (so that $N_{bands} = O(n/M) = O(M/B)$), 
then writing the lists $E_k$ directly takes $O(\scan(n)) = O(\sort(n))$ \ios (note that $O(\sort(n))$ and $O(\scan(n))$ are equal if $n = O(M^2/B)$).
If we cannot keep one block of each band in memory, that is, $n > c M^2/B$, 
then we perform a hierarchical distribution as follows: we group the
$N_{bands}$ bands in $O(M/B)$ super-bands, keep a write buffer of one block for each super-band in
memory, distribute the points in the grid to these super-bands, and
recurse on the super-bands to distribute the grid points to individual bands. 
A pass takes $O(\scan(n))$ \ios, overall it takes
$O(\log_{M/B} N_{bands}) = O(\log_{M/B} N/M)$ passes, and thus the
first phase has I/O-complexity $O(\sort(n))$.
In total, the first phase takes $O(n \lg n)$ time and $O(\sort(n))$ I/Os.

The second phase, \textsc{VisBands-ITER}, takes as input the lists $E_k$
and computes the visibility bands $V_k$. We choose $K = c_1 M$ such that the elevations $E_k$ and the visibility map $V_k$ of any band $B_k$ of size $K$ fits in 2/3 of the memory; the remaining 1/3 of the memory is saved for the horizon structure. While processing a band $B_k$ in the second phase, the points in $E_k$ and $V_k$ are not accessed sequentially. However, given the band boundaries, the location of any point in a band can be
determined analytically, and thus the value (elevation or visibility)
of any point in a band can be accessed in constant time, without any
search structure, and without any \io.
Let us denote by $H_{tot}$ the total cumulative size of all partial
horizons $H_{1,l}$: $H_{tot} = \sum_{l=1}^{\sqrt n} |H_{1,l}|$.  The
horizon $H_{1,l}$ is maintained as a list $\{ (t, h)\}$ of horizontal
and vertical coordinates on the view screen, sorted counter-clockwise
(ccw) around the viewpoint.  As the algorithm traverses a layer $l$ in
ccw order, it also traverses $H_{1,l-1}$ in ccw order, and constructs
$H_{1,l}$ in ccw order. To determine whether a point is above the
horizon, it is compared with the last segment in the horizon; if the
point is above the horizon, it is added to the horizon. Thus the
traversal of a layer $l$ runs in $O(|L_l| + |H_{1,l-1}| + |H_{1,l}|)$
time.  Over the entire grid, phase~2 runs in $O(\sum_l(|L_l| +
|H_{1,l}|)) = O(n + H_{tot})$~time.
The IO-complexity of the second phase: The algorithm reads
$E_k$ into memory, and writes $V_k$ to disk at the end. Over all the
bands this takes $O(\scan(n))$ \ios. If the horizon $H_{1,l}$ is small
enough so that it fits in memory (for any $l$), then accessing the
horizon does not use any IO. If the horizon does not fit in memory, we
need to add the cost of traversing the horizon in ccw order, for every
layer,   $O(\scan(\sum_{l=1}^{\sqrt n} |H_{1,l}|)) = \scan(H_{tot})$ \ios.

Finally, the third phase, \textsc{CollectBands}, takes as input the
lists $V_k$ and the list of band boundaries and writes the visibility
map. For $n \leq c M^2$, the list of band boundaries fits in memory. For
any point $(i,j)$ the band containing it can be computed in $O(\lg n)$
time and no IO.  The bands $V_k$ store the visibility values in the
order in which they are encountered in a (row-column) traversal of the
grid. Thus, once the index $k$ of the band that contains point $(i,j)$
is computed, the visibility value of this point is simply the next
value in $V_k$. As with step 1, we distinguish two cases: if the
number of bands is such that one block from each band fit in memory,
then this step runs in $O(n)$ time and $O(\sort(n)) = O(\scan(n))$ \ios.  
Otherwise, this step first performs a multi-level $M/B$-way merge of 
the bands into $O(M/B)$ super-bands so that one block from each can 
reside in main memory; in this case, the complexity of the step is 
$O(n \lg n)$ time and $O(\sort(n))$ \ios.
%
%discussion on the memory requirements
%\paragraph{Putting everything together}
%
Putting everything together, we have the following:
\begin{theorem} \label{th:visiter}
  The algorithm \visiter computes viewsheds in the layers model in
  $O(n \lg n + H_{tot})$ time and $O(\sort(n) + \scan(H_{tot}))$ \ios,
  provided that $n \leq c M^2$ for a sufficiently small constant $c$.
\end{theorem}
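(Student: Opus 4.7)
The plan is to establish the claimed bounds by analysing each of the three phases of \visiter separately, and then summing the resulting costs. The structure mirrors the informal discussion already given: Phase~1 (\textsc{BuildBands}) distributes the input grid into per-band elevation lists $E_k$; Phase~2 (\textsc{VisBands-ITER}) computes the visibility lists $V_k$ by sweeping band by band and incrementally maintaining the horizon; Phase~3 (\textsc{CollectBands}) merges the $V_k$ back into the row-major output grid. A preprocessing step computes the $O(n/M)$ band boundaries analytically in $O(n)$ time without any \io; under the assumption $n \leq cM^2$ this list fits in memory and is available to all phases for constant-time lookup of the band index of any $(i,j)$.

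For Phases~1 and~3, I would argue that they are essentially symmetric. Each performs a single row-major traversal and a distribution of one value per grid point into $N_\mathit{bands} = O(n/M)$ output streams. If $n \leq cM^2/B$, so that one block per band fits in memory, each phase is a single scan costing $O(n)$ time and $O(\scan(n))$ \ios. Otherwise, a $(M/B)$-way hierarchical distribution is used: this is a standard cache-aware distribution sweep that takes $O(\log_{M/B}(n/M))$ passes, giving $O(n \log n)$ time and $O(\sort(n))$ \ios. The lookup of the band containing $(i,j)$ adds only a $O(\log n)$ binary-search factor on the CPU side. This gives the $O(n\log n)$ and $O(\sort(n))$ contributions of the theorem.

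Phase~2 is the heart of the argument and the main obstacle. Here I would exploit the fact that each band $B_k$ is sized $K = c_1 M$ so that $E_k$ and $V_k$ together occupy at most two-thirds of memory, leaving a third for the current horizon $H_{1,l}$. The CPU analysis relies on a key structural observation: if layer $L_l$ is traversed in counter-clockwise order, then the projections $h$ of consecutive edges are encountered in counter-clockwise order along the horizon, so the merge that produces $H_{1,l}$ from $H_{1,l-1}$ reduces to a linear scan that pops from the tail of $H_{1,l-1}$ and appends to $H_{1,l}$. Each layer $l$ is therefore processed in $O(|L_l| + |H_{1,l-1}| + |H_{1,l}|)$ time; summing over all layers yields $O(n + H_\mathit{tot})$ time. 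For the \ios, loading $E_k$ and writing $V_k$ over all bands costs $O(\scan(n))$; if all partial horizons fit in memory there is no additional \io, and otherwise the sequential (ccw) access pattern of the horizon lists means they can be streamed at cost $O(\scan(H_\mathit{tot}))$. The subtlety I would need to handle carefully is showing that the tail-scan of $H_{1,l-1}$ is really amortizable against the sizes of the horizons, rather than against $|L_l|$ alone, because in the worst case a single new edge may pop many old segments; standard stack-based upper-envelope amortization resolves this.

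Summing the three phases gives total time $O(n \log n + H_\mathit{tot})$ and total \io $O(\sort(n) + \scan(H_\mathit{tot}))$, which is exactly the claim of Theorem~\ref{th:visiter}. The assumption $n \leq cM^2$ is used only to ensure that the band-boundary list fits in memory so that per-point band lookup incurs no \io; all other memory demands are already guaranteed by the choice $K = c_1 M$.
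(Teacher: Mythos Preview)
Your proposal is correct and follows essentially the same approach as the paper's own analysis: the same three-phase decomposition, the same case split on whether $n \leq cM^2/B$ for the distribution/merge in Phases~1 and~3, and the same per-layer linear-scan argument for Phase~2 yielding the $O(n + H_{\mathit{tot}})$ and $O(\scan(H_{\mathit{tot}}))$ terms. Your explicit mention of the stack-based amortization for the horizon merge is slightly more careful than the paper's informal treatment, but otherwise the arguments coincide.
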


Furthermore, if $n= O(M^2/B)$ and the partial horizons $H_{1,l}$ are
small enough to fit in memory for any $l$, the overall IO complexity
becomes $O(\scan(n))$~\ios.
We note that when $n = \Omega(M^2)$ the algorithm can be adapted using
standard techniques to run in the same bounds from
Theorem~\ref{th:visiter}; we do not detail on this because it has no
relevance in practice.
%[I DON'T UNDERSTAND: HOW DO YOU GET RID OF THE SORTING THEN? LT: i
%don't; same bounds means sort(n). i moved this paragraph first]

{\bf Discussion.}
Phase 1 and 3 of the algorithm are very simple and perform a scanning
pass over the grid and the bands, provided that $n \leq c M^2/B$: Phase 1
reads the input elevation grid sequentially and writes the elevation
bands sequentially; Phase~3 reads the visibility bands sequentially
and writes the visibility grid sequentially.  We found this condition
to be true in practice on our largest test grid ($28GB$) and with as
little as $.5GB$ of RAM. With more realistic value of $M=8 GB$ (and
$B=16KB$), the condition is true for $n$ up to $10^{15}$ points.
Thus, handling the sub-case $n \leq c M^2/B$ separately in the algorithm
provides a simplification and a speed-up without restricting
generality.
%5 for practical values of $n,M,B$.

Phase 2, which scans partial horizons $H_{1,l}$ for every layer, runs
in $O(n + H_{tot})$ time and $O(\scan(n + H_{tot}))$ \ios.  
As we will prove below, in the worst case $|H_{1,l}| = \Theta(l^2)$, and the running time of the
second phase could be as high as $O(H_{tot}) = O(\sum_{l=1}^{O(\sqrt n)} \Theta(l^2)) = O(n \sqrt n)$,
%(and $O(\scan(n \sqrt n \alpha(n)))$ \ios)
with handling the horizon dominating the running time.  The worst-case
complexity is high but, on the other hand, if $H_{1,l}$ are small,
they fit in memory and the algorithm is fast. In particular if
$H_{tot}$ is $O(n)$, then phase 2 is linear.  This seems to be the
case on all terrains and all viewpoints that we tried and may be a
feature of realistic terrains. In
Section~\ref{sec:results} we'll discuss our empirical findings in
more detail.

\newcommand{\xwd}[1]{\mathrm{width}(#1)} {\bf Worst-case complexity of
  the horizon.}  Since the horizon is the upper envelope of the
projections of grid line segments onto the view screen, its complexity
is at most $O(n \alpha(n))$, where $n$ is the number of line
segments~\cite{HS86,WS88}. We will now show that we can prove a better
bound for our setting. Let the \emph{width} $\xwd{s}$ of a line
segment $s$ be the length of its projection on a horizontal line. We
need the following lemma.
\begin{figure}
\centering
\includegraphics[width=0.8\hsize]{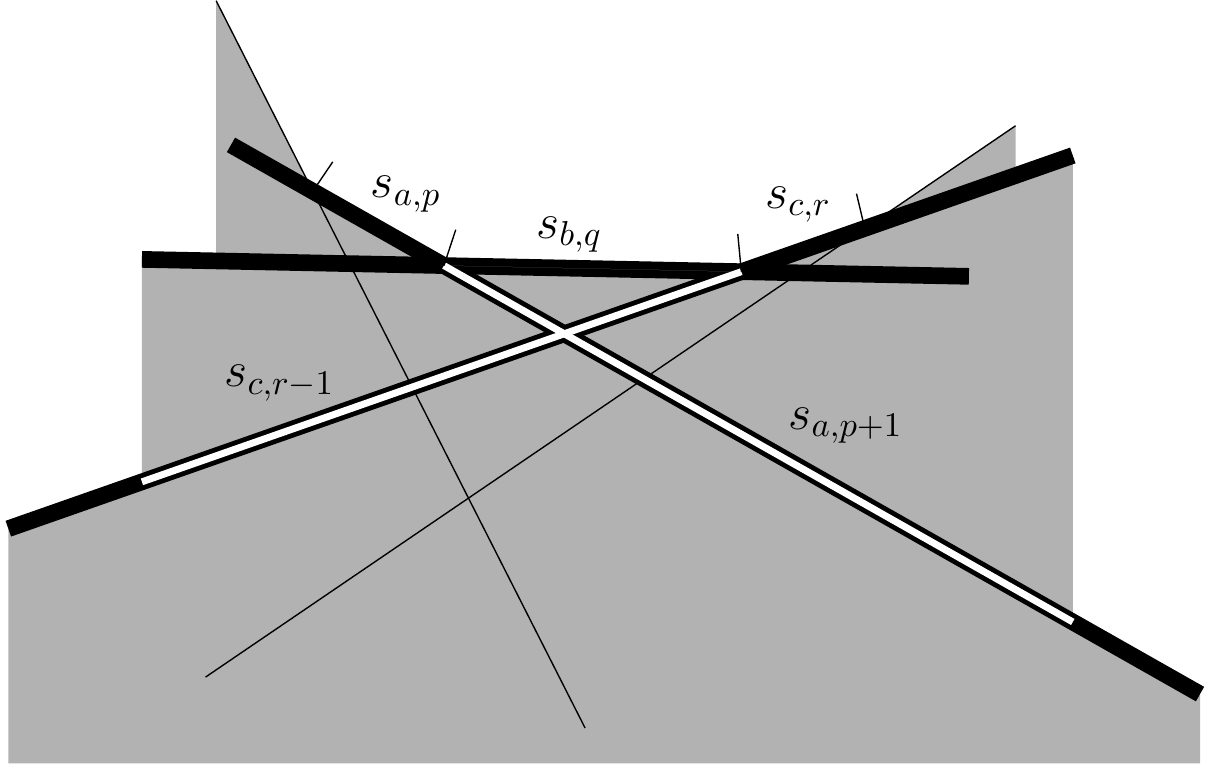}
\caption{Illustration of the proof of Lemma~\ref{lem:boundedwidth}. The white subsegments take the charge for $s_{b,q}$; together they are at least as wide as $s_b$, since they stick out from under $s_b$ on both sides.}
\label{fig:boundedwidth}
\end{figure}

\begin{lemma}\label{lem:boundedwidth}
If $S$ is a set of $n$ line segments in the plane, such that the widths of the line segments of $S$ do not differ in length by more than a factor $d$, then the upper envelope of $S$ has complexity $O(dn)$.
\end{lemma}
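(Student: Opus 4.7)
The plan is to use a charging argument along the lines suggested by Figure~\ref{fig:boundedwidth}. Write $w_{\min}=\min_{s\in S}w(s)$, so every segment has width in $[w_{\min},dw_{\min}]$. Decompose the upper envelope into maximal pieces $p_1,\ldots,p_N$ in left-to-right order, each lying on a single segment of $S$. For $s_b\in S$, let $s_{b,1},s_{b,2},\ldots$ denote the pieces of the envelope supported by $s_b$, in left-to-right order. The $n$ leftmost pieces $s_{b,1}$ contribute at most $n$; the task is to bound the number of \emph{returning} pieces $s_{b,q}$ with $q\ge 2$.

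The central step is to assign, to each returning piece $s_{b,q}$, a collection $W_{b,q}$ of subsegments of envelope pieces such that
\begin{itemize}
\item[(i)] the horizontal widths in $W_{b,q}$ sum to at least $w_b$, and
\item[(ii)] any subsegment of the envelope lies in $O(1)$ collections $W_{b,q}$.
\end{itemize}
Figure~\ref{fig:boundedwidth} indicates how to pick $W_{b,q}$: the witness subsegments lie on either side of the arc of the envelope between $s_{b,q-1}$ and $s_{b,q}$. The envelope had to leave $s_b$ at the right end of $s_{b,q-1}$ and return to $s_b$ at the left end of $s_{b,q}$, and by tracing the envelope outward I would identify supporting segments whose $x$-ranges extend past $l_b$ on the left and past $r_b$ on the right; the total $x$-extent of these witnesses therefore strictly exceeds $w_b$.

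With (i) and (ii) in hand the lemma follows by double counting. Since pieces of any single $s\in S$ lie on disjoint sub-intervals of that segment's $x$-range,
\[
\sum_{p\in P}w(p)\;\le\;\sum_{s\in S}w(s)\;\le\;n\cdot dw_{\min}.
\]
By~(ii), every envelope subsegment is used $O(1)$ times, so the total charge satisfies
\[
\sum_{q\ge 2}\mathrm{width}(W_{b,q})\;\le\;O(1)\cdot\sum_{p\in P}w(p)\;\le\;O(n\,dw_{\min}).
\]
By~(i) each returning piece contributes at least $w_b\ge w_{\min}$ to this sum, so there are at most $O(dn)$ returning pieces, and therefore $N=O(dn)$ in total.

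The main obstacle is making the construction of $W_{b,q}$ precise so that both (i) and (ii) hold at once. For~(i) one must argue that the envelope in the vicinity of the arc really does ``stick out'' past both endpoints of $s_b$; this should follow from a case analysis of the events at the two ends of the arc (segment endpoint vs.\ crossing) combined with the bounded-width assumption, which keeps every envelope piece within horizontal extent $dw_{\min}$ and thus forces a near-endpoint witness to appear within a constant number of envelope steps. For~(ii) one must show that a given envelope subsegment cannot be re-used as a witness for too many returning pieces; this is the analogue of the non-nesting property used in Davenport--Schinzel arguments, and here it is the bounded-width constraint that rules out deep nesting and lets us obtain the sharper $O(dn)$ bound rather than the general $O(n\,\alpha(n))$ upper-envelope bound.
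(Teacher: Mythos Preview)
Your overall framework---charge each non-trivial envelope piece with width at least $w_{\min}$ and double-count against $\sum_s w(s)\le dn\,w_{\min}$---is exactly the paper's. But you are overcomplicating the construction of the witnesses and, more importantly, misplacing the role of the bounded-width hypothesis.

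The paper does not trace the envelope or examine the whole arc between $s_{b,q-1}$ and $s_{b,q}$. For an envelope piece $u_h=s_{b,q}$ whose endpoints are both crossings, write $u_{h-1}=s_{a,p}$ and $u_{h+1}=s_{c,r}$ and charge $u_h$ to the two \emph{hidden} subsegments $s_{a,p+1}$ and $s_{c,r-1}$ (the parts of $s_a$ and $s_c$ that dive under the envelope immediately next to $u_h$). Your property~(ii) is then immediate from the definition: a hidden subsegment can be charged at most once from the left and once from the right. Your property~(i) follows not from bounded width but from the elementary fact that two line segments cross at most once. At the left end of $u_h$, $s_a$ passes below $s_b$ and therefore stays below $s_b$ for the rest of their common $x$-range; so if $s_a$ ever reappears on the envelope to the right, that must happen beyond the right endpoint of $s_b$. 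Hence $s_{a,p+1}$, unless it is the rightmost subsegment of $s_a$, stretches from inside $s_b$'s $x$-range to past its right end. Symmetrically $s_{c,r-1}$ reaches past the left end. Together their widths exceed $w(s_b)$.

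Bounded width enters only at the very end, exactly as you already wrote it: each charge is worth at least $w(s_b)\ge \tfrac{1}{dn}\sum_i w(s_i)$, and the total width available is $\sum_i w(s_i)$, giving at most $2dn$ such pieces. The leftover cases---envelope pieces touching a segment endpoint, and charges that land on a leftmost or rightmost hidden subsegment---are each $O(n)$ for trivial counting reasons. No Davenport--Schinzel non-nesting argument and no case analysis on ``segment endpoint vs.\ crossing'' is needed; the single-crossing property of straight segments does all the geometric work.
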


\begin{proof}
Let $s_1,...,s_n$ be the segments of $S$. Each segment $s_i$ consists of a number of maximal subsegments such that the interior of each subsegment lies either entirely on or entirely below the upper envelope. Let the subsegments of $s_i$ be indexed by $s_{i,j}$, such that the subsegments of $s_i$ from left to right are indexed by consecutive values of $j$, and such that $s_{i,j}$ is part of the upper envelope if and only if $j$ is odd. Let $u_1,...,u_m$ be the line segments of the upper envelope.

We consider two categories of line segments on the upper envelope: (i) segments that have at least one endpoint that is an endpoint of a segment of $S$; (ii) segments whose endpoints are no endpoints of segments in $S$.

Clearly, there can be only $O(n)$ segments of category (i), one segment to the left of each endpoint of a segment in $S$ and one segment to the right of each endpoint.
% of a segment in $S$.

We analyze the number of segments of category (ii) with the following
charging scheme. Given a segment $u_h = s_{b,q}$ of category (ii), let
$s_{a,p}$ be the segment $u_{h-1}$ and let $s_{c,r}$ be the segment
$u_{h+1}$. We charge $u_h$ to $s_{a,p+1}$ and $s_{c,r-1}$. Observe
that with this scheme, each segment $s_{i,j}$ can only be charged
twice, namely by the successor of $s_{i,j-1}$ on the upper envelope
and by the predecessor of $s_{i,j+1}$ on the upper envelope. Since
each segment $s_i$ has only one leftmost and only one rightmost
subsegment, and each is charged at most twice (in fact, once), there
are at most $O(n)$ segments of category (ii) that put charges on
leftmost or rightmost subsegments. If neither $s_{a,p+1}$ is the
rightmost subsegment of $s_a$ nor $s_{c,r-1}$ is the leftmost
subsegment of $s_c$, then $s_a$ must appear on the upper envelope
again somewhere to the right of the right end of $s_b$, and $s_c$ must
appear on the upper envelope again somewhere to the left of the left
end of $s_b$ (see Figure~\ref{fig:boundedwidth}) Therefore
$\xwd{s_{a,p+1}} + \xwd{s_{c,r-1}} > \xwd{s_b} \geq
\frac{1}{dn}\sum_{i=1}^n \xwd{s_i}$. Since each subsegment is charged
at most twice, the total length of subsegments charged is at most
$2\sum_{i=1}^n \xwd{s_i}$.  Thus there are less than $2dn$ segments of category (ii) that put charges on subsegments that are not leftmost or rightmost.
\end{proof}

Note that the widths of the projections of the edges of layer $l$ on
the view screen vary between $1/(l+1)$ and $1/(4l-2)$. Therefore, the
widths of the projections of the edges of the $l/2$ outermost layers
in a square region of $l$ layers around $v$ differ by less than a
factor 8. Thus, from Lemma~\ref{lem:boundedwidth} we get:
% the following:
%

\begin{corollary}\label{cor:outerhorizon}
If $S$ consists of the $O(l^2)$ edges of the $l/2$ outermost layers in a square region of $l$ layers around $v$, then the horizon of $S$ has complexity $O(l^2)$.
\end{corollary}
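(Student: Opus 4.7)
The strategy is to apply Lemma~\ref{lem:boundedwidth} essentially as a black box. All that remains is to verify its two hypotheses for the segment set $S$ in question: that $|S|=O(l^2)$, and that the projection widths of segments in $S$ differ only by a constant factor. The conclusion then follows immediately by plugging in $n = O(l^2)$ and $d = O(1)$.

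First I would count the segments. Each layer $l'$ around $v$ consists of $\Theta(l')$ grid edges (four sides of a square frame of side length $\Theta(l')$), so the $l/2$ outermost layers contribute $\sum_{l' = l/2 + 1}^{l} \Theta(l') = \Theta(l^2)$ edges. Next I would bound the widths. The remark preceding the statement already supplies the key numerical fact: the projected widths on the view screen of the edges of layer $l'$ lie between $1/(l'+1)$ and $1/(4l'-2)$. For $l'$ ranging over $\{l/2+1,\ldots,l\}$, the largest projected width is at most $1/((l/2+1)+1) \le 2/l$ (from the innermost of the considered layers) and the smallest is at least $1/(4l-2) \ge 1/(4l)$ (from the outermost). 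Their ratio is at most $8$, so we may take $d = 8$ in Lemma~\ref{lem:boundedwidth}.

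With both hypotheses verified, Lemma~\ref{lem:boundedwidth} yields an upper envelope of complexity $O(dn) = O(8 \cdot l^2) = O(l^2)$, which is exactly the claimed bound on the horizon. There is no real obstacle here: the intellectual content was already absorbed into the lemma, and the corollary reduces to checking that the $l/2$ outermost layers are ``thick enough'' that the innermost and outermost edge projections stay within a constant factor of each other. The only point worth double-checking is the constants at the boundary (e.g.\ whether ``$l/2$ outermost'' is interpreted as indices $\lceil l/2 \rceil + 1, \ldots, l$ or similar), but this only changes the constant hidden in $d$ and does not affect the asymptotic bound.
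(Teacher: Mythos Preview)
Your proposal is correct and follows exactly the paper's approach: the text preceding the corollary already notes that the projected widths of edges in the $l/2$ outermost layers differ by less than a factor~$8$, and the corollary is obtained by plugging $n=O(l^2)$ and $d=8$ into Lemma~\ref{lem:boundedwidth}. Your width bounds and segment count are the same as the paper's, so there is nothing to add.
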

%
%Furthermore:
%

\begin{lemma}\label{lem:combine}
If $S$ and $T$ are two $x$-monotone polylines of $m$ and $n$ vertices, respectively, then the upper envelope of $S$ and $T$ has at most $2(m+n)$ vertices.
\end{lemma}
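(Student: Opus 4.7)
The plan is to view $S$ and $T$ as graphs of piecewise-linear functions $f_S(x)$ and $f_T(x)$ (well-defined because the polylines are $x$-monotone), so that the upper envelope is simply the graph of the pointwise maximum $\max(f_S, f_T)$. I will classify each vertex of the upper envelope into three disjoint types: (i) a vertex inherited from $S$ (i.e., an $x$-coordinate where $f_S$ changes slope and $f_S \geq f_T$), (ii) a vertex inherited from $T$ (symmetric), and (iii) a crossing of $f_S$ and $f_T$. Types (i) and (ii) contribute at most $m$ and $n$ vertices respectively, so the whole task reduces to bounding the number of crossings in type (iii).

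The key observation, and the one doing all the work, is that on the common $x$-domain of $S$ and $T$, the difference $g(x) = f_S(x) - f_T(x)$ is itself piecewise linear, with breakpoints only at the $x$-coordinates of vertices of $S$ or $T$. Hence $g$ has at most $m+n$ breakpoints and therefore at most $m+n-1$ maximal linear pieces. On each such piece, $g$ is affine and thus has at most one zero (pieces on which $g \equiv 0$ can be handled by a standard general-position perturbation or by simply noting that in that case both polylines coincide on the piece and contribute no extra envelope vertex). This gives at most $m+n-1$ crossings, i.e., at most $m+n-1$ type-(iii) vertices.

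I then need to address the portions of the $x$-axis where only one of the polylines is defined; there the upper envelope simply equals that polyline, contributing only original polyline vertices, which are already accounted for in the bounds for types (i) and (ii). Summing the three contributions yields at most $m + n + (m+n-1) = 2(m+n) - 1 < 2(m+n)$ vertices, which is the stated bound.

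The only subtlety, and the place where I would be most careful in the write-up, is the bookkeeping of coincidences: a crossing that happens exactly at a vertex of $S$ or $T$ could in principle be counted in two categories, but since we are proving an upper bound this can only help, not hurt. Similarly, vertical drops at the endpoints of the common $x$-domain require a trivial case split rather than any new idea. Once these routine details are dispatched, the argument above delivers the bound cleanly.
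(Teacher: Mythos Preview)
Your proof is correct and follows essentially the same approach as the paper: both classify envelope vertices into inherited vertices (at most $m+n$) and crossings, and both bound the number of crossings by $m+n-1$, giving $2(m+n)-1$ in total. The only cosmetic difference is that the paper phrases the crossing bound as an alternation argument (between any two crossings there must lie an original vertex of $S$ or $T$), whereas you phrase it via the difference function $g=f_S-f_T$ being piecewise linear with at most $m+n-1$ pieces; these are two expressions of the same observation.
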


\begin{proof}
There are two types of vertices on the upper envelope: vertices of $S$ or $T$, and intersection points between edges of $S$ and $T$. Clearly, there are at most $m + n$ vertices of the first type. Between any pair of vertices of the second type, there must be a vertex of the first type. Thus there are at most $m + n - 1$ vertices of the second type.
\end{proof}

\begin{theorem}\label{thm:gridhorizon}
If $S$ consists of $l$ layers in a square region around $v$, then the horizon of $S$ has complexity $O(l^2)$ in the worst case.
\end{theorem}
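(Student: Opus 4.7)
My plan is to prove the bound by induction on $l$, using Corollary~\ref{cor:outerhorizon} to peel off the outer half of the layers at each step and Lemma~\ref{lem:combine} to combine the resulting horizon with the one obtained recursively from the inner half.

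Let $H(l)$ denote the worst-case complexity of the horizon of a square region of $l$ concentric layers around $v$. Split $S$ into $S_{\mathrm{out}}$, the set of edges of the $l/2$ outermost layers, and $S_{\mathrm{in}}$, the set of edges of the $l/2$ innermost layers. By Corollary~\ref{cor:outerhorizon} the horizon of $S_{\mathrm{out}}$ has complexity at most $c\, l^2$ for some constant $c$. The set $S_{\mathrm{in}}$ is itself $l/2$ layers in a square region around $v$, so by the inductive hypothesis its horizon has complexity at most $H(l/2)$. Since the overall horizon of $S$ is the upper envelope of the horizons of $S_{\mathrm{out}}$ and $S_{\mathrm{in}}$, and each horizon, viewed as a function of the parameter $t \in [0,4]$ on the view screen, is an $x$-monotone polyline in that parameter, Lemma~\ref{lem:combine} yields
\[
H(l) \;\leq\; 2\bigl(c\, l^2 + H(l/2)\bigr).
\]
Solving this recurrence with base case $H(1) = O(1)$ gives $H(l) = O(l^2)$, as the $c\, l^2$ term dominates the geometric series generated by the recursion.

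The main subtlety I expect is to justify applying Lemma~\ref{lem:combine}. The horizon is defined as a function on the closed interval $[0,4]$ that wraps around the view screen, and one must verify that the upper envelope of two such horizons is well-defined and that the merging lemma's $x$-monotonicity hypothesis is satisfied; this is done by treating the horizon piece by piece on each of the four sides of the view screen (for example the two pieces on $[0,2]$ and $[2,4]$ in the paper's definition), where each horizon is an $x$-monotone polyline in the parameter $t$. Combining the two horizons on each piece separately and summing the resulting bounds multiplies the constant in the recurrence by at most $4$, which does not affect the $O(l^2)$ conclusion.
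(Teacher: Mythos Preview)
Your proposal is correct and follows essentially the same approach as the paper: split into inner and outer halves of layers, apply Corollary~\ref{cor:outerhorizon} to the outer half, recurse on the inner half, and combine via Lemma~\ref{lem:combine} to obtain the recurrence $T(l) \leq 2T(l/2) + O(l^2)$, which solves to $O(l^2)$. Your additional paragraph addressing the $x$-monotonicity hypothesis by treating the horizon piecewise on the sides of the view screen is a detail the paper glosses over, but it is a reasonable and correct justification.
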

\begin{proof}
Let $T(l)$ be the complexity of the horizon of the innermost $l$ layers around $S$. 
By Lemma~\ref{lem:combine}, $T(l)$ is at most twice the complexity of the horizon of the innermost $l/2$ layers, plus twice the complexity of the remaining $l/2$ layers. By Corollary~\ref{cor:outerhorizon}, the latter is $O(l^2)$, and therefore we have $T(l) \leq 2T(l/2) + O(l^2)$. This solves to $T(l) = O(l^2)$.
\end{proof}

\subsection{A refined algorithm: \visdac}
\label{sec:visdac}

This section describes our second algorithm for computing viewsheds in
the layers model, \visdac. \visdac is a divide-and-conquer refinement
of \visiter and uses the same general steps: it splits the grid into
bands, computes visibility one band at a time, and creates the
visibility grid from the bands. The first phase (\textsc{BuildBands})
and last phase (\textsc{CollectBands}) are the same as in \visiter;
the only phase that is different is computing visibility in a band,
\textsc{VisBands-DAC}, which aims to improve the time to merge
horizons in a band using divide-and-conquer.
% The algorithm is sketched below.
%
% \addvspace{.5\baselineskip}
% \noindent
% \textbf{Algorithm} \textsc{VisBands-DAC}:\\
% %
% [.25\baselineskip]
% %
% $H_{prev} \leftarrow \emptyset$\\
% \textbf{for }$k \leftarrow 1$ \textbf{to} $N_{bands}$\\
% \textbf{do }                    Load list $E_k$ in memory\\
% %
% \hphantom{\textbf{do }} Create list $V_k$ in memory
% and initialize it as all visible\\
% %
% \hphantom{\textbf{do }} Mark as invisible all points in $V_k$ that
% are below $H_{prev}$\\
% %
% \hphantom{\textbf{do }} $H \leftarrow \textsc{Dac-Band}(E_k, V_k,
% w_{k-1}, w_k-1)$\\
% %
% \hphantom{\textbf{do }} $H_{prev} \leftarrow \texttt{merge}(H, H_{prev})$
% \vspace{.5\baselineskip}
%

Similar to \textsc{VisBands-Iter}, \textsc{VisBands-DAC} processes the
bands one at a time: for each band $k$ it loads list $E_k$ in memory,
creates a visibility list $V_k$ and initializes it as all visible. It
then marks as invisible all points that are below $H_{prev}$, where
$H_{prev}$ represents the horizon of the first $k-1$ bands (more on
this below).  The bulk of the work in \textsc{visBands-DAC} is done by
the recursive function \textsc{Dac-Band}, which computes and returns the
horizon $H$ of $E_k$, and updates $V_k$ with all the points that are
invisible inside $E_k$.  This is described in detail below. Finally,
the horizon $H$ is merged with $H_{prev}$ setting it up for the next
band.

In order to mark as invisible the points in band $k$ that are below
$H_{prev}$ we 
% proceed as follows: If $H_{prev}$ fits in memory, we traverse $E_k$
% and query each point whether it is above or below $H_{prev}$; this
% takes no IO.  If $H_{prev}$ does not fit in memory, then
first sort the points in the band by azimuth angle and then
scan them in this order while also scanning $H_{prev}$ (recall that
$H_{prev}$ is stored in ccw order).  Let $(a_1=0, h_1), (a_2,
h_2)$ be the first two points in the horizon $H_{prev}$.  For every
point $p=(a,h)$ in $E_k$ with azimuth angle $a \in [a_1, a_2]$, we
check whether its height $h$ is above or below the height of segment
$(a_1,h_1)(a_2,h_2)$ in $H_{prev}$.
%we interpolate the height $h'$ of $a$ on the segment $(\alpha_1,
%\alpha_2)$; if $h' > h$ then we mark the point as invisible in
%$V_k$. 
When we encounter a point in $E_k$ with $a > a_2$, we proceed to the
next point in $H_{prev}$ and repeat.

The recursive algorithm \textsc{DAC-Band} takes as arguments an elevation
band $E_k$, a visibility band $V_k$, and the indices $i$ and $j$ of
two layers in this band ($w_{k-1} \le i \le j < w_k$). It computes
visibility for the points in layers $i$ through $j$ (inclusive) in
this band, and marks in $V_k$ the points that are determined to be
invisible during this process. In this process it also computes and
returns the horizon of layers $i$ through $j$ in this band.
\textsc{Dac-Band} uses divide-and-conquer in a straightforward way:
first it computes a ``middle'' layer $m, i \le m \le j$ between $i$ and
$j$ that splits the points in layers $i$ through $j$ approximately in
half. Then it computes visibility and the horizon recursively on each
side of $m$; marks as invisible all points in the second half that
fall below the horizon of the first half; and finally, merges the two
horizons on the two sides and returns the result.

%\addvspace{.5\baselineskip}
\noindent
\textbf{Algorithm} \textsc{Dac-Band}($E_k,  V_k, i,j$):\\
[.25\baselineskip]
\textbf{if } $i==j$\\
\hphantom{\textbf{if }} $h \leftarrow$ compute-layer-horizon(i)\\
\hphantom{\textbf{if }} return $h$\\
\textbf{else } \\
\hphantom{\textbf{else }} $m \leftarrow$ middle layer between $i$ and $j$\\
\hphantom{\textbf{else }} $h_1 \leftarrow \textsc{Dac-Band}(E_k, V_k, i,m)$\\
\hphantom{\textbf{else }} $h_2 \leftarrow \textsc{Dac-Band}(E_k, V_k, m+1,j)$\\
\hphantom{\textbf{else }} mark invisible all points in $L_{m+1, j}$
that fall below $h_1$\\
\hphantom{\textbf{else }} $h \leftarrow$ merge($h_1,h_2$)\\
\hphantom{\textbf{else }} return $h$

{\bf Efficiency analysis of \visdac.}
The analysis of the first and last phase of \visdac,
\textsc{buildBands} and \textsc{CollectBands}, is the same as in
Section~\ref{sec:visiter}.  We now analyze
\textsc{VisBands-DAC}. Recall that we can assume that $E_k$ and $V_k$
both fit in memory during this phase (see
Section~\ref{sec:visiter}). The elevation and visibility of any point
in a band can be accessed in $O(1)$ time, without any search structure
and without any I/O.   We denote $H^B_{1,i}$ the horizon of (the
points in) the first $i$ bands; and by $H^B_{tot} = H^B_{1,1} +
H^B_{1,2} + H^B_{1,3} + ...= \sum_{i=1}^{N_{bands}} H^B_{1,i}$.

\begin{itemize} 
\item Marking as invisible the points in $E_k$ that are below
  $H_{prev}$ (here $H_{prev}$ represents $H^B_{1,k-1}$): this can be
  done by first sorting $E_k$ and then scanning $H^B_{1,k-1}$ and
  $E_k$ in sync. Over the entire grid, this takes $O(n \lg n +
  H^B_{tot})$ CPU and $O(\scan(n) + \scan(H^B_{tot}))$ \ios.

\item Merging horizons: After \textsc{Dac-band} is called in a band,
  the returned horizon is merged with $H_{prev}$. Two horizons can be
  merged in linear time and \ios. Over the entire grid this is
  $O(H^B_{tot})$ time and $O(\scan(H^B_{tot}))$~\ios.

\item \textsc{Dac-band}: This is a recursive function, with the
  running time given by the recurrence $T(k) = 2T(k/2) + \texttt{merge
    cost + update cost}$, where $k$ is the number of points in the
  slice between layers $i$ and $j$ given as input.  The base case
  computes the horizon of a layer $l$, which takes linear time wrt to
  the number of points in the layer.  Summed over all the layers in
  the slice the base case takes $O(\sum_{l=i}^{j}|L_l|) = O(k)$ time
  and no \io (band is in memory).

\item The update time in \textsc{Dac-band} represents the time to mark
  as invisible all points in the second half that fall below the
  horizon $h_1$ of the first half.  Recall that a band fits in memory
  and thus an input slice in \texttt{Dac-Band} fits in memory.  If the
  band is sorted, the update can be done as above in $O(k + |h_1|) =
  O(k)$ time (by Theorem~\ref{thm:gridhorizon} we have $|h_1| = O(k)$).
  % and no \io.  ; note that the horizons encountered in the calls to
  % \texttt{Dac-Band} are horizons \emph{within} the band, with
  % worst-case complexity $O(M)$.
%

\item The merge time in \textsc{Dac-band} represents the time to merge
  the horizons $h_1$ and $h_2$ of the first and second half of the
  slice, respectively.  This takes $O(|h_1| + |h_2|) = O(k)$ time.
%  and no \ios.

\item Putting it all together in the recurrence relation we get $T(k)
  = 2T(k/2) + O(k)$, which solves to $O(k \lg k)$ time.  Summed over
  all bands in the grid \textsc{Dac-Band} runs in $O(n \lg n)$ time
  and $O(\scan(n))$ \ios.

\end{itemize}

 Overall we have the following:

 \begin{theorem}
   The algorithm \visdac computes viewsheds in the layers model in
   $O(n \lg n + H^B_{tot})$ time and $O(\sort(n) + \scan(H^B_{tot}))$
   \ios, provided that $n \le cM^2$, for a sufficienly small constant
   $c$.
 \end{theorem}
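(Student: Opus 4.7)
The plan is to assemble the theorem's bound by adding up the three phases of \visdac, using the detailed per-phase analysis already sketched just above the theorem statement. The assumption $n \le cM^2$ guarantees that the list of $O(n/M)$ band boundaries fits in memory, and the stronger choice $K = c_1 M$ of the band size ensures that for each band $B_k$ both $E_k$ and $V_k$ fit in $2/3$ of the memory, leaving $\Theta(M)$ space for the horizon data structures used by \textsc{Dac-Band}.

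First I would argue that the first and third phases of \visdac inherit the bounds from \visiter verbatim, since \textsc{BuildBands} and \textsc{CollectBands} are unchanged. Reading the elevation grid row by row and distributing its values into the band lists $E_k$ takes $O(n \lg n)$ time (to locate the band of each point by binary search in the list of boundaries) and $O(\sort(n))$ I/Os (a single scanning distribution step when $n \le cM^2/B$, or a hierarchical distribution in $O(\log_{M/B}(n/M))$ passes otherwise); the same bound applies symmetrically to reassembling the visibility grid from the $V_k$ lists in phase~3.

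Next I would sum the three cost components of \textsc{VisBands-DAC} enumerated in the bullet list preceding the theorem. The per-band preprocessing that marks as invisible all points of $E_k$ lying below $H_{prev} = H^B_{1,k-1}$ costs, summed over all bands, $O(n \lg n)$ time for sorting the $E_k$ by azimuth plus $O(H^B_{tot})$ for the synchronized scan; the I/O cost is $O(\scan(n) + \scan(H^B_{tot}))$. Cross-band horizon merges add $O(H^B_{tot})$ time and $O(\scan(H^B_{tot}))$ I/Os. For the recursive call \textsc{Dac-Band} on band~$B_k$, the key observation is that, since $B_k$ fits in memory, all horizons arising within the recursion also fit in memory, and by Theorem~\ref{thm:gridhorizon} every horizon of a subslice of $k$ points has complexity $O(k)$; hence the merge and update at each recursion node cost $O(k)$, yielding the recurrence $T(k) = 2T(k/2) + O(k)$ with solution $O(k \lg k)$, and summed over all bands $O(n \lg n)$ time and $O(\scan(n))$ I/Os (no I/O is incurred inside a band because the band resides in memory).

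Adding the three phases gives $O(n \lg n + H^B_{tot})$ time and $O(\sort(n) + \scan(H^B_{tot}))$ I/Os, as claimed. The main subtlety I expect is justifying that \textsc{Dac-Band} indeed stays within linear time per recursion level: this relies on the horizon bound of Theorem~\ref{thm:gridhorizon} applied to sub-slices of consecutive layers, which legitimates the $O(k)$ merge/update cost and prevents the $\alpha(n)$ factor that a naive upper-envelope argument would introduce. Once that is in place, the remaining bookkeeping is a straightforward summation over bands and recursion levels.
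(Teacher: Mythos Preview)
Your proposal is correct and follows essentially the same approach as the paper: you add up the costs of the three phases, reuse the \visiter analysis for \textsc{BuildBands} and \textsc{CollectBands}, and for \textsc{VisBands-DAC} you combine the per-band sort-and-scan against $H_{prev}$, the cross-band horizon merges, and the recurrence $T(k)=2T(k/2)+O(k)$ for \textsc{Dac-Band}, invoking Theorem~\ref{thm:gridhorizon} to justify the $O(k)$ merge/update cost. The one subtlety you flag---that Theorem~\ref{thm:gridhorizon} must apply to an arbitrary contiguous range of layers, not only the first $l$---is exactly the point the paper glosses over with ``by Theorem~\ref{thm:gridhorizon} we have $|h_1|=O(k)$''; the underlying proof via Lemma~\ref{lem:boundedwidth} does carry over to such sub-slices.
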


 {\bf Discussion:} The worst case complexity of $H^B_{tot}$ is
 $\sum_{i=1}^{N_{bands}} H^B_{1,i} = O(N_{bands} \cdot n) =
 O(n^2/M)$; This is an improvement over $O(n \sqrt n)$ ( provided that
 $n \le cM^2$).
 Consider a band that extends from layer $L_i$ to layer $L_j$ and
 contains $k$ points. The algorithm \textsc{Dac-Band} runs in $O(k \lg
 k)$ time, while the iterative algorithm \textsc{VisBands-Iter} scans
 iteratively through all cumulative horizons of the layers in the band
 $H_{1,i}, H_{1,i+1}, ...$ and so on and runs in $O(k + |H_{1, i}| +
 |H_{1, i+1}| + ... +|H_{1,j}|)$.  When the horizons are small,
 \visiter runs in $O(k)$ time and is faster than \visdac. The
 divide-and-conquer merging is not justified unless the horizons are
 large enough to benefit from it.

\subsection{The gridlines model}
\label{sec:gridlines}

The algorithms \visdac and \visiter described in
Section~\ref{sec:visiter} and \ref{sec:visdac} above compute viewsheds
in the layers model.  % Recall that the layers model is a relaxation of
% the gridlines model which considers only the intersections of the LOS
% with the layers when determining visibility.
Let $X_i$ denote the line segments connecting points at distance $i-1$
with points at distance $i$ (Figure~\ref{fig:gridlines}). The set
$X_i$ represents the additional ``obstacles'' in the $i$th layer that
could intersect the LOS in the gridlines model.  With this notation
the horizon of the $i$th layer in the gridlines model is $H(L_i) \cup
H(X_i)$.
The algorithms \visiter and \visdac can be extended to compute
viewsheds in the gridlines model---the only difference is that they
compute the horizon of a layer as $H(L_i) \cup H(X_i)$ instead of
$H(L_i)$.  Since $|X_i| = \Theta(|L_i|)$, the analysis and the bounds
of the algorithms are the same in both models up to a constant factor.
Our algorithms \visiter and \visdac, when using the
gridlines model, compute the same viewshed as
\rrr~\cite{franklin:sdh94}.  \visdac's upper bound of $O(n \lg n +
n^2/M)$ is an improvement over \rrr's bound of $O(n \sqrt n)$,
provided that $n \le c M^2$.

The results on the worst-case complexity of the horizon in the layer
model extend to the gridlines model.  The extension is not entirely
straightforward, because the differences in width in the projection
between non-layer edges are larger than between layer edges.  We defer the proof to the journal version of this paper.

% The main idea of the analysis is to focus on one ``octant'', for
% example the area northeast of the viewpoint. Let $S(l)$ be the
% complexity of the horizon of the edges with southwest endpoints in
% $[0,l] \times [0,l]$, let $U(l,y)$ be the complexity of the horizon of
% the edges with southwest endpoints in $(l/3,l] \times (y/3,y]$, and
% let $R(l,y)$ be the complexity of the horizon of the edges with
% southwest endpoints in $(l/3,l] \times [0,y]$. By
% Lemma~\ref{lem:combine} we have $S(l) \leq 2 R(l,l) + 2 S(\lfloor
% l/3\rfloor)$ and $R(l,y) \leq 2 U(l,y) + 2 R(l,\lfloor y/3\rfloor)$,
% with base cases $S(0,0) = O(1)$ and $R(l,0) = O(l)$. Since the widths
% of the $O(ly)$ non-layer edges contributing to $U(l,y)$ differ by at
% most a constant factor, and the widths of the $O(ly)$ layer edges
% contributing to $U(l,y)$ differ by at most a constant factor, we have,
% by Lemmas \ref{lem:boundedwidth} and~\ref{lem:combine}, $U(l,y) =
% O(ly)$. Thus, by the Master method, the recurrence of $R(l,y)$ solves
% to $R(l,y) = O(ly)$ for any fixed $l$, and in particular, $R(l,l) =
% O(l^2)$. Now, again by the Master method, we get $S(l) = O(l^2)$.

\begin{figure}[t]
  \centering{
  \includegraphics[width=3.5cm]{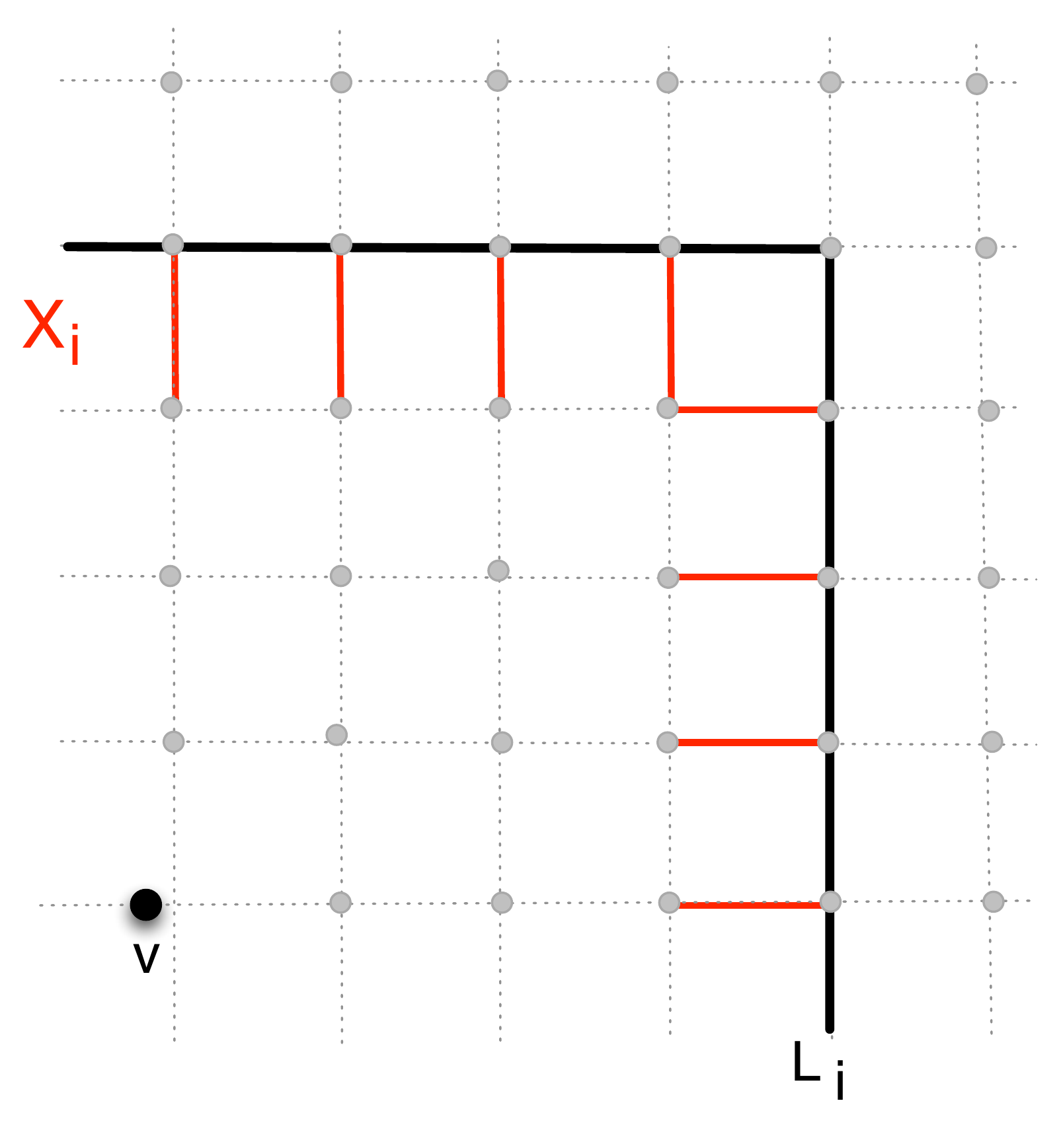}
  }
 \caption{The segments contributing to a layer's horizon in the gridlines model}
 \label{fig:gridlines} 
\end{figure}

\section{Experimental results}
\label{sec:results}

In this section we describe the implementation details and the results
of the experiments with our algorithms.
We implemented the five algorithms described above: \ioradII is the
layered radial sweep algorithm described in
Section~\ref{sec:radial-layers}; \ioradIII is the radial sweep
algorithm from Section~\ref{sec:radial-sectors}; \cosweep is the
centrifugal sweep algorithm from~Section~\ref{sec:conc-sweep};
\visiter and \visdac are the two algorithms described in
Section~\ref{sec:visdac} and \ref{sec:visiter}, respectively.
We use as reference the algorithm from our previous work,
\ioradI\cite{havertoma:visibility-journal}, which is also based on Van
Kreveld's model.

\subsection{Implementation details} 
We start by reviewing the implementation details of our algorithms. 

\ioradI scans the elevation grid, creates 3 events for each cell and
writes them to the event stream; with 12 bytes per event, the event
stream is $36n$ bytes.  It then sorts the event stream by azimuth
angle.  To sweep, it scans the event stream in order while using an
active structure to keep track of the events that intersect the sweep
line. During the sweep, the cells that are visible are written to a
file. In the end, the file is sorted by location and written to the
output grid. \ioradI can function in a recursive mode if it determines
that the active structure does not fit in memory. However in all
datasets the active structure is small ($<$30~MiB) and completely fits
in memory~\cite{havertoma:visibility-journal}.

\ioradII performs 2 passes over the elevation grid. It first maps the
elevation grid file in (virtual) memory and creates the sorted layers
$E_k$.  During this phase, the elevation grid and the sorted arrays
$E_k$ are kept in memory, and \ioradII relies on the virtual memory
manager (VMM) to page in blocks from the elevation grid as necessary
when accessing the points in band $k$.  Note that the accesses to the
elevation grid are not sequential (although they amount to
$O(\scan(n))$ \ios).  To help the VMM we implemented the following
strategy: whenever the current band needs to access an elevation from
the grid, we load an entire square tile of $\Theta(M)$ points in
memory, and keep track of the two most recent tiles.  Once all $E_k$
are computed the elevation grid is freed.  During the second phase
(the sweep) the elevations are accessed sequentially from the bands
$E_k$ and the output grid is kept in (virtual) memory as a bitmap
grid.
% In the final pass \ioradII writes the visible points from memory to
% the output grid on disk.
%[DEPENDING ON THE SIZE... DO BOTH CASES HAPPEN IN PRACTICE, DO WE
%REALLY NEED TO MENTION BOTH?]
%%all radial algorithms keep the output grid as a bitmap in VMM and
%% access it as needed. This clearly does not scale, but it worked on
%% all tests.

\ioradIII also performs 2 (sequential) passes over the elevation
grid. The first pass scans over the elevation grid and places each
point $(i,j, Z_{ij})$ in its sector. Sectors are stored as streams on
disk.  The second pass sorts and sweep the points in one sector at a
time by $\enterevent(i,j)$. The output grid is kept in (virtual) memory as a
bitmap grid. Except for the output grid, \ioradIII does not use the
VMM.
%Depending on the size, the output grid either is kept in memory (as a
%bitmap grid); or, if it does not fit ($> (1-c)\cdot M$) , the visible
%points are recorded during the sweep and sorted by location to build
%the output grid.  In the final pass \ioradIII writes the visible
%points from memory to the output grid on disk. [SAME QUESTION AS
%ABOVE]

\ioradI, \ioradII and \ioradIII use the same data structures: a heap
as a priority queue; a red-black tree for the active
structure~\cite{cormen:introduction}; and the same in-memory sorting
(optimized quicksort).  For the largest terrains the priority queue
and the active structure are at most 30MB and fit in memory.

\cosweep is implemented in one (non-sequential) pass over the
elevation grid.
%, which stores elevation values as 32-bit numbers.
%lt: we did not  get into details for the others
The implementation is recursive, as described in
Section~\ref{sec:conc-sweep}. Theoretically the algorithm could run
completely cache-obliviously with help of the VMM, but this turned out
to be slow. Therefore we implemented a cache-aware version: whenever
the recursion enters a tile $G$ of the largest size that fits in
memory, we load the elevation values for the entire tile into memory;
when the algorithm returns from the recursive call on $G$, the
visibility values for $G$ are written to disk.

\visiter performs two sequential passes over the elevation grid, and
one over the visibility grid.  The first pass reads the elevation grid
and creates the bands, and the second pass loads the bands one by one
in memory and computes the visibility bands. The horizon is maintained
as an array of (azimuth, zenith) pairs, and is accessed sequentially; in all
our experiments it never exceeded 200,000 points.

\visdac implements a divide-and-conquer refinement of \visiter. When
when a band is loaded in memory, \visdac will compute and merge the
layers in the band in a way similar to mergesort, which leads to an
improved upper bound for its time complexity.
\visdac and \visiter share the same code. The user can switch between
the two by turning on or off a flag that triggers the
divide-and-conquer. There is another flag to select the model
(gridlines or layers).

The implementations of all of our algorithms avoid taking square roots
and arctangents, and do not store any angles.  Instead of elevation
angles, they use the signed squared tangents of elevation angles, and
instead of azimuth angles, they use tangents of azimuth angles
relative to the nearest axis direction (north, east, south, or west).

\subsection{Platform} The algorithms are implemented in C and compiled
with gcc/g++ 4.1.2 (with optimization level -O3). All experiments were
run on HP 220 blade servers, with an Intel 2.83 GHz processor and a
5400 rpm SATA hard drive (the HP blade servers come only with this HD
option). The machine is quad-core, but only one CPU was used. 
%The hard disk, which is standard speed for laptop hard-drives, is
%considerably slower than in our previous
%experiments~\cite{havertoma:visibility-journal}.
%%HP BL2x220c G5 high density server blade
%%Intel Xeon Quad Core E5440 CPU at 2.83 Ghz
%%4 Gig PC2-5300 RAM
%%250Gb 5400 RPM 2.5 inch SATA hard drive
%%Fujitsu Model # GJ0250EAGSQ 5.4K SATA drives.
We ran experiments rebooting the machine with 512 MiB and 1 GiB of
RAM. These sizes do not reflect current technology, and  have been
chosen in order to emphasize the scalability of the algorithms to a
volume of data that is much larger than the amount of RAM available.

\subsection{Datasets}
The algorithms were tested on real terrains ranging up to over 7.6
billion elements, see~Table~\ref{tbl:datasets} for some examples.  The
largest datasets are SRTM1 data, 30m resolution, available at
\texttt{http://www2.jpl.nasa.gov/srtm/}.%  For a visual comparison,
% the SRTM datasets are depicted in Figure~\ref{fig:srtm-regions}.  
We selected these datasets because they are easily available, and are
large. In practice it does not make sense to compute a viewshed on a
very large area at low resolution; instead we would want to use a grid
corresponding to a relatively small area at high resolution. We used
SRTM data because it was easily and freely available, and served our
goal to compare the algorithms.

% Real terrains contain points for which the elevation is unknown or
% invalid (e.g. ocean, lakes); these points are marked as undefined
% points. Typically undefined points are ignored when processing the
% terrain, so the amount of undefined data can significantly influence
% the running time.

On all datasets up to 4~GiB (\texttt{Washington}), viewshed timings
were obtained by selecting several viewpoints uniformly on each
terrain and taking the average time.
%this is true up to washington.
%[INCLUDING OR EXCLUDING WASHINGTON?]
%excluding
For the larger datasets we chose a viewpoint approximately in the
middle of the terrain. This gives a good indication of the algorithm's
performance, and we note that for all our algorithms the majority of
the running time is spent handling the bands, and we expect that the
total runing time will vary insignificantly with the position of the
viewpoint.

\begin{table*}[t]
  \centering
  \begin{tabular}{|r|r@{ $\times$}rr|}
    \hline
   Dataset &  \multicolumn{3}{c|}{Size} \\
           & cols & rows & GiB \\
     \hline
     {Cumberlands}      &  8\,704 &  7\,673 &  0.25 \\
     {USA DEM 6}        & 13\,500 & 18\,200 &  0.92 \\
     {USA DEM 2}        & 11\,000 & 25\,500 &  1.04 \\
     {Washington}       & 31\,866 & 33\,454 &  3.97 \\
     {SRTM1-region03}   & 50\,401 & 43\,201 &  8.11 \\
     {SRTM1-region04}   & 82\,801 & 36\,001 & 11.10 \\
     {SRTM1-region06}   & 68\,401 &111\,601 & 28.44 \\
     \hline
  \end{tabular}
 \caption{Sample datasets}
  \label{tbl:datasets}
 \end{table*}

\subsection{Results}
Figure~\ref{fig:512mb} shows the total running times of all our
algorithms with 512~MiB RAM.
%Sample running times of our algorithms are given in Table~\ref{tbl:runtime512}.
%
First, we note that all our algorithms, being based on I/O-efficient
approaches, are scalable to data that is more than sixty times larger
than the memory of the machine. This is in contrast with the
performance of an internal-memory algorithm, which would start
thrashing and could not handle terrains moderately larger than memory,
as showed in~\cite{havertoma:visibility-journal}.

\ioradI, \ioradII and \ioradIII are all based on radial sweeps of the
terrain, and theoretically they all use $\Theta(\sort(n))$~\ios. In
practice, however, both our new algorithms are significantly faster
than \ioradI. On \textrm{Washington} (3.97~GiB), \ioradI runs in 32,364
seconds with 16\% CPU\footnote{The numbers for \ioradI are different
than the ones reported in~\cite{havertoma:visibility-journal}
because the current platform has a slower disk.}, while \ioradII
runs in 13,780 seconds (22\% CPU), and \ioradIII in 3,009 seconds
(89\% CPU). This is a speed-up factor of more than 10.

\begin{figure}[t]
\centering {
\includegraphics[height=1.7in]{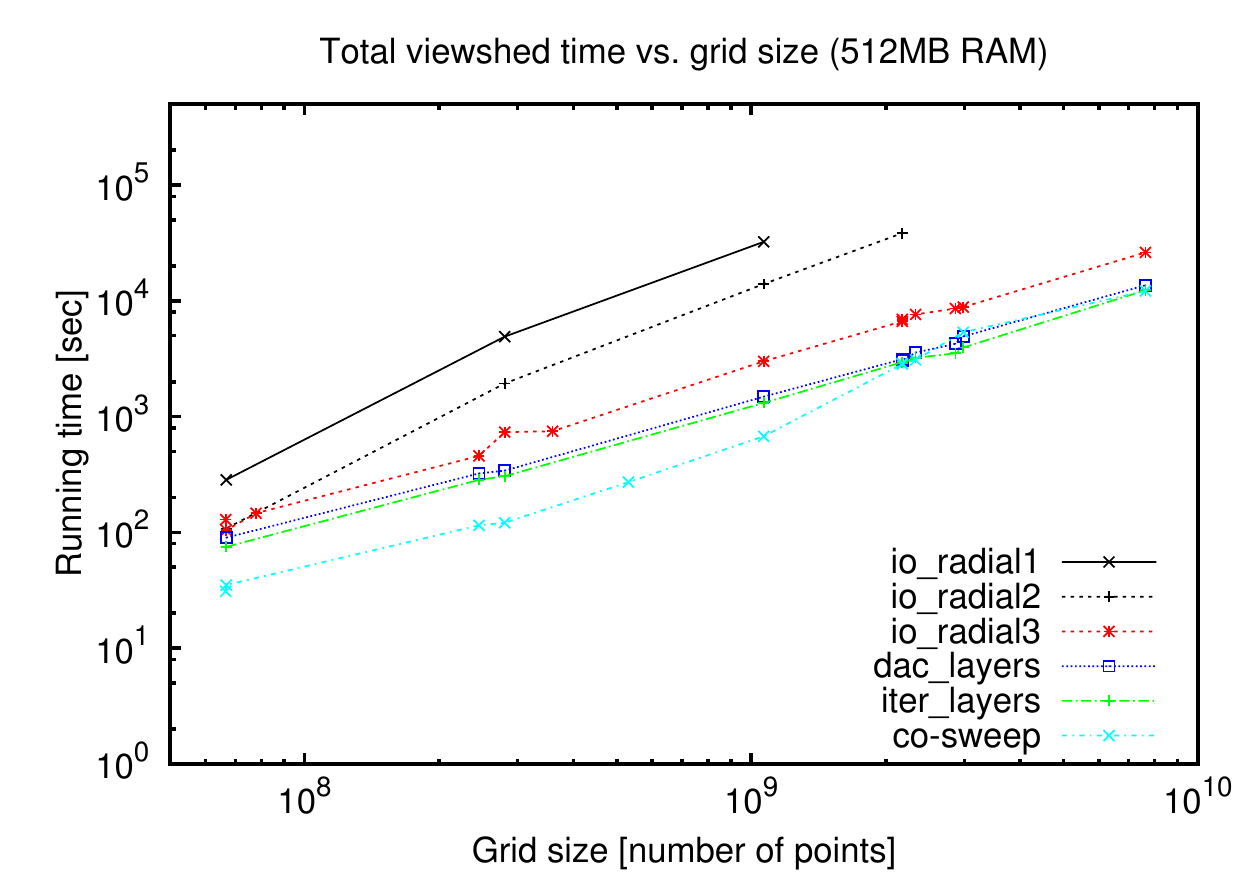}
\includegraphics[height=1.7in]{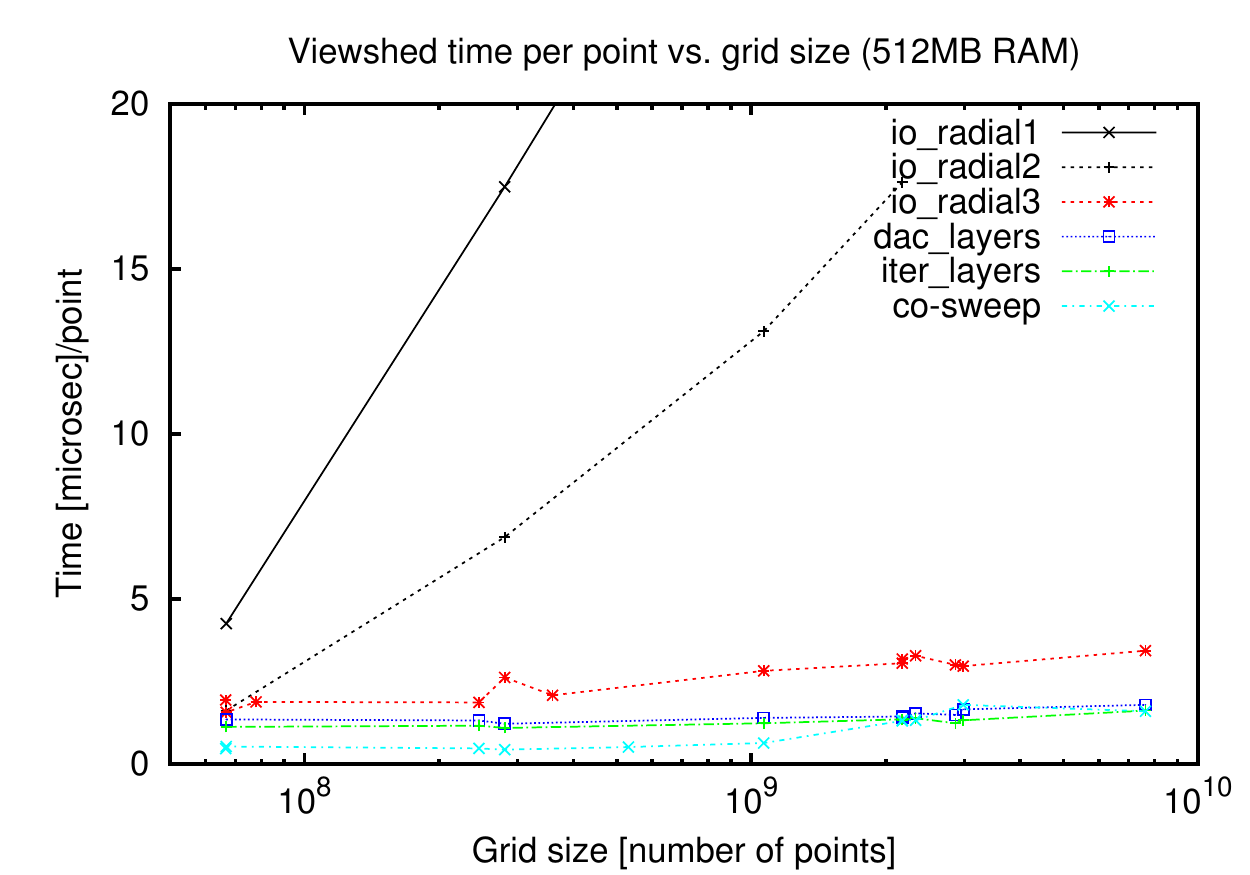}
}
\caption{Running times with 512MB RAM.  (a) Total time (log scale)
 with dataset size (log scale). (b) Total time per point.}
\label{fig:512mb}
\end{figure}

% \begin{figure}[t]
% \centering {
% \includegraphics[height=1.7in]{plots/total-1gb.pdf}
% \includegraphics[height=1.7in]{plots/peritem-1gb.pdf}
% }
% \caption{Running times with 1GB RAM.  (a) Total time (log scale)
%  with dataset size (log scale). (b) Total time per point.}
% \label{fig:1gb}
% \end{figure}

% \begin{figure*}[h]
%   \centering {
%     \includegraphics[height=1.5in]{figures2/region.pdf}
%  %   \includegraphics[width=6cm]{figures2/all_300.pdf}
%     \caption{(a) SRTM1 region definition}

%     % \footnote{From \texttt{ftp://e0srp01u.ecs.nasa.gov/srtm/}}
%     \label{fig:srtm-regions}
%   }
% \end{figure*}

Both \ioradII and \ioradIII perform two passes over the elevation
grid, however \ioradIII is much faster. On a machine with 512~MiB RAM,
on \texttt{SRTM-Region 3} (8.11~GiB), \ioradII takes 37,982 seconds
(16\% CPU), while \ioradIII runs in 6,644 seconds (81\% CPU). Overall,
for \ioradII roughly 20\% of the time is CPU time, while for \ioradIII
the CPU utilization is 80\% or more.
% (refer to Table~\ref{tbl:runtime512}).
%
The difference may be explained by the fact that the first
pass of \ioradII is non-sequential (although it performs $O(n/B)$
\ios), while both passes of \ioradIII are sequential. Another
difference is that \ioradII uses the VMM more than \ioradIII. 
%While
%the sorted lists $E_k$ in \ioradII are created by accessing the elevation 
%grid directly, \ioradIII performs a distribution phase that
%places the data into sectors. The distribution adds a CPU cost (for sorting in the second phase), but it has the advantage of accessing the grid sequentially. 
%Moreover, the
%sectors enable \ioradIII to merge the subsequent sort and sweep in one
%pass---the reason is that \ioradIII processes one sector at a time, so
%it can perform the sorting and sweeping of the sector while the sector
%is in memory.  So in effect, distribution adds a pass though the grid,
%but in the same time saves one; so overall \ioradIII performs 2
%passes.
%\ioradII, on the other hand, needs to accesss all bands at
%the same time during the sweep (even though sequentially), so sorting
%the layers and sweeping the layers are two separate passes.
%
% [THIS DOES NOT MAKE SENSE TO ME: BOTH ALGORITHMS FIRST DO A
% DISTRIBUTION PASS AND THEN A SWEEP PASS. BOTH ALGORITHMS DO THE
% SORTING DURING THE SWEEP PASS. THE DIFFERENCE IS THAT RADIALII SORTS
% BY MERGING DIFFERENT LISTS, WHILE RADIALIII SORTS BY CALLING
% QUICKSORT ON INDIVIDUAL LISTS. AS FAR AS I AM CONCERNED, THIS PAR IS
% ALSO A CANDIDATE FOR "SAVING SPACE"] lt: the difference is in the io
% not in the cpu.  i tried to explain this better above.

Our third algorithm, \cosweep, is the fastest. It finishes a
%% 11.1GiB terrain (\texttt{SRTM-Region 4}) in 5,341 seconds (89
%% minutes), while \ioradII takes 8,834 seconds (147 minutes).  For
%\ioradIII, 79\% of this time is CPU time, while for \cosweep only 11\%
%(see Table~\ref{tbl:runtime512}).
28.4 GiB terrain (\texttt{SRTM-Region 6}) in 12,186 seconds (203
minutes), while \ioradII takes 26,193 seconds (437 minutes).  For
\ioradIII, 61\% of this time is CPU time, while for \cosweep only 18\%.
%%(see Table~\ref{tbl:runtime512}).
%
The reason is that \cosweep does a single pass through the elevation
grid. For any grid point $u$, the highest elevation angle of the
$O(\sqrt n)$ cells that may be on the line of sight from $v$ to $u$ is
retrieved from the horizon array in $O(1)$ time, and the horizon array
is maintained in $O(1)$ time per point on average. As a result,
\cosweep is CPU-light and the bottleneck is loading the blocks of data
into memory. \ioradIII, on the other hand, is more computationally
intensive---the highest elevation angle on the line of sight to $u$
needs to be retrieved from a red-black tree in $O(\log n)$ time, and
that tree is maintained in $O(\log n)$ time per point. In addition,
\ioradIII needs time to sort events.

One of our findings is that relying purely on VMM, even for a
theoretically I/O-efficient data access, is slow.  The analysis of the
I/O-efficiency of both \ioradII and \cosweep is based on the
assumption that the VMM will automatically load tiles of size
$\Theta(M)$ into memory in the optimal way, and that in practice the
performance will not be very different (the theoretical foundations
for this assumption were given by~\cite{prokop:cob}). In practice this
did not work out so well: a fully cache-oblivious, VMM-based
implementation of \cosweep and \ioradII turned out to be slow. By
telling the algorithms explicitly when to load a memory-size block
(and not using the VMM), we obtained significant speedups (without
sacrificing I/O-efficiency for the levels of caching of which the
algorithm remained oblivious, and without sacrificing CPU-efficiency).
%% LT added for journal 
We believe that we could further improve the running time of \cosweep
by having it manage the process of caching in memory the blocks of
data that it needs to access from the grid on disk (write its own
block manager with LRU policy).

Interestingly, our linear interpolation algorithms \visiter and
\visdac are faster than \ioradIII, and slightly slower than \cosweep.
For all datasets we found that $n \leq cM^2/B$ for a sufficiently
small constant $c$, which means that \textsc{BuildBands} and
\textsc{CollectBands} run in a single pass over the data. Thus both
algorithms perform two passes over the elevation grid, and a pass over
the visibility grid to assemble to visibility bands.  The total
running time is split fairly evenly between the three phases. The
actual visibility calculation runs at 100\% CPU and represents $<$
25\% of the running time.  More than 75\% of the total time is spent
reading or writing the bands.

In all our tests we found that the iterative algorithm \visiter is
consistently 10-20\% faster than \visdac.
%---see Figure~\ref{fig:time} 
%(see Table~\ref{tbl:datasets}).
%
To understand this we investigated the size of the horizons computed
by \visdac and \visiter: $H_i$, the horizon of layer $i$; and
$H_{1,i}$, the horizon of the points in the first $i$ layers.  Note
that the number of grid points on level $i$ is $8i$, and the total
number of points on levels $1$ through $i$ is $4i^2 + 4i+1 =
\Theta(i^2)$.  We know that $H_i = O(i) = O(\sqrt n)$, and $H_{1,i} =
O(i^2) = O(n)$ (Theorem~\ref{thm:gridhorizon}).
% and$H_{tot}= \sum_i |H_{1i}| = O(n\sqrt n)$.
We recorded $|H_i|$ and $|H_{1,i}|$ for each layer $i$ during the
execution of \visiter.  Figure~\ref{fig:horizons} shows the results
for two datasets;
%\texttt{Washington} and \texttt{srtm1.region05} datasets; 
the results for the other datasets look similar.
% and are in the appendix.

\begin{figure*}[htb]
  \centering{
  \includegraphics[width=6cm]{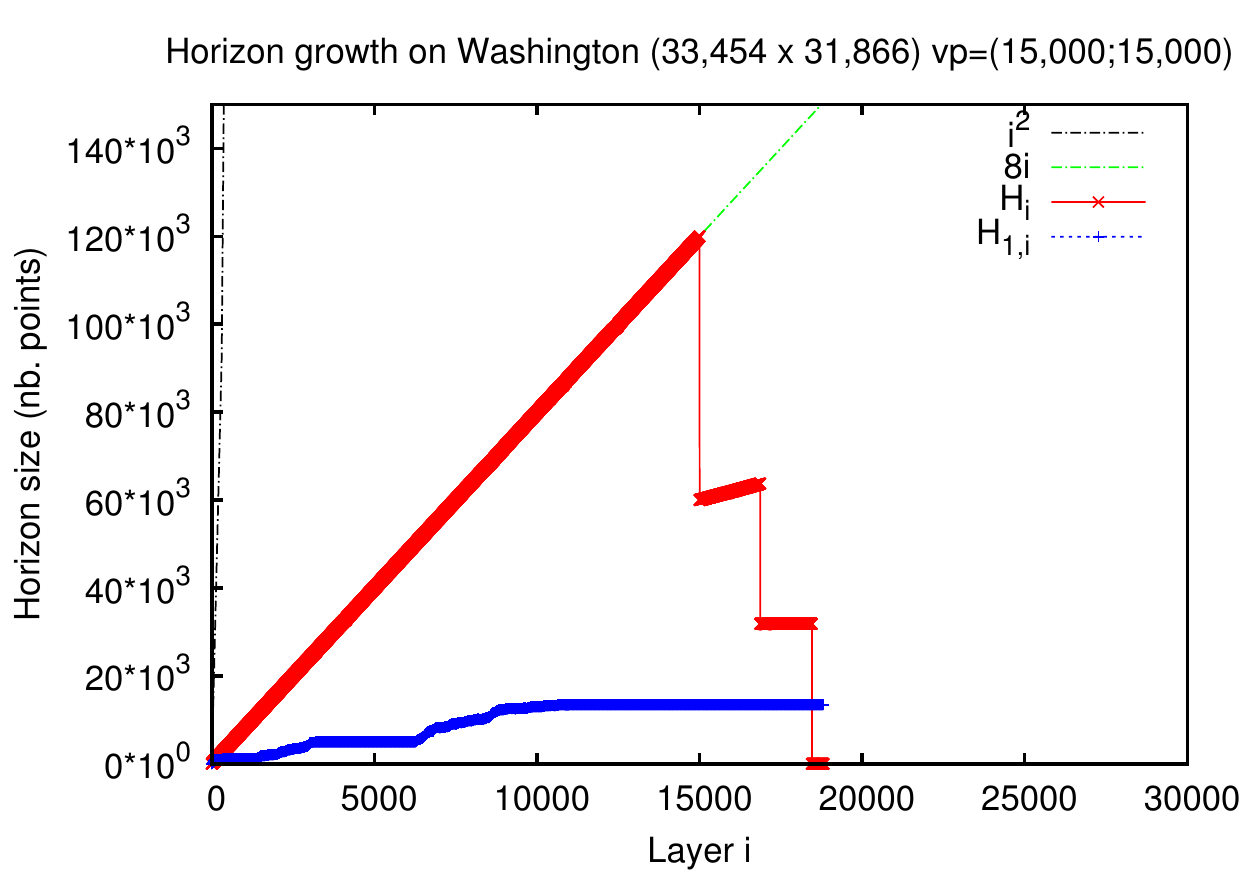}
  \includegraphics[width=6cm]{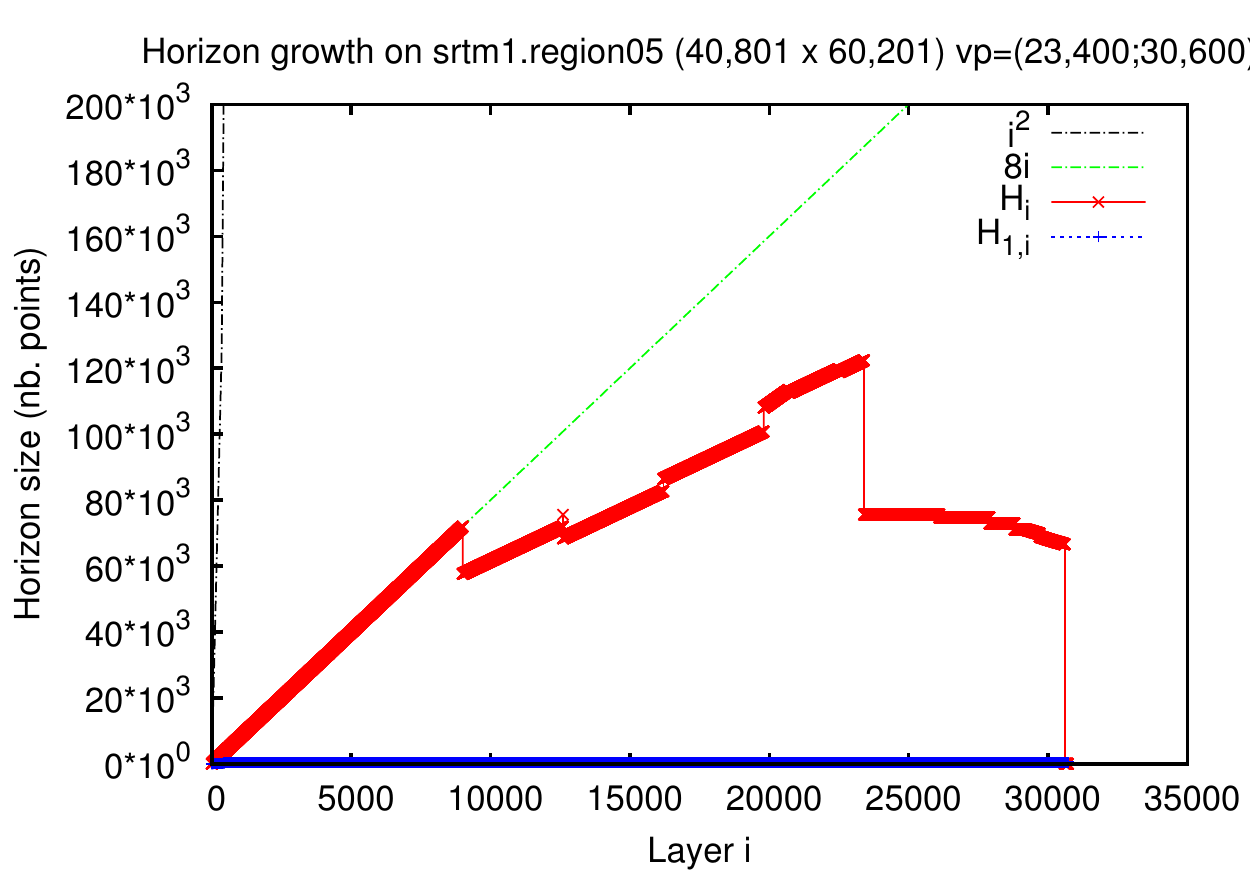}
  }
  \caption{Horizon growth for a viewshed computation on datasets
    \texttt{Washington} and \texttt{srtm1.region05} }
  \label{fig:horizons}
\end{figure*}

We see that $|H_i|$ is very close to its theoretical bound of $8i =
\Theta(i)$. As $i$ gets larger the $i$th layers starts to fit only
partially in the grid, and this causes $|H_i|$ to drop and have steep
variations.  The main finding is that for all datasets $H_{1,i}$ stays
very small, far below its theoretical upper bound of $\Theta(i^2)$.
$H_{1,i}$ grows fast initially
% (unfortunately not visible in the figure), 
and then flattens out; For e.g. on \texttt{Washington} dataset
(approx. 1 billion points), $|H_{1,i}|$ flattens at 13,452 points; and
on \texttt{srtm1.region05} (approx. 2.5 billion points), $|H_{1,i}|$
flattens at 460 points.  All SRTM datasets have the horizon
$H_{1,O(\sqrt n)}$ between 132 and 32,689.

\begin{figure*}[t]
  \centering{
    \includegraphics[width=6cm]{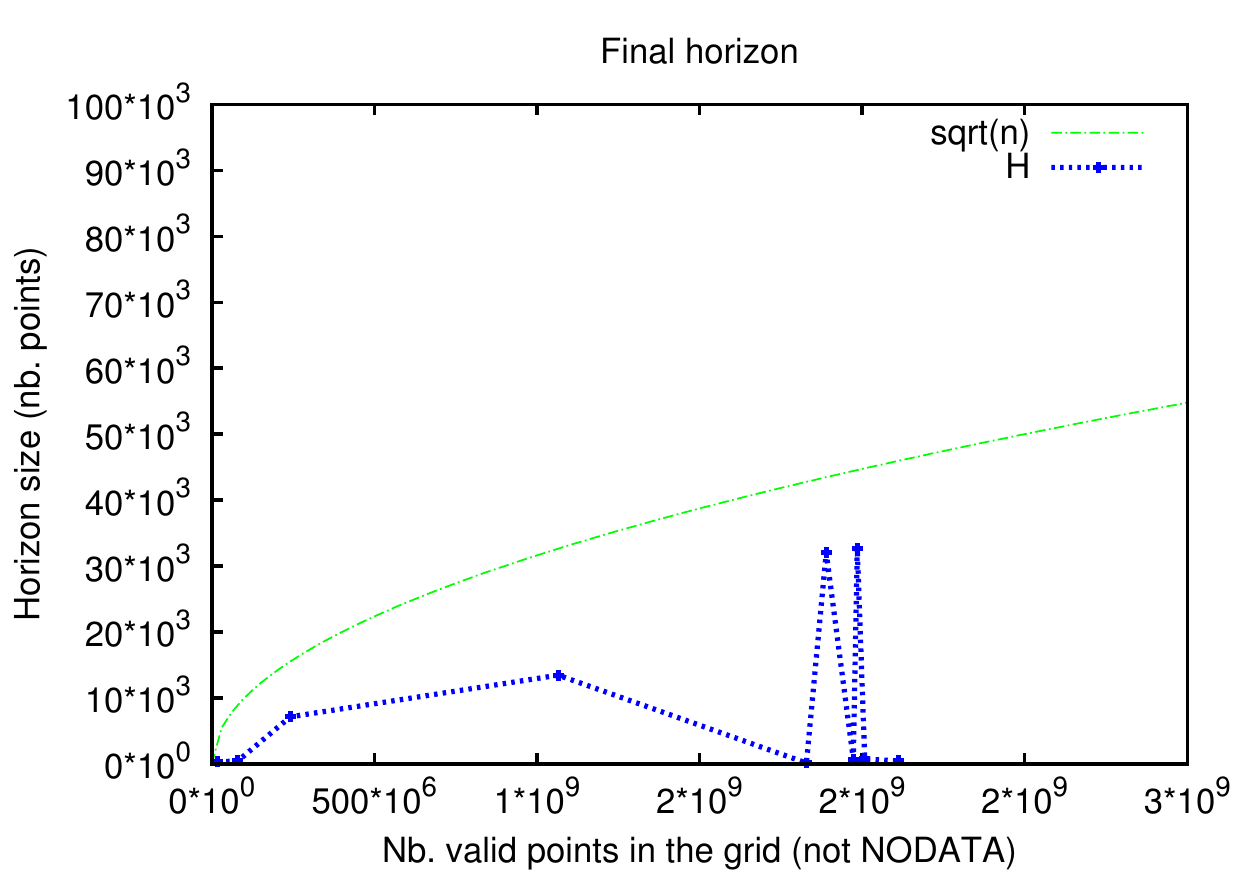}
    \includegraphics[width=6cm]{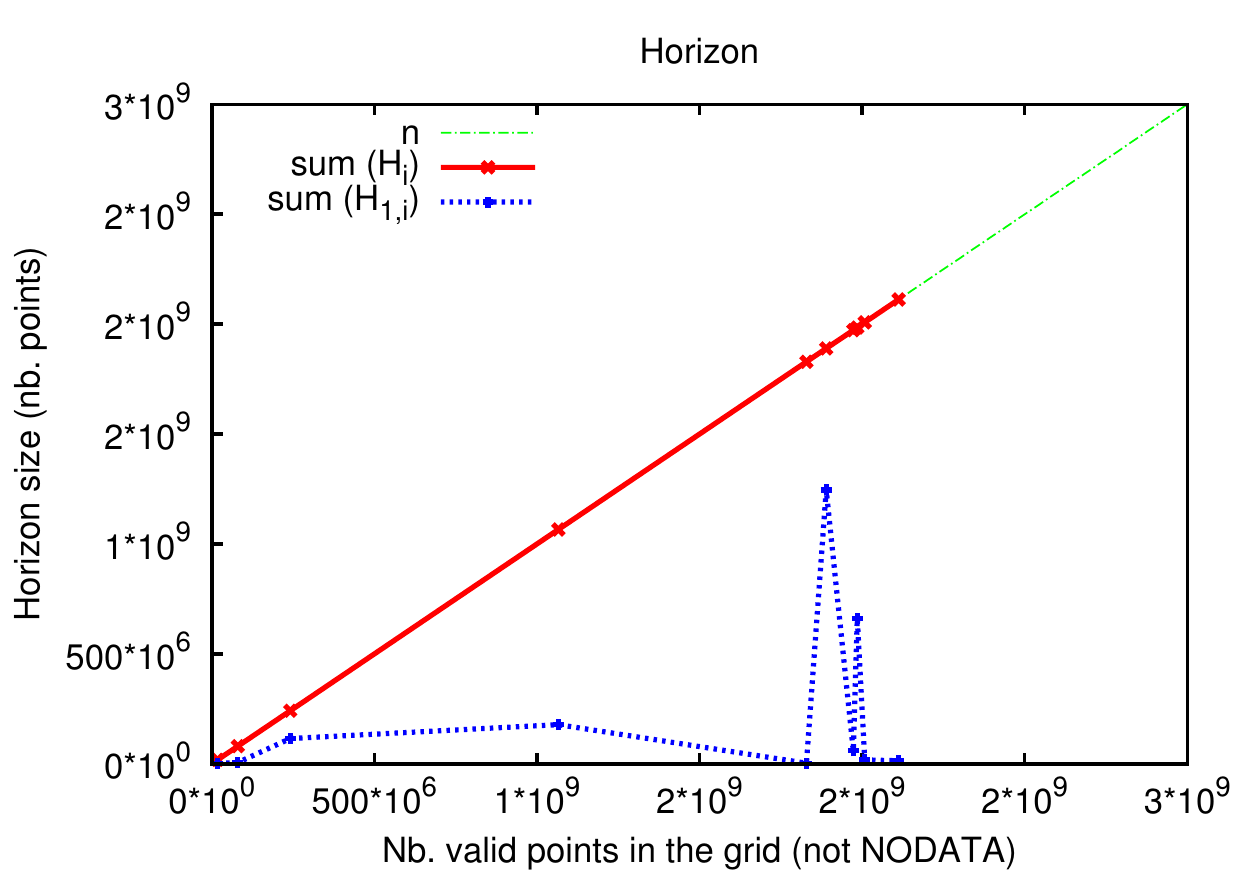}
  }
  \caption{(a) Size of final grid horizon $ |H_{1,O(\sqrt n)}|$ with dataset
    size. (b) $\sum_{i=1}^{O(\sqrt n)} |H_{1,i}|$ and $\sum_{i=1}^{O(\sqrt n)} |H_i|$ with dataset size.}
  \label{fig:finalhorizon}
\end{figure*}

Given a dataset, we refer to the horizon $ H_{1,O(\sqrt n)}$ as its
\emph{final} horizon. Figure~\ref{fig:finalhorizon}(a) shows the size
of the final horizon for each dataset as function of the number of
\emph{valid} points in the grid ---- this excludes the points in the
grid that are labeled as \emph{nodata}, and which are used for e.g. to
label the water/ocean; these points do not affect the size of the
horizon, as chains of \emph{nodata} points are compressed into a
single horizon segment.  We see that the final horizon: (1) has a lot
of variation especially for the larger SRTM datasets, jumping from low
to high values. This is likely due to the position of the viewpoint
and possibly the topology of the terrain; (2) the horizon stays small,
below $\sqrt n$ for all datasets, far below its worst-case bound of
$O(n)$.

Figure~\ref{fig:finalhorizon}(b) shows the cumulative sums,
$\sum_{i=1}^{O(\sqrt n)} |H_{1,i}|$ and $\sum_{i=1}^{O(\sqrt n)} |H_i|$, for
each dataset, as a function of the number of valid points in the grid;
we recorded these sums because they come up in the analysis of
\visiter and can shed light on its performance. In
Figure~\ref{fig:finalhorizon} we see that $\sum_{i=1}^{O(\sqrt n)} |H_i|$ grows indeed
linearly with the number of valid points in the grid. The sum
$\sum_{i=1}^{O(\sqrt n)}
|H_{1,i}| $ has a lot of variability similar with the
final horizon shown in Figure~\ref{fig:finalhorizon} (a), and for all
datasets stays far from its worst-case upper bound  of $O(n \sqrt n)$.
We note that Figure~\ref{fig:horizons} and \ref{fig:finalhorizon} are
based on a single viewpoint, but we expect the results will carry
over.

% For comparison with Magelh\~aes et al., we used \cosweep to
% compute visibility maps of 0.90 billion points' regions (1.68 GiB,
% using 2 bytes per elevation value) within SRTM1 region 4 (5.55 GiB),
% using at most 1 GiB of memory. This took 409 seconds (averaged over
% ten random viewpoints), which amounts to 0.45 $\mu$s per point in the
% visibility map; the CPU utilisation was 44\% on average.  
% % The data set used contains regions of no-data values, but note this
% % does not influence the running time: our implementation of \cosweep
% % does not treat no-data values any different from data values.
% %
% We note that \cosweep  is not affected by the number or the location
% of the grid points that are actually visible. Magalh\~aes et al.\
% needed 1663 seconds for a similar computation.  Since we ran our
% experiments on a slower disk than Magalh\~aes et al., we conclude that
% our algorithm is at least four times faster.
% %
% %
Comparing to the work of Ferreira et al.~\cite{ferreira:vis}:  Their
 algorithm, \textsc{TiledVS}, also consists of three passes: convert
 the grid to Morton order, compute visibility using R2 algorithm, and
 convert the output grid from Morton order to row-major order.
They report on the order of 5,000 seconds with \textsc{TiledVS} for
\texttt{SRTM1.region06}, using a similar platform as ours and
additional optimization like data compression.
%(however it is not clear from the paper if this time includes all
%three passes). 
Assuming that this time includes all three passes, and modulo
variations in setup, it is approx. 2.5 times faster than \visiter.  
%A further advantage of \textsc{TiledVS} is that its first step can be
%viewed as a preprocessing step common to all viewpoints.
%%LT: already said itt once
% Comparing algorithms solely based on running times is not completely
% relevant, as the algorithms use different models which introduce
% approximation in a different way and compute different outputs.
We note that \textsc{TiledVS} uses a different model and we believe
our algorithms and their analysis  are of independent interest.

\section{Conclusion}\label{sec:discussion}
% In Van Kreveld's model, the input points $q$ (and only those input
% points) whose cells $D(q)$ intersect the line segment from $(v_x,v_y)$
% to $(u_x,u_y)$ are potential obstacles for the visibility of $u =
% (u_x,u_y,u_z)$ from $v = (v_x,v_y,v_z)$. In \textsc{TiledVS}, the
% visibility of the terrain at grid point $(u_x,u_y)$ is estimated by
% considering obstacles along line segments from $(v_x,v_y)$ to,
% effectively, one of more points of $D(u)$ (not necessarily to
% $(u_x,u_y)$ itself), and not all grid points $q$ whose cells $D(q)$
% intersect those line segments are considered as obstacles (more
% details below in Section~\ref{sec:discussion}).

In this paper we described new I/O-efficient algorithms for computing
the visibility map of a point on a grid terrain using several
different models.  The algorithms are provably efficient in terms of
the asymptotic growth behaviour of the number of I/Os, but at the same
time are designed to exploit that the terrain model is a grid.  This
leads to much improved running times compared to our previous
work~\cite{havertoma:visibility-journal}. On the largest terrains,
using as little as 512~MiB of memory, our algorithms perform at most
two passes through the input data, and one pass through the output
grid. We were able to compute viewsheds on a terrain of 28.4~GiB in
203 minutes with a laptop-speed hard-drive. The algorithms that
compute (what is considered to be) the exact viewshed have inferior
worst-case upper bounds, but in practice are faster than the
radial-sweep algorithms due to the small size of horizons. We conclude
that horizon-based algorithms emerge as a fast approach for computing
viewsheds.

As avenues for future reesarch we mention the problem of proving a
sub-linear bound for the expected complexity of a horizon, and
obtaining an output-sensitive viewshed algorithm.

\section{Acknowledgments}
The authors thank DJ Merill for setting up and administering the
platform used for the experiments. And former Bowdoin students Jeremy
Fishman and Bob PoFang Wei for working on the earlier versions of this
paper.

\bibliographystyle{acmtrans}
\bibliography{viewshed}

\end{document}